\numberwithin{equation}{section}
\newcommand{\MAT}{\left[ \begin{array}}  
	\newcommand{\mat}{\end{array} \right]}
\newtheorem{theorem}{Theorem}[section]
\newtheorem{corollary}[theorem]{Corollary}
\newtheorem{lemma}[theorem]{Lemma}
\theoremstyle{definition}
\newtheorem{definition}[theorem]{Definition}
\newtheorem*{definition*}{Definition}
\newtheorem*{proposition*}{Proposition}
\newtheorem*{theorem*}{Theorem}
\newtheorem*{corollary*}{Corollary}
\newtheorem*{example*}{Example}
\newtheorem*{problem*}{Problem}
\newtheorem{runex}[theorem]{Example}
\theoremstyle{remark}
\newtheorem{remark}[theorem]{Remark}
\def\C{{\mathbbm C}}
\def\R{{\mathbbm R}}
\def\A{A}
\def\B{B}
\def\Atilde{\widetilde{A}}
\def\D{D}
\def\lambdatilde{\tilde{\lambda}}
\def\lambdahat{\widehat{\lambda}}
\def\M{M}
\def\m{m}
\def\N{N}
\def\Q{Q}
\def\r{r}
\def\R{R}
\def\Sigmatilde{\widetilde{\Sigma}}
\def\u{{\bf u}}
\def\uhat{{\bf \widehat{\u}}}
\def\utilde{{\bf \widetilde{\u}}}
\def\U{U}
\def\Uhat{\widehat{\U}}
\def\Utilde{\widetilde{\U}}
\def\v{{\bf v}}
\def\V{V}
\def\W{W}
\def\R{R}
\newcommand{\interior}[1]{%
	{\kern0pt#1}^{\mathrm{o}}%
}
\title{Fast One-Pass Sparse Approximation of the Top Eigenvectors of Huge Approximately Low-Rank Matrices? Yes, $MAM^*$!}
\author{
  Edem Boahen$^1$, Simone Brugiapaglia$^2$, Hung-Hsu Chou$^3$,\\ Mark Iwen$^{4}$, Felix Krahmer$^{567}$\\
   \\
    { \small $^1$Department of Mathematics, Michigan State University, \textsc{boahened@msu.edu}.}\\
{ \small $^2$Department of Mathematics and Statistics, Concordia University,} \\{\small \textsc{simone.brugiapaglia@concordia.ca}.}\\
{ \small $^3$Department of Mathematics, University of Pittsburgh, \textsc{edc93@pitt.edu}.}\\
 {\small $^4$Department of Mathematics, and Department of Computational Mathematics, Science}\\ {\small and Engineering (CMSE), Michigan State University, \textsc{iwenmark@msu.edu}.}\\
{\small $^5$Department of Electrical Engineering and Information Technology,}\\ {\small Technical University of Darmstadt, \textsc{felix.krahmer@tu-darmstadt.de}.}\\
{\small $^6$Department of Mathematics, \ Technical University of Munich }\\
{\small $^7$Munich Center for Machine Learning (MCML)}
}
\begin{document}

\maketitle

\begin{abstract}
Motivated by applications such as sparse PCA, in this paper we present provably-accurate one-pass
algorithms for the sparse approximation of the top eigenvectors of extremely massive matrices based on a single compact linear sketch. The resulting compressive-sensing-based approaches can approximate the leading eigenvectors of huge approximately low-rank matrices that are too large to store in memory based on a single pass over its entries while utilizing a total memory footprint on the order of the much smaller desired sparse eigenvector approximations. Finally, the compressive
sensing recovery algorithm itself (which takes the gathered compressive matrix measurements as input, and then
outputs sparse approximations of its top eigenvectors) can also be formulated to run in a time which principally depends on the size of
the sought sparse approximations, making its runtime sublinear in the size of the large matrix whose eigenvectors
one aims to approximate.  Preliminary experiments on huge matrices having $\sim 10^{16}$ entries illustrate the developed theory and demonstrate the practical potential of the proposed approach.

\end{abstract}

\section{Introduction}

Computing eigenvalues and eigenvectors of linear operators is fundamental to data science and computational mathematics. Beyond the ubiquitous example of Principal Component Analysis (PCA), its applications range from spectral clustering to spectral methods for Partial Differential Equations (PDEs). Traditional methods, such as the Power Iteration and the Lanczos Algorithm, often struggle with high computational costs and slow convergence rates, particularly when dealing with extremely large and nearly low-rank matrices \cite{golub1996matrix}. Thanks to advances in randomized numerical linear algebra \cite{martinsson2020randomized}, sketching and streaming approaches have recently been developed to address these challenges \cite{clarkson2009numerical,muthukrishnan2005data,woodruff2014sketching}. Motivated by this recent progress, we propose the $MAM^*$ method, which combines classical linear algebra results \cite{horn1991topics}, existing sketching approaches for approximating eigenvalues \cite{andoni2013eigenvalues,swartworth2023optimal}, and well-established, fast compressive sensing reconstruction techniques \cite{needell2009cosamp, BaileyIwenSpencer2012} to efficiently compute approximate leading eigenvectors of extremely massive matrices $A$. Our method is useful in a streaming setting where accessing $A \in \mathbbm{C}^{N \times N}$ from memory more than once is prohibitive due to its large size. In our case, we only need to access it once to compute a reduced sketch matrix $M A M^* \in \mathbbm{R}^{m \times m}$, where $m\ll N$, and $M \in \mathbbm{C}^{m \times N}$ is highly structured/memory efficient.

Before introducing the method and outlining the main contributions of the paper, we will briefly review recent progress in related research areas and discuss applications that motivate the development of the methods presented herein.

\subsection{Literature Review and Motivation}

The first research area strictly related to $MAM^*$-type sketching is the design of streaming algorithms for the fast approximation of eigenvalues and eigenvectors of large matrices. The early contribution \cite{andoni2013eigenvalues}, following up on an open problem formulated in \cite[\S 7.10.1]{muthukrishnan2005data}, proposed a method to estimate the top $k$ eigenvalues in the streaming model using $\mathcal{O}(k^2)$ space.  A significant contribution to this field is the $GAG^*$ method \cite{andoni2013eigenvalues, swartworth2023optimal}, able to approximate the eigenvalues of a symmetric matrix $A$ with high probability and up to an additive error $\epsilon \|A\|_F$, where $M=G$ is a random Gaussian matrix with $k = \mathcal{O}(1/\epsilon^2)$ rows.  A similar idea was also used in \cite{needell2022testing} to develop a fast sketching method for testing whether a symmetric matrix is positive semidefinite. 

Turning to eigenvectors, recent work has also focused on streaming approaches for the fast approximation of the top eigenvector of  large matrices.  The method proposed in \cite{kacham2024approximating} produces an approximation to the top eigenvector of $A^\top A$ assuming that the rows of a matrix $A \in \mathbbm{R}^{N \times n}$ are given in a streaming fashion using $\mathcal{O}(h \cdot n \cdot \text{polylog}(n))$ bits of memory, where $h$ is the number of heavy rows of the matrix (i.e., having Euclidean norm at least $\|A\|_F/\sqrt{n \cdot \text{polylog}(n)}$) and assuming that the ratio of the first two eigenvalues of $A^\top A$ is $\Omega(1)$.  It improves on previous work \cite{price2024spectral} under the extra assumption of uniform random ordering of the streamed rows.  Herein we show, alternatively, that a $MAM^*$-type sketch can provide accurate sparse approximations of multiple eigenvectors by simply relying on new eigenvector perturbation results combined with existing compressive sensing theory.

Other recent related work on the sketched Rayleigh–Ritz method \cite{nakatsukasa2024fast} shows that sketching methods can substantially reduce the computational cost of the Rayleigh-Ritz method \cite{saad2011numerical} for the fast approximation of eigenvalues of a matrix with respect to a given reduced basis.  For an $N \times N$ matrix and a basis of size $d \ll N$, sketching can reduce the cost of Rayleigh–Ritz from $\mathcal{O}(Nd^2)$ to $\mathcal{O}(d^3 + N d \log(d))$ operations.  We note that the reduced basis needed by the method in \cite{nakatsukasa2024fast} could be provided by the sparse approximate eigenvectors computed by, e.g., the $MAM^*$ method proposed herein when $A$ is extremely large. 

Other related sketching methods include older randomized methods for computing the Singular Value Decomposition (SVD) of a matrix. The randomized method in \cite{halko2011finding} is able to, e.g., compute the $k$ dominant components of the SVD of an $m \times n$ dense matrix in $\mathcal{O}(mn \log(k))$ operations, improving on the $\mathcal{O}(mn k)$ cost of classical algorithms. In addition, for matrices that are too large to store in memory, the randomized SVD method in \cite{halko2011finding} requires a constant number of passes over the data as opposed to $\mathcal{O}(k)$ passes of classical algorithms.  For comprehensive treatment of randomized SVD and other randomized numerical linear algebra methods we refer the reader to the survey \cite{martinsson2020randomized}.  
This paper builds upon the core ideas developed in these papers
to strengthen the method for matrices $A$ with top eigenvectors that are sparse or compressible. This setup allows for a method with a number of advantages. Namely, 
the methods developed herein  are strictly one-pass and utilize different (and more compact) $MAM^*$-type sketches.

The sparsity assumption for the top eigenvector underlying these improvement is motivated by the observation that a smaller number of features corresponds to better interpretability of the model.
This creates a close conceptual tie
to \emph{sparse PCA}, the problem of representing a matrix as a sum of sparse rank-$1$-matrices. 
After its introduction in \cite{zou2006sparse}, in little more than a decade sparse PCA has become an essential asset in the modern data scientist's toolkit thanks to its ability to enhance the interpretability of the standard PCA paradigm. For further details on sparse PCA and historical remarks, we refer the interested reader to \cite[Chapter~8]{wainwright2019high}.  
One of the earliest sparse PCA techniques was based on the \emph{truncated power method}, proposed in \cite{yuan2013truncated}. The idea is to alternate the classical power method iteration with a truncation step, where only the $k$-largest absolute entries of the approximate eigenvector are kept and rescaled each iteration so as to achieve normalization. The truncated power method was later extended to the truncated Rayleigh flow method (also known as ``rifle'') in order to handle generalized eigenvalue problems in \cite{tan2018sparse}. 
For the purposes of this paper we simply remark that the proposed $MAM^*$-method developed herein can also be considered as the first known one-pass algorithm for sparse PCA of extremely massive matrices $A$.

\subsection{The $MAM^*$ Method}

To get an idea for how the proposed $MAM^*$ approach works, consider a very simple rank one $N \times N$ Hermitian matrix $A = {\bf u}{\bf u}^*$ where ${\bf u} \in \mathbbm{C}^N$ is compressible (i.e., approximately $s$-sparse).  Letting $E_{j,k} \in \mathbbm{C}^{N \times N}$ be the matrix of all zeros with a single $1$ in it's $(j,k)^{\rm th}$-entry, and $M \in \mathbbm{C}^{m \times N}$ be a compressive sensing matrix having an associated fast reconstruction algorithm (see, e.g.,  \cite{iwen2014}), we can compute $M A M^* = \sum_{j,k} a_{j,k} M E_{j,k} M^* = M{\bf u} (M {\bf u})^* \in \mathbbm{C}^{m \times m}$ by streaming over the entries of $A$ just once.  We can then simply $(i)$ compute the top eigenvector, $M{\bf u}$, of our small sketched $m \times m$ matrix $M A M^*$, and then $(ii)$ use it's eigenvector in our fast compressive sensing algorithm to rapidly approximate ${\bf u}$.  Of course, when $A$ isn't such a simple rank one matrix the situation becomes more complicated.  Nonetheless, as we shall see it is possible to adapt this basic idea to still work for a large class of nearly rank $r \ll N$ matrices.  Extension to the Singular Value Decomposition (SVD) of non-Hermitian matrices is also possible.

To see why this approach should still work for higher-rank, e.g., Hermitian and Positive SemiDefinite (PSD) matrices $A$ we can consider $M A M^* \in \mathbbm{C}^{m \times m}$, $$M A M^* = \sum_{j=1}^m \tilde{\lambda}_j \cdot { \tilde{\bf u}}_j {\tilde{ \bf u}}_j^*,$$ where $\tilde{\lambda}_1 \geq \tilde{\lambda}_2 \geq \dots \geq \tilde{\lambda}_m \geq 0$, together with an eigendecomposition of $A \in \mathbbm{C}^{N \times N}$, $$A = \sum_{j=1}^N \lambda_j \cdot {\bf u}_j {\bf u}_j^*,$$ where $\lambda_1 \geq \lambda_2 \geq \dots \geq \lambda_N \geq 0$.  If $A$ is nearly rank $r \leq m \ll N$ then we can see that 
\begin{equation}
\sum_{j=1}^m \tilde{\lambda}_j \cdot { \tilde{\bf u}}_j {\tilde{ \bf u}}_j^* ~=~ M A M^* ~=~ \sum_{j=1}^N \lambda_j \cdot (M {\bf u}_j) (M {\bf u}_j)^* ~\approx~ \sum_{j=1}^m \lambda_j \cdot (M {\bf u}_j) (M {\bf u}_j)^*.
\label{equ:BasicCSemasIdea}
\end{equation}

Note now that if each element of the orthonormal subset of eigenvectors of $A$, $\{ {\bf u}_j \}_{j 
\in [m]} \subset \mathbbm{C}^N$, in \eqref{equ:BasicCSemasIdea} is also sparse/compressible, and if $M \in \mathbbm{C}^{m \times N}$ has the Restricted Isometry Property (RIP) \cite{FoucartRauhut2013}, then the set of ``\emph{almost-eigen}"vectors of $MAM^*$, $\{ M{\bf u}_j \}_{j 
\in [m]} \subset \mathbbm{C}^m$, should be also be \emph{nearly orthonormal}.  As a result, in this case it's reasonable to expect that something akin to a Davis–Kahan $\sin\theta$ theorem \cite{Yusinatheta2015} should guarantee that ${ \tilde{\bf u}}_j \approx M{\bf u}_j$ for all $j \in [m]$ as long as the associate eigenvalue gaps $|\lambda_{j+1} - \lambda_j|$ are reasonably large for all $j \in [m]$.  We prove herein that this is indeed the case by developing theory bounding, among other quantities, the \emph{measurement error} $\min_{\phi \in [0,2\pi)} \|  \utilde_j - M ( \mathbbm{e}^{\mathbbm{i} \phi} \u_j) \|_2$.  Finally, if $M$ has, e.g., an RIP property, and if this measurement error is small, then it should be possible to run a fast compressive sensing algorithm with $\utilde_j \in \mathbbm{C}^m$ as input in order to quickly approximate $\u_j \in \mathbbm{C}^N$ up to a global phase.  This is the general eigenvector approximation strategy proposed and analyzed herein.  See Algorithm~\ref{Alg:StreamOverA} for pseudo-code outlining two strategies for computing the $MAM^*$ sketch in one low-memory pass over the entries of $A$, and 
Algorithm~\ref{Alg:ApproxfromMAM} for pseudo-code outlining the approximation of the eigenvectors of $A$ from a previously computed $MAM^*$ sketch.

\begin{algorithm}[htbp]
    \caption{Computing $MAM^* \in \mathbbm{C}^{m \times m}$ in One Pass} \label{Alg:StreamOverA}

    \DontPrintSemicolon
    \SetAlgoVlined
    \SetKwFor{For}{for}{}{}
    \SetKwFor{While}{while}{}{}

    \KwIn{$(i)$ Pointer to (an algorithm for generating) $A \in \mathbbm{C}^{N \times N}$ (e.g., by streaming though its entries $a_{k,j} \in \mathbbm{C}$, or columns ${\bf a}_{:,j} \in \mathbbm{C}^N$), and a $(ii)$ pointer to an algorithm for either computing $M{\bf v} \in \mathbbm{C}^{m}$ for any $\mathbf{v} \in \mathbbm{C}^N$, or for generating the columns ${\bf m}_{:,j} \in \mathbbm{C}^N$ of $M$ for all $j \in [N]$.}
    
    \KwOut{$MAM^* \in \mathbbm{C}^{m \times m}$}

\vspace{.2in}

     \tcc{Computing $MAM^*$ in One Pass Over the Entries of $A \in \mathbbm{C}^{N \times N}$}
     Initialize $Q \in \mathbbm{C}^{m \times m} \leftarrow $ zero matrix\;
    \For{$j,k \in [N]$}{
        $Q \leftarrow Q + a_{j,k} {\bf m}_{:,j}{\bf m}_{:,k}^*$\;  }
    Return $Q$\;

\vspace{.2in}

     \tcc{Computing $MAM^*$ in One Pass Over the Columns of $A \in \mathbbm{C}^{N \times N}$}
          Initialize $R \in \mathbbm{C}^{m \times m} \leftarrow $ zero matrix\;
    \For{$j\in [N]$}{
        \tcp{Compute $MA$ column by column (possibly also in one pass)}
        ${\bf q} \leftarrow M {\bf a}_{:,j}$\; 
    \tcp{Compute Rank One $MAM^*$ Update}
    $R \leftarrow R + {\bf q}{\bf m}_{:,j}^*$\;}
    Return $R$\;
\end{algorithm}

\begin{algorithm}[htbp]
    \caption{Approximating $\{ \u_j \}_{j \in [\ell]}$ of Hermitian $A \in \mathbbm{C}^{N \times N}$ via $MAM^* \in \mathbbm{C}^{m \times m}$} \label{Alg:ApproxfromMAM}

    \DontPrintSemicolon
    \SetAlgoVlined
    \SetKwFor{For}{for}{}{}
    \SetKwFor{While}{while}{}{}

    \KwIn{$MAM^* \in \mathbbm{C}^{m \times m}$, $\ell \in [m]$, pointer to compressive sensing algorithm $\mathcal{A}: \mathbbm{C}^m \rightarrow \mathbbm{C}^N$}
    
    \KwOut{$s$-sparse $\u_j' \approx$ eigenvector $\u_j$ of $A$, $\forall j \in [\ell]$}

\vspace{.2in}

    \tcc{Compute the Top-$\ell$ Eigenvectors of $MAM^*$}
    $\{ \utilde_j \}_{j \in [\ell]} \leftarrow$ top-$\ell$ eigenvectors of $MAM^* \in \mathbbm{C}^{m \times m}$\; 

    \vspace{.1 in}
    \tcc{Use eigenvectors of $MAM^*$ as compressive sensing measurements}
    \For{$j \in [\ell]$}{
    $\u_j' \leftarrow \mathcal{A}\left( \utilde_j \right)$\; }

    Return $\{\u_j'\}_{j \in [\ell]}$\;

\end{algorithm}

Fixing notation, let $\A\in\mathbbm{C}^{\N\times \N}$ be a large Hermitian and PSD matrix and $\M\in\mathbbm{C}^{\m\times \N}$ be a measurement matrix with $\m < \N$.  Set $\Atilde=\M\A\M^*$ to be the sketch matrix of (much) smaller size. Our general objective is to use eigenvectors of $\Atilde$ to gain information about the eigenvectors of $A$. Formally speaking, if
\begin{align} \label{def:SVD_A}
    &\A =U\Sigma U^*= \sum_{j=1}^\N \lambda_j\cdot\u_j\u_j^*, \\ \label{def:SVD_A_Atilde} 
    &\Atilde =\Utilde\Sigmatilde \Utilde^*=\sum_{j=1}^\m \lambdatilde_j\cdot\utilde_j\utilde_j^*
\end{align}
are eigendecompositions of $\A$ and $\Atilde$, respectively, then we aim to recover $\u_j$ based on $\utilde_j$. Here we denote the nonnegative eigenvalues by $\lambda_j \geq 0$ and $\lambdatilde_j  \geq 0$.  

Note that by construction we can also express $\Atilde$ as
\begin{align*}
    \Atilde
    =\M\A\M^*
    &= \sum_{j=1}^N \lambda_j\cdot(\M\u_j)(\M \u_j)^*\\
    &= \sum_{j=1}^N \lambda_j\|\M\u_j\|^2_2\cdot\frac{\M\u_j}{\|M\u_j\|_2}\frac{(\M\u_j)^*}{\|\M\u_j\|_2}\\
    &= \sum_{j=1}^N \lambdahat_j\cdot\uhat_j\uhat_j^*,
\end{align*}
where
\begin{align}
&\lambdahat_j := \lambda_j\|\M\u_j\|_2^2 \quad {\rm and} \quad \uhat_j := \frac{\M\u_j}{\|M\u_j\|_2}. \label{def:lambda}
\end{align}
Again, the the first crucial observation here is that a generalized eigenvector perturbation result proven herein guarantees that \eqref{equ:BasicCSemasIdea} holding also implies that ${ \tilde{\bf u}}_j \approx M{\bf u}_j$ holds in a wide variety of settings.  The second crucial observation is that a wide variety of highly structured measurement matrices $M$ exist which not only guarantee that ${ \tilde{\bf u}}_j \approx M{\bf u}_j$ holds, but that also allow fast (e.g., sublinear-time) compressive sensing algorithms to recover ${\bf u}_j$ when given ${ \tilde{\bf u}}_j \approx M{\bf u}_j$ as input.  Examples of the type of results this theoretical framework allows one to prove follow.


\subsection{Main Results and Contributions}
Before stating example results we fix some notation used
throughout.  Given $A \in \mathbbm{C}^{N \times N}$ with SVD
$A = \sum_{j=1}^N \sigma_j(A) \u_j \v_j^*$, we write
$A_{{\setminus}r} := \sum_{j=r+1}^N \sigma_j(A) \u_j \v_j^*$ for the rank $N-r$ matrix obtained by excluding the top-$r$ singular vectors of $A$, and denote the
matrix nuclear norm of $A$ by $\|A\|_* := \sum_{j=1}^N \sigma_j(A)$.  For a vector
$\u \in \mathbbm{C}^N$ and $s \in [N]$, $(\u)_s$ denotes a best
$s$-sparse approximation to $\u$, i.e.,
$(\u)_s := \operatorname{arg\,min}_{\{ \v \in \mathbbm{C}^N \,|\, \| \v\|_0 \le s \} }
\| \u - \v \|_2$, where $\| \v\|_0 := |\{ j ~|~ v_j \neq 0 \} |$.\footnote{Note that the best $s$-term approximation $(\u)_s$ to $\u$ may not always be unique.  In such cases one may simply select the best $s$-term approximation whose support $\subset [N]$ is first in lexicographical order.}

The following example theorem is proven by combining the eigenvector embedding theory developed in Section~\ref{sec:EigenvectorMeasApprox} with the randomized numerical linear algebra and compressive sensing techniques discussed in sections~\ref{sec:Preliminaries} and \ref{sec:CompressiveSense}. It is proven in Section~\ref{sec:PROOFOFMAINRESULT_LINEAR}.

\begin{theorem} \label{THM:MAINRESULT_LINEARcosamp}
Let $q \in (0,1/3)$, $c \in [1,\infty)$, $\ell, r \in [N]$, and $\epsilon \in (0,1)$ be such that $\epsilon < \min \left\{ \frac{1}{20}, \frac{1}{4}\left(\frac{1-3q}{1+q}\right) \right\}$ and $2 \leq \ell \leq r$. Suppose that $\A\in \mathbbm{C}^{\N\times\N}$ is Hermitian and PSD with eigenvalues $\lambda_1 \geq \lambda_2 \geq \dots \geq \lambda_N \geq 0$ satisfying
\begin{enumerate}
    \item $\lambda_j = c q^j$ for all $j \in [\ell] \subseteq [r]$, and
    \item $\|\A_{\backslash \r}\|_* \leq \epsilon \lambda_\ell$.
\end{enumerate}
Choose $s\in [\N]$, $p, \eta \in (0,1)$, and form a random matrix $\M \in\mathbbm{C}^{\m\times\N}$ with $m = \mathcal{O}\left( \max \left\{s,\frac{r}{\epsilon^2} \right\} \log^4\left( N/p\epsilon^2 \right) \right)$ as per Theorem~\ref{Thm:MainCosampMatrixSetup}.  Let $\utilde_j$ and $\u_j$ be the ordered eigenvectors of $MAM^*$ \eqref{def:SVD_A_Atilde} and $A$ \eqref{def:SVD_A}, respectively, for all $j \in [\ell]$.  Then, there exists a compressive sensing algorithm $\mathcal{A}_{\rm lin}: \mathbbm{C}^m \rightarrow \mathbbm{C}^N$ and an absolute constant $c' \in \mathbbm{R}^+$ such that 
\begin{equation*}
		\min_{\phi \in [0,2\pi)} \left \| \mathbbm{e}^{\mathbbm{i} \phi} \u_j - \mathcal{A}_{\rm lin}(\utilde_j) \right \|_2 < c' \cdot \max \left\{ \eta, \frac{1}{\sqrt{s}} \| \u_j - (\u_j)_s\|_1 + \sqrt{\epsilon} \cdot q^{1-j} \right\}
\end{equation*}
holds for all $j \in [\ell]$ with probability at least $1-p$.\footnote{Given $\u \in \mathbbm{C}^N$ and $s \in [N] := \{ 1, 2, \dots, N\}$, the vector $\u_s \in \mathbbm{C}^N$ denotes a best possible $s$-sparse approximation to $\u$.  Let $\Sigma_s := \{ {\bf x} ~|~ \| {\bf x} \|_0 \leq s \} \subset \mathbbm{C}^N$.  Then, $\| \u - \u_s \|_p = \inf_{\v \in \Sigma_s} \| \u - \v \|_p$ holds for all $p \in [1,\infty]$.}  Furthermore, all $\ell$ estimates $\left\{ \mathcal{A}_{\rm lin}(\utilde_j) \right \}_{j \in [\ell]}$ can be computed in $\mathcal{O}\left(m^3 + \ell N \log N \cdot \log(1/\eta) \right)$-time from $MAM^* \in \mathbbm{C}^{m \times m}$ using $\mathcal{O}(N)$-memory.
\end{theorem}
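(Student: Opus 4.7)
The plan is to combine three ingredients: (i) a sketched-matrix eigenvector perturbation result that places $\utilde_j$ close (up to a global phase) to the normalized sketched eigenvector $\uhat_j = \M\u_j/\|\M\u_j\|_2$; (ii) the near-isometry property of $\M$ on sparse vectors, which lets one translate ``closeness in measurement space'' to a CoSaMP-type compressive-sensing bound for $\u_j$; and (iii) a runtime/memory accounting for the two-stage recovery pipeline of Algorithm~\ref{Alg:ApproxfromMAM}.

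\textbf{Step 1 (Sketch decomposition and head/tail split).} First, write
\begin{equation*}
    \M\A\M^* \;=\; \sum_{j=1}^{r} \lambda_j (\M\u_j)(\M\u_j)^* \;+\; \M \A_{\backslash r} \M^* \;=:\; \hat{\A}_r + E,
\end{equation*}
using the Hermitian PSD assumption ($\v_j=\u_j$). From the properties of $\M$ posited in Theorem~\ref{Thm:MainCosampMatrixSetup} (RIP plus an operator-norm bound compatible with nuclear-norm inputs), and from Hypothesis~2, one obtains $\|E\|_2 \lesssim \epsilon \lambda_\ell$. Simultaneously, the RIP of $\M$ applied to the (assumed approximately sparse) top eigenvectors of $\A$ implies $|1-\|\M\u_j\|_2| \lesssim \epsilon$ and that the system $\{\uhat_j\}_{j=1}^{r}$ is nearly orthonormal, so the eigenvalues of $\hat{\A}_r$ agree with $\{\lambda_j\}_{j=1}^{r}$ up to $O(\epsilon \lambda_1)$ and inherit gaps $\lambda_j-\lambda_{j+1} = c(1-q)q^j$ for $j\in[\ell]$.

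\textbf{Step 2 (Davis--Kahan on the sketch).} Apply the generalized $\sin\theta$-type eigenvector perturbation result developed earlier in the paper (the result that motivates the heuristic bound ``$\utilde_j \approx \M\u_j$'') to the pair $(\hat{\A}_r,\; \M\A\M^*)$. Because the relevant spectral gap around $\lambda_j$ is of order $q^j$, the perturbation bound gives, for some $\phi_j\in[0,2\pi)$,
\begin{equation*}
    \bigl\| \utilde_j - \mathbbm{e}^{\mathbbm{i}\phi_j}\uhat_j \bigr\|_2 \;\lesssim\; \frac{\|E\|_2}{\text{gap}_j} \;\lesssim\; \frac{\epsilon\,\lambda_\ell}{q^j}.
\end{equation*}
Passing from $\uhat_j$ back to $\M\u_j$ costs an additional $O(\epsilon)$ by Step~1. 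The final $\sqrt{\epsilon}\cdot q^{1-j}$ rate in the statement is then produced by the quadratic/$\ell_2$-to-$\sin\theta$ conversion intrinsic to the paper's bespoke perturbation theorem, together with balancing $\|E\|_2$ against the gap.

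\textbf{Step 3 (Compressive sensing recovery).} Set $\boldsymbol{\nu}_j := \utilde_j - \mathbbm{e}^{\mathbbm{i}\phi_j} \M\u_j$. By the choice of $\M$ via Theorem~\ref{Thm:MainCosampMatrixSetup}, the CoSaMP-style algorithm $\mathcal{A}_{\rm lin}$ satisfies the standard noise-robust recovery guarantee
\begin{equation*}
    \bigl\| \mathcal{A}_{\rm lin}(\utilde_j) - \mathbbm{e}^{\mathbbm{i}\phi_j}\u_j \bigr\|_2 \;\leq\; C_1 \tfrac{1}{\sqrt{s}}\|\u_j-(\u_j)_s\|_1 \;+\; C_2\|\boldsymbol{\nu}_j\|_2 \;+\; C_3\,\eta,
\end{equation*}
where $\eta$ is the iteration-stopping tolerance. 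Substituting the bound on $\|\boldsymbol{\nu}_j\|_2$ from Step~2 and taking the minimum over global phases yields the claimed error estimate. A union bound over $j\in[\ell]$ and over the high-probability events for $\M$ controls the $1-p$ success probability.

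\textbf{Step 4 (Runtime and memory).} The top-$\ell$ eigendecomposition of the $m\times m$ matrix $\M\A\M^*$ costs $O(m^3)$. Each call to $\mathcal{A}_{\rm lin}$ runs in $O(N\log N \cdot \log(1/\eta))$ time via the structured fast matrix-vector multiply with $\M$ inside CoSaMP, for a total of $O(\ell N\log N\cdot \log(1/\eta))$. Because the $\ell$ calls are sequential and each intermediate iterate occupies $O(N)$ words, the memory footprint is $O(N)$.

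\textbf{Anticipated main obstacle.} The principal difficulty is Step~2: a direct application of the classical Davis--Kahan theorem fails because $\{\uhat_j\}$ is only \emph{approximately} orthonormal and $\hat{\A}_r$ is not exactly self-adjoint on an orthonormal basis after sketching; moreover the gap must be controlled for the \emph{perturbed} operator, not just the unperturbed one. Handling this likely requires the paper's bespoke perturbation theorem, and getting the exact $\sqrt{\epsilon}\,q^{1-j}$ dependence rather than a weaker $\epsilon\,q^{-j}$ requires carefully balancing the error pushed into the nuclear-norm tail against the spectral-gap denominator.
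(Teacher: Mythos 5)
Your overall pipeline coincides with the paper's: bound the measurement error $\min_{\phi}\|\utilde_j-\mathbbm{e}^{\mathbbm{i}\phi}\M\u_j\|_2$, feed $\utilde_j$ to a CoSaMP-type algorithm treating that error as noise (your Step 3 is exactly the paper's argument), and account for runtime and memory as in your Step 4, which matches the paper's $\mathcal{O}(m^3+\ell N\log N\cdot\log(1/\eta))$ accounting. The gap lies in Steps 1--2, which is where the real work is. First, the claims $|1-\|\M\u_j\|_2|\lesssim\epsilon$ and near-orthonormality of $\{\uhat_j\}$ do not follow from ``RIP applied to the (assumed approximately sparse) top eigenvectors'': the theorem never assumes the $\u_j$ are sparse (their compressibility only enters through the term $\|\u_j-(\u_j)_s\|_1$ in the final bound), so the RIP gives you nothing about $\M\u_j$. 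The paper instead requires $\M$ to be an $\epsilon$-JL embedding of the entire column space of $\A_r$ and of the tail singular vectors, which is precisely why Lemma~\ref{lem:MergedPropsCosamp} and Theorem~\ref{Thm:MainCosampMatrixSetup} bundle these subspace-embedding properties on top of the RIP.

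Second, and more seriously, your Step 2 both defers the key estimate to ``the paper's bespoke perturbation theorem'' without supplying it, and sketches a mechanism that would not yield the stated rate. The Davis--Kahan-style ratio $\|\M\A_{\backslash r}\M^*\|_2/\mathrm{gap}_j\lesssim\epsilon\lambda_\ell/(cq^j(1-q))$ accounts only for the sketched tail; it ignores the dominant error source, namely the $\mathcal{O}(\epsilon)$ multiplicative distortion of the embedding itself, which perturbs $\utilde_j$ away from $\uhat_j$ even when $\A$ is exactly rank $r$. In the paper this distortion is propagated through the exact identity $|\langle\utilde_j,\uhat_j\rangle|^2=(\lambdatilde_j/\lambdahat_j)\,|W_{j,j}|^2$, where $W$ comes from the SVD of $\Utilde^*\Uhat$ (Theorem~\ref{thm:angle_lower_bound}), and an induction over $j$ produces the amplification factor $\kappa_j\le q^{-2j+2}$; the rate $\sqrt{\epsilon}\,q^{1-j}$ then arises from converting the lower bound $|\langle\utilde_j,\uhat_j\rangle|^2\ge 1-\mathcal{O}(\epsilon\, q^{-2j+2})$ into an $\ell_2$ distance (Corollary~\ref{Coro:approxerrorinEigenvectors} and Theorem~\ref{MainTHM:ExpDecayingSingVlaues}), not from balancing $\|E\|_2$ against the eigengap. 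You correctly flag in your obstacle paragraph that classical Davis--Kahan fails here, but as written Step 2 is a placeholder for exactly the argument the proof needs, so the proposal does not yet constitute a proof. (A minor point: no union bound over $j\in[\ell]$ is needed, since the measurement-error bound is deterministic once $\M$'s embedding properties hold, and Theorem~\ref{Thm:MainCosampMatrixSetup} delivers all of them simultaneously with probability $1-p$.)
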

\begin{remark}[On the constant $c'$ in Theorem~\ref{THM:MAINRESULT_LINEARcosamp}]
The constant $c'= 7c''$, where $7$ comes from the explicit bounds
in the proof of Theorem~\ref{MainTHM:ExpDecayingSingVlaues} and $c''$ is the absolute constant from~\cite[Theorem~A]{needell2009cosamp}.  
\end{remark}

Although it is possible to relax the exact spectral decay $\lambda_j = c q^j$ to, e.g., $\lambda_j \leq c q^j$ with a proof strategy similar to the one presented in this paper, such generalizations require more technical assumptions (such as minimal gaps between leading eigenvalues) which complicate the results without providing additional insight into the scaling of the approach.  Hence, such variants are hereby omitted (see  Section~\ref{sec:HOWTOGeneralizeMainThms} for additional related discussion).

In addition, the following additional example theorem is proven in Section~\ref{sec:PROOFOFMAINRESULT_SUBLINEAR}.  It differs from Theorem~\ref{THM:MAINRESULT_LINEARcosamp} only in the choice of the compressive sensing algorithm used to recover the top eigenvectors of $A$ from the eigenvectors of $MAM^*$.  We emphasize that many other similar results can also be proven using the framework herein by simply continuing to vary one's choice of this compressive sensing algorithm.

\begin{theorem} \label{THM:MAINRESULT_SUBLINEAR}
Let $q \in (0,1/3)$, $c \in [1,\infty)$, and $\ell, r, 1/\epsilon \in [N]$ be such that $\epsilon < \min \left\{ \frac{1}{20}, \frac{1}{4}\left(\frac{1-3q}{1+q}\right) \right\}$ and $2 \leq \ell \leq r$. Suppose that $\A\in \mathbbm{C}^{\N\times\N}$ is Hermitian and PSD with eigenvalues $\lambda_1 \geq \lambda_2 \geq \dots \geq \lambda_N \geq 0$ satisfying
\begin{enumerate}
    \item $\lambda_j = c q^j$ for all $j \in [\ell] \subseteq [r]$, and
    \item $\|\A_{\backslash \r}\|_* \leq \epsilon \lambda_\ell$.
\end{enumerate}
Choose $s\in [\N]$, $p \in (0,1)$, and form a random matrix $\M \in\mathbbm{C}^{\m\times\N}$ with $m = \mathcal{O}\left( \max \left\{ s^2, \frac{r^2}{\epsilon^2}  \right\} \log^5(N/p) \right)$ as per Theorem~\ref{Thm:MainSublinearMatrixSetup}.  Let $\utilde_j$ and $\u_j$ be the ordered eigenvectors of $MAM^*$ \eqref{def:SVD_A_Atilde} and $A$ \eqref{def:SVD_A}, respectively, for all $j \in [\ell]$.  Then, there exists a compressive sensing algorithm $\mathcal{A}_{\rm sub}: \mathbbm{C}^m \rightarrow \mathbbm{C}^N$ and $\beta^{A}_{M} \in \mathbbm{R}^+$ such that 
\begin{align*}
		&\min_{\phi \in [0,2\pi)} \left \| \mathbbm{e}^{\mathbbm{i} \phi} \u_j - \mathcal{A}_{\rm sub}(\utilde_j) \right \|_2\\ 
        & \qquad \qquad < \| \u_j - (\u_j)_{ 2s } \|_2 + 6(1+\sqrt{2}) \left( \frac{\| \u_j - (\u_j)_s \|_1}{\sqrt{ s }} + \beta^{A}_{M} \sqrt{\epsilon} \cdot q^{1-j}\right), 
\end{align*}
holds for all $j \in [\ell]$ with probability at least $1-p$.  Furthermore, all $\ell$ estimates $\left\{ \mathcal{A}_{\rm sub}(\utilde_j) \right \}_{j \in [\ell]}$ can be computed in just $\mathcal{O}\left(m^3 \right)$-time from $MAM^* \in \mathbbm{C}^{m \times m}$.
\end{theorem}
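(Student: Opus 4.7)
The proof follows the same two-step template as the proof of Theorem~\ref{THM:MAINRESULT_LINEARcosamp}, merely substituting a sublinear-time compressive sensing decoder for the CoSaMP-based decoder used there. First, I would invoke the generalized eigenvector perturbation result developed earlier in the paper (the ``$\utilde_j \approx \M\u_j$'' estimate alluded to in the discussion of \eqref{equ:BasicCSemasIdea}) to bound the measurement error
\begin{equation*}
\min_{\phi \in [0,2\pi)} \| \utilde_j - \M(\mathbbm{e}^{\mathbbm{i}\phi} \u_j) \|_2 \leq \beta^{A}_{M} \sqrt{\epsilon}\cdot q^{1-j}, \qquad j \in [\ell],
\end{equation*}
where $\beta^{A}_{M}$ absorbs a constant factor coming from the RIP constant of $\M$ restricted to the leading eigendirections. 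The hypothesis $\lambda_j = c q^j$ on $[\ell]$ fixes the gaps $|\lambda_j - \lambda_{j+1}| = cq^j(1-q)$, while the tail bound $\|\A_{\backslash r}\|_* \leq \epsilon \lambda_\ell$ controls the size of the perturbation between $\M\A\M^*$ and its idealized rank-$r$ surrogate $\sum_{j=1}^m \lambda_j (\M\u_j)(\M\u_j)^*$; together these feed a Davis--Kahan-style bound that yields the $q^{1-j}$ scaling. The choice $m = \mathcal{O}(\max\{s^2, r^2/\epsilon^2\}\log^5(N/p))$ in Theorem~\ref{Thm:MainSublinearMatrixSetup} guarantees that $\M$ simultaneously satisfies a strong enough RIP for this perturbation step and the combinatorial structure needed by the sublinear decoder in the next step, with overall failure probability at most $p$.

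Next, I would feed $\utilde_j$ into the sublinear-time compressive sensing decoder $\mathcal{A}_{\rm sub}$ associated with $\M$ (of the type constructed in \cite{iwen2014, BaileyIwenSpencer2012}), using its deterministic robust recovery guarantee: for any $\u \in \mathbbm{C}^N$ and noisy measurements $y = \M\u + e$,
\begin{equation*}
\| \u - \mathcal{A}_{\rm sub}(y) \|_2 \leq \|\u - \u_{2s}\|_2 + 6(1+\sqrt{2})\left( \frac{\|\u - \u_s\|_1}{\sqrt{s}} + \|e\|_2 \right).
\end{equation*}
Applying this with $\u = \mathbbm{e}^{\mathbbm{i}\phi_j}\u_j$ (choosing $\phi_j$ as the minimizer from step one), $y = \utilde_j$, and $e = \utilde_j - \M(\mathbbm{e}^{\mathbbm{i}\phi_j}\u_j)$, and then inserting the measurement-error bound $\|e\|_2 \leq \beta^{A}_{M}\sqrt{\epsilon}\cdot q^{1-j}$, immediately yields the stated inequality, since the best $s$-term and $2s$-term approximation errors are both phase-invariant in $\u_j$. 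The runtime claim follows because computing the top-$\ell$ eigenvectors of $MAM^* \in \mathbbm{C}^{m\times m}$ costs $\mathcal{O}(m^3)$, and each of the $\ell$ sublinear decoder calls runs in $\mathrm{poly}(s)\cdot\mathrm{polylog}(N)$ time, which is absorbed into $\mathcal{O}(m^3)$ under the stated scaling of $m$.

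The main obstacle I anticipate is the joint bookkeeping for normalization and global phase in step one. Since $\utilde_j$ is a unit vector but $\M\u_j$ generally is not, the natural perturbation statement compares $\utilde_j$ to the normalized vector $\uhat_j = \M\u_j/\|\M\u_j\|_2$ defined in \eqref{def:lambda}; converting this into a clean additive bound on $\|\utilde_j - \M(\mathbbm{e}^{\mathbbm{i}\phi_j}\u_j)\|_2$ requires simultaneously tracking the RIP-induced deviation $\|\M\u_j\|_2 - 1 = \mathcal{O}(\epsilon)$ and the free global phase through the decoder. The constant $\beta^{A}_{M}$ is precisely the bookkeeping artifact of this step, and I expect a dedicated lemma to be needed to cleanly extract it from the combined perturbation-plus-normalization calculation, with the remainder of the proof reducing to a careful substitution.
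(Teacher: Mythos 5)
Your high-level template (measurement-error bound for $\utilde_j \approx \M\u_j$, then a robust recovery guarantee for the decoder) is the same as the paper's, but there is a genuine gap in how you distribute the constant $\beta^{A}_{M}$, and it stems from invoking a recovery guarantee the sublinear-time decoder does not have. You assume $\mathcal{A}_{\rm sub}$ satisfies an $\ell_2$-robust bound of the form
\begin{equation*}
\| \u - \mathcal{A}_{\rm sub}(\M\u + {\bf e}) \|_2 \leq \|\u - \u_{2s}\|_2 + 6(1+\sqrt{2})\left( \frac{\|\u - \u_s\|_1}{\sqrt{s}} + \|{\bf e}\|_2 \right),
\end{equation*}
i.e., with the noise entering through $\|{\bf e}\|_2$. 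The Cormode--Muthukrishnan-type sublinear decoders of \cite{iwen2014,BaileyIwenSpencer2012} (Theorem~\ref{Thm:FinalsublinearCSRes} and part $(ii)$ of Theorem~\ref{Thm:MainSublinearMatrixSetup} in the paper) only give noise dependence through $\sqrt{s K(1+\lceil \log_2 N\rceil)}\,\|{\bf e}\|_\infty$, which the paper rewrites as $\beta_{m}({\bf e})\|{\bf e}\|_2$ with $\beta_m$ as in \eqref{equ:DefofBeta_mn}. That $\ell_\infty$-to-$\ell_2$ conversion factor is exactly what $\beta^{A}_{M}$ is: the paper sets $\beta^{A}_{M} = 7\max_{j\in[\ell]}\beta_m\left(\utilde_j - \mathbbm{e}^{\mathbbm{i}\phi'_j}\M\u_j\right)$, and it is \emph{not} an absolute constant a priori (hence Remark~\ref{remark:sublinearBetaissmall} and the experiments in Section~\ref{sec:BetaExperiments} arguing it is $\mathcal{O}(1)$ only when the error vectors are flat). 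By contrast, you place $\beta^{A}_{M}$ in the measurement-error step, describing it as absorbing an RIP constant from the perturbation/normalization bookkeeping. That step actually closes with an absolute constant: Theorem~\ref{MainTHM:ExpDecayingSingVlaues} (built on Theorem~\ref{thm:angle_lower_bound} and Corollary~\ref{Coro:approxerrorinEigenvectors}, which already handle the phase and the $\|\M\u_j\|_2\approx 1$ normalization you worry about) gives $\min_\phi\|\utilde_j - \M(\mathbbm{e}^{\mathbbm{i}\phi}\u_j)\|_2 < 7\sqrt{\epsilon}\,q^{1-j}$ outright, with no $A$- or $M$-dependent factor.

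So as written your argument would either prove a stronger statement than is available (the theorem with $\beta^{A}_{M}$ an absolute constant, which the cited sublinear CS machinery cannot deliver), or, if you substitute the correct decoder guarantee, your measurement-error bound $\beta^{A}_{M}\sqrt{\epsilon}\,q^{1-j}$ double-counts a factor that belongs to the decoder step and leaves its provenance unexplained. The fix is to keep the clean $7\sqrt{\epsilon}\,q^{1-j}$ bound from Theorem~\ref{MainTHM:ExpDecayingSingVlaues}, feed $\utilde_j$ into $\mathcal{A}_{\rm sub}(\cdot) = \sqrt{K(1+\lceil\log_2 N\rceil)}\, D f(\cdot)$ with noise ${\bf n}_j = \utilde_j - \mathbbm{e}^{\mathbbm{i}\phi'_j}\M\u_j$, use the $\beta_m({\bf n}_j)\|{\bf n}_j\|_2$ form of the guarantee, and only then define $\beta^{A}_{M}$ as the maximum of the resulting $7\beta_m({\bf n}_j)$ values. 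Your runtime accounting and the overall probability bookkeeping are otherwise consistent with the paper's proof.
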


Note that the worst-case measurement bounds on $m$ provided by Theorem~\ref{THM:MAINRESULT_SUBLINEAR} are worse than those provided by Theorem~\ref{THM:MAINRESULT_LINEARcosamp} (e.g., they scale quadratically in $r$ and $s$ as opposed to linearly). In exchange, however, the overall runtime complexity guaranteed by Theorem~\ref{THM:MAINRESULT_SUBLINEAR} is significantly better than that provided by Theorem~\ref{THM:MAINRESULT_LINEARcosamp} for large $N$, scaling only \emph{sub-linearly in $N$} as opposed to linearly.  Meanwhile, the approximation error bounds provided by both results are comparable up to the numerical constant $\beta^{A}_{M}$ appearing in Theorem~\ref{THM:MAINRESULT_SUBLINEAR}.

Towards a better understanding of $\beta^{A}_{M}$, let $\phi'_j \in [0,2\pi)$ be such that $\left\| \utilde_j - \mathbbm{e}^{\mathbbm{i} \phi'_j} M \u_j \right\|_2 = \min_{\phi \in [0,2\pi)} \| \mathbbm{e}^{\mathbbm{i} \phi} \utilde_j - M \u_j \|_2$ for all $j \in [\ell]$.  Then 
$\beta^{A}_{M} = 7 \max_{j \in [\ell]} \beta_{m}(\utilde_j - \mathbbm{e}^{\mathbbm{i} \phi'_j}M\u_j)$, where $\beta_{m}(\cdot)$ is defined as per \eqref{equ:DefofBeta_mn}, can be expected to be $\mathcal{O}(1)$ as long as the error vectors $\utilde_j - \mathbbm{e}^{\mathbbm{i} \phi'_j}M\u_j \in \mathbbm{C}^m$ are ``flat'' with $\left\| \utilde_j - \mathbbm{e}^{\mathbbm{i} \phi'_j}M\u_j \right\|_\infty = \mathcal{O} \left( m^{-\frac{1}{2}}\left\| \utilde_j - \mathbbm{e}^{\mathbbm{i} \phi'_j}M\u_j \right \|_2 \right)$ for all $j \in [\ell]$ (see, e.g., Remark~\ref{remark:sublinearBetaissmall}).  Indeed, this is the case in all of the experiments we have run so far (see, e.g., Section~\ref{sec:BetaExperiments}).  As a consequence, we expect that the approximation errors bounded by Theorem~\ref{THM:MAINRESULT_LINEARcosamp} and~\ref{THM:MAINRESULT_SUBLINEAR} are comparable in most cases.  We leave an improved theoretical understanding of the numerical constant $\beta^{A}_{M}$ for future work.

\subsubsection{The General Proof Framework, and a Guide to Proving Other Variants}
\label{sec:HOWTOGeneralizeMainThms}

Both Theorems~\ref{THM:MAINRESULT_LINEARcosamp} and~\ref{THM:MAINRESULT_SUBLINEAR} are proven in three phases. First, in Section~\ref{sec:Preliminaries}, we prove that the top eigenvalues of $MAM^*$ are good approximations to the top eigenvalues of $A$ for a very general class of Johnson–Lindenstrauss (JL) embedding matrices $\M$ and matrices $\A$.  This generalizes prior work focused on the case where $M$ is a Gaussian random matrix \cite{andoni2013eigenvalues,swartworth2023optimal}.  Second, in Section~\ref{sec:EigenvectorMeasApprox}, we prove that the eigenvectors of $MAM^*$ provide accurate compressive measurements of the eigenvectors of Hermitian/symmetric and PSD matrices $A$.  The new bounds proven therein are reminiscent of classical eigenvector perturbation results such as the Davis–Kahan $\sin\theta$ theorem \cite{Yusinatheta2015}, except that they involve the non-classical decomposition on the right side of \eqref{equ:BasicCSemasIdea}.  Third, in Section~\ref{sec:CompressiveSense}, performance bounds are derived for specific compressive sensing algorithms when they take noisy measurements resulting from eigenvectors of $MAM^*$ as input.  Finally, Theorems~\ref{THM:MAINRESULT_LINEARcosamp} and~\ref{THM:MAINRESULT_SUBLINEAR} concerning matrices $A$ whose top eigenvalues decay exponentially are then proven in Section~\ref{sec:MainResFinalProofs} using the supporting theory developed in Sections~\ref{sec:Preliminaries}--\ref{sec:CompressiveSense}.  In addition, Section~\ref{sec:Experiments} provides preliminary experiments demonstrating the empirical performance of the strategy analyzed in Theorem~\ref{THM:MAINRESULT_SUBLINEAR} on large matrices $A$ having $\sim 10^{16}$ entries.\\

\noindent {\small \bf A Guide to Proving Alternate Variants of Theorems~\ref{THM:MAINRESULT_LINEARcosamp} and~\ref{THM:MAINRESULT_SUBLINEAR}}\\

\noindent 
 While the results and proofs in this paper pertain to matrices $A$ whose top eigenvalues decay exponentially, we emphasize here that alternate theorems for other classes of approximate low-rank matrices $A$ of interest can be proven in the same fashion using the theoretical framework developed above.  More precisely, let $\lambda_j, \lambdatilde_j, \lambdahat_j$ be defined as in \eqref{def:SVD_A}--\eqref{def:lambda}.  In Section~\ref{sec:Preliminaries} we begin by bounding $|\lambda_j - \lambdatilde_j|$ using
Theorem~\ref{theorem:MAM}.  
Although Theorem~\ref{theorem:MAM} provides weaker error bounds for some matrices $\A$ than \cite{swartworth2023optimal} does when $\M$ is a Gaussian random matrix, it also applies to \emph{all JL embedding matrices $M$} including, e.g., those with fast matrix-vector multiplication algorithms such as \cite{Ailon2008fast,ailon2009the,krahmer2011new,bourgain2015toward,iwen2024on}.  Furthermore, Theorem~\ref{theorem:MAM} will hold independently of any conditions placed on the spectral decay of $A$, and so can be used to prove Lemma~\ref{lem:distance_lambda_j_lambda_tilde_j} bounding $| \lambdatilde_j - \lambdahat_j |$ under very general conditions.  

In Section~\ref{sec:EigenvectorMeasApprox} we next bound the error 
$$\min_{\phi \in [0,2\pi)} \|  \utilde_j - M ( \mathbbm{e}^{\mathbbm{i} \phi} \u_j) \|_2 ~\lesssim~ \sqrt{2 - 2 \| M \u_j \|_2 ~| \langle \utilde_j, \uhat_j \rangle|}$$
by lower bounding $|\langle \utilde_j,\uhat_j \rangle|$ in Theorem~\ref{thm:angle_lower_bound}.  Although quite general, Theorem~\ref{thm:angle_lower_bound} involves several difficult-to-interpret quantities $g_j, \kappa_j, \nu_j$ defined in terms of the eigenvalues of $\Atilde$ (i.e., the $\lambdatilde_j$'s).  Given that we are primarily interested in understanding how the spectral properties of the original matrix $\A$ (as opposed to $\Atilde$) influence the behavior of our proposed algorithms herein, we simplify and recast Theorem~\ref{thm:angle_lower_bound} in terms of the eigenvalues of $\A$ in Theorem~\ref{MainTHM:ExpDecayingSingVlaues}.  The exponentially-decaying top-eigenvalue model for $A$ considered in Theorems~\ref{THM:MAINRESULT_LINEARcosamp} and~\ref{THM:MAINRESULT_SUBLINEAR} (i.e., $\lambda_j = c q^j$ for $q < 1/3$) ultimately originates in Theorem~\ref{MainTHM:ExpDecayingSingVlaues} as a way to help us prove simple upper bounds on the quantities $g_j$, $\kappa_j$, and $\nu_j$ from Theorem~\ref{thm:angle_lower_bound}.  These simple upper bounds on $g_j$, $\kappa_j$, and $\nu_j$ then provide clean bounds on the measurement errors $\min_{\phi \in [0,2\pi)} \|  \utilde_j - M ( \mathbbm{e}^{\mathbbm{i} \phi} \u_j) \|_2$ via Corollary~\ref{Coro:approxerrorinEigenvectors}.  That said, we hasten to point out that similar results certainly hold for significantly more general classes of matrices $\A$ than what we focus on in Theorems~\ref{MainTHM:ExpDecayingSingVlaues}, \ref{THM:MAINRESULT_LINEARcosamp}, and~\ref{THM:MAINRESULT_SUBLINEAR}.  Cleanly bounding the quantities $g_j$, $\kappa_j$, and  $\nu_j$ in Theorem~\ref{thm:angle_lower_bound} from above in terms of the eigenvalues of $A$ will be the principal task involved in proving them.

By the conclusion of Section~\ref{sec:EigenvectorMeasApprox} it's  established that ${ \tilde{\bf u}}_j \approx M{\bf u}_j$ holds as long as $\M$ satisfies the conditions of Theorem~\ref{MainTHM:ExpDecayingSingVlaues} (which match those of Theorem~\ref{theorem:MAM}).  At this point it is therefore clear that one's favorite Compressive Sensing (CS) algorithm $\mathcal{A}:  \mathbbm{C}^m \rightarrow \mathbbm{C}^N$ will have $\mathcal{A} \left({\tilde{\bf u}}_j\right) \approx {\bf u}_j$ whenever $\M$ \emph{both} satisfies the properties required by $\mathcal{A}$ \emph{and} satisfies the conditions of Theorem~\ref{theorem:MAM}.  In Section~\ref{sec:CompressiveSense} we finish building the  theoretical foundations needed to prove Theorems~\ref{THM:MAINRESULT_LINEARcosamp} and~\ref{THM:MAINRESULT_SUBLINEAR} by proving that, indeed, measurement matrices $\M$ exist which not only satisfy the conditions of Theorem~\ref{theorem:MAM}, but that also have the properties required by two important types of CS algorithms:  $(i)$ standard CS algorithms requiring the RIP (see Lemma~\ref{lem:MergedPropsCosamp}), and $(ii)$ sublinear-time CS algorithms requiring $\M$ to have highly-structured combinatorial properties (see Theorem~\ref{Thm:MainSublinearMatrixSetup}).  This theory, like that in Section~\ref{sec:Preliminaries}, will hold independently of any conditions placed on the spectral decay of $A$ and so can be used to prove alternate variants of Theorems~\ref{THM:MAINRESULT_LINEARcosamp} and~\ref{THM:MAINRESULT_SUBLINEAR} without modification.

We are now prepared to begin proving our main results.

\section{Eigenvalue Preliminaries}
\label{sec:Preliminaries}

The principal goal of this section is to upper bound the differences $|\lambda_j - \lambdatilde_j|$ and $|\lambdatilde_j - \lambdahat_j|$ as 
defined in \eqref{def:SVD_A} -- \eqref{def:lambda}.  In the course of doing so we will review some standard terminology and results related to fast Johnson–Lindenstrauss (JL) embeddings.
\begin{definition}[$\epsilon$-JL map] \label{JLmapdef}
    An $\epsilon$-JL map of a set $\mathcal{S} \subset \mathbbm{C}^N$ is a linear map $f:\mathbbm{C}^{\N}\to\mathbbm{C}^\m$ such that for all ${\bf x} \in \mathcal{S}$,
    \begin{equation}
    \label{eq:JLproperty}
        (1-\epsilon)\|{\bf x}\|_2^2
        \leq \|f({\bf x})\|_2^2
        \leq (1+\epsilon)\|{\bf x}\|_2^2.
    \end{equation}
    For notation simplicity, we say a matrix $\M \in \mathbbm{R}^{m \times N}$ is an $\epsilon$-JL map if $f({\bf x}):=\M {\bf x}$ is an $\epsilon$-JL map of given set.
\end{definition}
Note that since  $\epsilon\in(0,1)$, $1-\epsilon\leq \sqrt{1-\epsilon}$ and $\sqrt{1+\epsilon} \leq 1+\epsilon$.  Hence,
\begin{equation*}
     \sqrt{1-\epsilon}~\|{\bf x}\|_2
    \leq \|\M{\bf x}\|_2
    \leq \sqrt{1+\epsilon}~\|{\bf x}\|_2. 
\end{equation*}
implies that
\begin{equation} \label{equ:droppedsqaresJLdef}
    (1-\epsilon)\|{\bf x}\|_2 \leq \|\M{\bf x}\|_2 \leq (1+\epsilon)\|{\bf x}\|_2.
\end{equation}
As a consequence, any matrix $\M \in \mathbbm{R}^{m \times N}$ satisfying Definition~\ref{JLmapdef} also satisfies \eqref{equ:droppedsqaresJLdef}.  

We also recall the definition of Restricted Isometry Property (RIP), used later in our analysis.
\begin{definition}[RIP of order $(s,\epsilon)$] \label{def:RIP}
For ${\bf x} \in \mathbbm{C}^N$, let $\| {\bf x} \|_0 := |\operatorname{supp}({\bf x})|$
denote the number of nonzero entries of {\bf x}, and set
$\Sigma_s := \{ {\bf x} \mid \|{\bf x}\|_0 \le s\} \subset \mathbbm{C}^N$.
A matrix $M \in \mathbb{R}^{m \times N}$ is said to have the
\emph{Restricted Isometry Property} (RIP) of order $(s,\varepsilon)$
for some $0 \le s \le N$ and $\varepsilon \in (0,1)$ if
$f({\bf x}) := M{\bf x}$ satisfies \eqref{eq:JLproperty} for all ${\bf x} \in \Sigma_s$.
\end{definition}


Concrete examples of structured matrices that both satisfy
Definition~\ref{def:RIP} and admit fast matrix-vector products are discussed in Section~\ref{sec:CompressiveSense} (see also, e.g., \cite[Chapter 12]{FoucartRauhut2013}).

The next theorem can be proven using standard techniques -- we refer the interested reader to the course notes \cite[Section 4.3.3]{Iwennotes} for details.  Ultimately, it will allow us to prove  in Theorem~\ref{theorem:MAM} that matrices $M \in \mathbbm{R}^{m \times N}$ that admit fast matrix-vector multiplications, and/or serve as sublinear-time compressive sensing matrices, can be used to rapidly approximate the top singular values (or eigenvalues) of a huge matrix $A \in \mathbbm{R}^{N \times N}$.  In contrast to related work aimed at approximating eigenvalues of large matrices using unstructured sub-Gaussian matrices (see, e.g., \cite{swartworth2023optimal}), the ability of Theorem~\ref{theorem:MAM} to use arbitrary $\epsilon$-JL maps will also allow efficient (e.g., sublinear-time) sparse approximation of the eigenvectors of $A$.

\begin{theorem}\label{corollary:MAM_A}
    Let $\A\in\mathbbm{C}^{\N\times\N}$ a rank $r$ matrix. Suppose that $\M_1$ and $\M_2$ are $\epsilon$-JL maps of the column spaces of $\A$ and $\A^*$ into $\mathbbm{C}^m$, respectively. Then
    \begin{equation*}
        |\sigma_j(\M_1\A\M_2^*)-\sigma_j(\A)|\leq \epsilon(2+\epsilon)\sigma_j(\A)\quad\forall j\in[\N].
    \end{equation*}
    Here $\sigma_j(\B)$ denotes the $j^{\rm th}$ largest singular value of a given matrix $B$, and $[N] := \{ 1, \dots, N\}$.
\end{theorem}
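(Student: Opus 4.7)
The plan is to exploit the thin SVD of $\A$ to factor $\M_1 \A \M_2^*$ as the product of $\M_1\A$-restricted-to-column-space and $\M_2\A^*$-restricted-to-column-space pieces with the singular values $\sigma_j(\A)$ sandwiched between them, and then to apply standard multiplicative perturbation bounds for the singular values of a product. The case $j > r$ is immediate since $\operatorname{rank}(\M_1 \A \M_2^*) \leq r$ makes both $\sigma_j(\A)$ and $\sigma_j(\M_1\A\M_2^*)$ vanish, so everything reduces to the case $j \leq r$.

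First, I would take a thin SVD $\A = U\Sigma V^*$ with $U, V \in \mathbbm{C}^{N \times r}$ having orthonormal columns that span the column spaces of $\A$ and $\A^*$, respectively, and $\Sigma \in \mathbbm{R}^{r \times r}$ diagonal and positive. Then $\M_1 \A \M_2^* = (\M_1 U)\,\Sigma\,(\M_2 V)^*$. The next step is to observe that, because any unit vector $Ux$ (resp.\ $Vy$) lies in the column space of $\A$ (resp.\ $\A^*$), the $\epsilon$-JL property of $\M_1$ (resp.\ $\M_2$) together with $\|Ux\|_2 = \|x\|_2$ (resp.\ $\|Vy\|_2 = \|y\|_2$) and \eqref{equ:droppedsqaresJLdef} yields
\begin{equation*}
(1-\epsilon)\|x\|_2 \leq \|\M_1 U x\|_2 \leq (1+\epsilon)\|x\|_2, \qquad (1-\epsilon)\|y\|_2 \leq \|\M_2 V y\|_2 \leq (1+\epsilon)\|y\|_2,
\end{equation*}
for all $x, y \in \mathbbm{C}^r$. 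Equivalently, every singular value of the tall matrices $\M_1 U, \M_2 V \in \mathbbm{C}^{m\times r}$ lies in $[1-\epsilon, 1+\epsilon]$. In particular both factors have full column rank.

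Writing thin SVDs $\M_1 U = P_1 S_1 Q_1^*$ and $\M_2 V = P_2 S_2 Q_2^*$ with $P_i \in \mathbbm{C}^{m\times r}$ isometries, $Q_i \in \mathbbm{C}^{r\times r}$ unitary, and $S_i \in \mathbbm{R}^{r\times r}$ diagonal positive with entries in $[1-\epsilon,1+\epsilon]$, one obtains $\M_1 \A \M_2^* = P_1\bigl(S_1 Q_1^* \Sigma Q_2 S_2\bigr) P_2^*$. Since $P_1, P_2$ have orthonormal columns they preserve singular values, so $\sigma_j(\M_1\A\M_2^*) = \sigma_j(S_1 Q_1^* \Sigma Q_2 S_2)$. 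I would then apply the standard multiplicative inequalities $\sigma_j(CBE) \leq \sigma_{\max}(C)\,\sigma_j(B)\,\sigma_{\max}(E)$ and, since $S_1, S_2$ are invertible, $\sigma_j(S_1 X S_2) \geq \sigma_{\min}(S_1)\,\sigma_j(X)\,\sigma_{\min}(S_2)$ (the latter follows by substituting $X = S_1^{-1}(S_1 X S_2)S_2^{-1}$ into the upper bound). Together with the unitary invariance of singular values under $Q_1^*, Q_2$, this yields $(1-\epsilon)^2 \sigma_j(\A) \leq \sigma_j(\M_1\A\M_2^*) \leq (1+\epsilon)^2 \sigma_j(\A)$, and since $(1+\epsilon)^2 - 1 = \epsilon(2+\epsilon)$ and $1-(1-\epsilon)^2 = \epsilon(2-\epsilon) \leq \epsilon(2+\epsilon)$, the conclusion follows.

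The only mildly delicate point is justifying the two-sided product bound for singular values (in particular the lower bound), which requires noting that $\M_1 U$ and $\M_2 V$ have full column rank via the JL lower bound $1-\epsilon > 0$. Everything else is routine once the thin-SVD factorization is in place; no heavy linear-algebra machinery (such as Weyl's inequality or Davis--Kahan) is needed.
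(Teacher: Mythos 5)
Your proof is correct. The paper omits its own argument for this theorem (deferring to course notes for a proof ``using standard techniques''), and your derivation --- factoring $M_1 A M_2^* = (M_1U)\,\Sigma\,(M_2V)^*$ via the thin SVD, observing that the $\epsilon$-JL hypothesis on the two singular subspaces makes $M_1U$ and $M_2V$ near-isometries with all singular values in $[1-\epsilon,1+\epsilon]$, and then applying two-sided multiplicative singular-value bounds together with $(1+\epsilon)^2-1=\epsilon(2+\epsilon)$ --- is precisely that standard route, and it is complete, including the trivial case $j>r$ and the justification of the lower product bound via $X=S_1^{-1}(S_1XS_2)S_2^{-1}$.
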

%
%
Suppose that $\A\in\mathbbm{C}^{\N\times\N}$ is of low rank $\r < \N$. Theorem~\ref{corollary:MAM_A} then implies that we can compute the SVD of $\M_1\A\M_2^* \in \mathbbm{R}^{\mathcal{O}(\r)\times \mathcal{O}(\r)}$ in order to approximate the $r$ non-zero singular values of $A$.  More realistically, however, we should instead consider the case where $A$ is only approximately low rank. 

%
%
%
 Given arbitrary matrix $\A\in\mathbbm{C}^{\N\times\N}$ we can always split it in SVD form as 
\begin{align*}
    \A &= \U\begin{bmatrix}
        \Sigma_\r & 0\\ 0 & \Sigma_{\N-\r}
    \end{bmatrix}
    \V^*\\
     &= \U\begin{bmatrix}
        \Sigma_\r & 0\\ 0 & 0
    \end{bmatrix}
    \V^*
    +\U\begin{bmatrix}
        0 & 0\\ 0 & \Sigma_{\N-\r}
    \end{bmatrix}
    \V^*\\
     &=:\A_\r + \A_{\backslash\r}.
\end{align*}

\begin{theorem}\label{theorem:MAM}
    Let $\A\in\mathbbm{C}^{\N\times\N}$ and choose $r\in[\N]$. Furthermore, suppose that $\M\in\mathbbm{C}^{\m\times\N}$ satisfies the following:
    \begin{enumerate}
        \item $\M$ is an $\epsilon$-JL map of the column space of $\A_\r$ into $\mathbbm{C}^\m$.
        \item $\M$ is an $\epsilon$-JL map of the column space of $\A_\r^*$ into $\mathbbm{C}^\m$.
        \item  M is an $\epsilon$-JL map of the smallest $N-r$ right and $N-r$ left singular vectors of A.
    \end{enumerate}
    Then,
    \begin{align*}
        |\sigma_j(\M\A\M^*) - \sigma_j(\A)|
        \leq \epsilon(2+\epsilon)\sigma_j(\A) + (1+\epsilon) \|A_{\backslash \r}\|_*.
    \end{align*}
    holds $\forall j\in[\r] := \{1, \dots, r\}$.
\end{theorem}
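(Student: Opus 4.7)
The plan is to decompose $A = A_r + A_{\backslash r}$, reduce the top-$r$ singular value perturbation problem for $MAM^*$ to the rank-$r$ case already handled by Theorem~\ref{corollary:MAM_A}, and then absorb the tail contribution using a Weyl-type singular value stability bound. By linearity of the sketch, $MAM^* = MA_r M^* + MA_{\backslash r}M^*$, so for any $j\in [r]$ the triangle inequality gives
\[
|\sigma_j(MAM^*) - \sigma_j(A)| \leq |\sigma_j(MAM^*) - \sigma_j(MA_r M^*)| + |\sigma_j(MA_r M^*) - \sigma_j(A_r)| + |\sigma_j(A_r) - \sigma_j(A)|.
\]
The third term vanishes because $A_r$ is obtained from the SVD of $A$ by zeroing out all but the top $r$ singular values, so $\sigma_j(A_r) = \sigma_j(A)$ for every $j \in [r]$.

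For the middle term I would invoke Theorem~\ref{corollary:MAM_A} with $M_1 = M_2 = M$ applied to the rank-$r$ matrix $A_r$: hypotheses (1) and (2) supply exactly the $\epsilon$-JL embeddings of the column spaces of $A_r$ and $A_r^*$ required there, yielding
\[
|\sigma_j(MA_r M^*) - \sigma_j(A_r)| \leq \epsilon(2+\epsilon)\,\sigma_j(A_r) = \epsilon(2+\epsilon)\,\sigma_j(A).
\]
For the first term, Weyl's inequality for singular values gives
\[
|\sigma_j(MAM^*) - \sigma_j(MA_r M^*)| \leq \sigma_1(MA_{\backslash r}M^*),
\]
and expanding $A_{\backslash r} = \sum_{k=r+1}^{N} \sigma_k(A)\, u_k v_k^*$ followed by the triangle inequality in the operator norm yields
\[
\sigma_1(MA_{\backslash r}M^*) \leq \sum_{k=r+1}^{N} \sigma_k(A)\, \|Mu_k\|_2\, \|Mv_k\|_2 \leq (1+\epsilon)\,\|A_{\backslash r}\|_*,
\]
where the last inequality uses hypothesis (3) together with \eqref{eq:JLproperty} to bound $\|Mu_k\|_2 \leq \sqrt{1+\epsilon}$ and $\|Mv_k\|_2 \leq \sqrt{1+\epsilon}$ individually for each $k > r$. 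Summing the three bounds delivers the claimed inequality.

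The ``hard part'' is mostly bookkeeping rather than genuine difficulty: one must be careful to check that each piece of the decomposition lies in a space the assumed JL property actually covers. In particular, the vectors $\{u_k, v_k\}_{k > r}$ appearing in the tail expansion of $A_{\backslash r}$ are precisely the smallest $N-r$ left/right singular vectors of $A$, so hypothesis (3) is tight for exactly this purpose, and the $(1+\epsilon)$ prefactor in the nuclear-norm term emerges naturally from the product $\sqrt{1+\epsilon}\cdot\sqrt{1+\epsilon}$. I would not attempt to replace the term-by-term bound on $\sigma_1(MA_{\backslash r}M^*)$ by a subspace embedding applied jointly to the tail singular vectors, because it would demand a stronger hypothesis while not improving the constant in the statement.
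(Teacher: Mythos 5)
Your proposal is correct and follows essentially the same route as the paper: split $MAM^* = MA_rM^* + MA_{\backslash r}M^*$, apply Theorem~\ref{corollary:MAM_A} to the rank-$r$ part (noting $\sigma_j(A_r)=\sigma_j(A)$ for $j\in[r]$), control the perturbation term via a Weyl-type singular value inequality, and bound $\sigma_1(MA_{\backslash r}M^*)$ term by term using hypothesis (3) to get the $(1+\epsilon)\|A_{\backslash r}\|_*$ tail. The only cosmetic difference is that the paper bounds the tail through the Frobenius norm of the rank-one sum rather than the operator norm, which yields the identical estimate.
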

\begin{proof}  Let $j \in [r]$.  Since $\M\A\M^* = \M\A_\r\M^* + \M\A_{\backslash\r}\M^*$ we have 
    \begin{equation*}
        |\sigma_j(\M\A\M^*) - \sigma_j(\M\A_\r\M^*)| \leq \sigma_1(\M\A_{\backslash\r}\M^*)
    \end{equation*}
    by $(c)$ of Theorem 3.3.16 in \cite{horn1991topics}. Furthermore, the right-hand side can be bounded using our third assumption by
    \begin{align*}
    \sigma_1(\M\A_{\backslash r}\M^*)
    &\leq \left\| \sum^N_{j=r+1}\sigma_{j} (A)M\u_j  \v^*_j M^*   \right\|_F \\
        &\leq \sum^N_{j=r+1} \sigma_j (A) \|\left(M\u_j\right)\left(M\v_j\right)^*\|_F \\
        & \leq (1 + \epsilon)\sum^N_{j=r+1} \sigma_j (A) = (1+\epsilon) \|A_{\backslash \r}\|_*.
    \end{align*}
    Hence, by triangle inequality and Theorem~\ref{corollary:MAM_A},
    \begin{align*}
        |\sigma_j(\M\A\M^*) - \sigma_j(\A)|
        &\leq |\sigma_j(\M\A\M^*) - \sigma_j(\M\A_r\M^*)|
        + |\sigma_j(\M\A_r\M^*) - \sigma_j(\A)|\\
        &\leq (1+\epsilon) \|A_{\backslash \r}\|_* + \epsilon(2+\epsilon)\sigma_j(\A).
    \end{align*}
    This completes the proof.
\end{proof}

Restricting our attention now to Hermitian and PSD $A$ with $\lambda_j$, $\lambdatilde_j$, and $\lambdahat_j$ defined as in \eqref{def:SVD_A} -- \eqref{def:lambda}, we can see that under the conditions of Theorem~\ref{theorem:MAM} we have $|\lambda_j - \lambdatilde_j| \leq \epsilon(2+\epsilon) \lambda_j + (1+\epsilon) \|A_{\backslash \r}\|_*$ for all $j \in [r]$.  This result allows us to easily prove the following corollary.

\begin{corollary}\label{coro:Ratioboundtildeorig}
Let $\A\in \mathbbm{C}^{\N\times\N}$, $\r\in[\N]$, and $\epsilon \in (0,1/3)$. Suppose that $\M\in\mathbbm{C}^{\m\times\N}$ satisfies the assumptions of Theorem \ref{theorem:MAM}.  Then, for all $j \in [\r]$,
\begin{equation*}
    \frac{\lambda_j}{\lambdatilde_j} 
    \leq \frac{1}{1-\epsilon(2+\epsilon)} + \frac{1+\epsilon}{1-\epsilon(2+\epsilon)}\frac{ \|A_{\backslash \r}\|_*}{\lambdatilde_j}.
\end{equation*}
\end{corollary}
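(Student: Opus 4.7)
The plan is to derive the stated ratio bound as a one-step algebraic rearrangement of Theorem~\ref{theorem:MAM}. First I would invoke that theorem, which under the stated hypotheses on $M$ yields, for every $j \in [r]$,
\begin{equation*}
  |\lambdatilde_j - \lambda_j| \;\leq\; \epsilon(2+\epsilon)\lambda_j + (1+\epsilon)\|A_{\backslash r}\|_*.
\end{equation*}
Dropping to the one-sided bound $\lambda_j - \lambdatilde_j \leq \epsilon(2+\epsilon)\lambda_j + (1+\epsilon)\|A_{\backslash r}\|_*$ and rearranging to isolate $\lambda_j$ on the left yields
\begin{equation*}
  \bigl(1 - \epsilon(2+\epsilon)\bigr)\lambda_j \;\leq\; \lambdatilde_j + (1+\epsilon)\|A_{\backslash r}\|_*.
\end{equation*}

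Next I would verify that the coefficient multiplying $\lambda_j$ is strictly positive, so that dividing through preserves the inequality. The map $\epsilon \mapsto \epsilon(2+\epsilon)$ is increasing on $(0,1/3)$ and at $\epsilon = 1/3$ takes the value $7/9 < 1$; hence the hypothesis $\epsilon \in (0,1/3)$ gives $1 - \epsilon(2+\epsilon) > 0$. Dividing both sides by $(1-\epsilon(2+\epsilon))\lambdatilde_j$ (assuming $\lambdatilde_j > 0$, otherwise the right-hand side is vacuously $+\infty$) separates the two terms and produces exactly the claimed expression for $\lambda_j / \lambdatilde_j$.

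There is no genuine obstacle here: the corollary is essentially a restatement of Theorem~\ref{theorem:MAM} solved for the ratio $\lambda_j/\lambdatilde_j$. The only points requiring any care are the sign check on $1-\epsilon(2+\epsilon)$, which is immediate from $\epsilon<1/3$, and the mild edge case $\lambdatilde_j=0$, which is trivial under the standard convention that the bound is void in that instance.
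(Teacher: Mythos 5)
Your proposal is correct and follows essentially the same route as the paper: apply Theorem~\ref{theorem:MAM}, take the one-sided bound, rearrange, and divide by the positive factor $1-\epsilon(2+\epsilon)$ (the paper simply divides by $\lambdatilde_j$ before rearranging rather than after). Your explicit check that $\epsilon(2+\epsilon) < 7/9 < 1$ for $\epsilon < 1/3$ and your note on the $\lambdatilde_j = 0$ edge case are fine, if slightly more careful than the paper's own wording.
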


\begin{proof}
Applying Theorem~\ref{theorem:MAM} we can see that 
\begin{align*}
    \frac{\lambda_j}{\lambdatilde_j} - 1 \leq \epsilon(2+\epsilon) \frac{\lambda_j}{\lambdatilde_j} + (1+\epsilon) \frac{\|A_{\backslash \r}\|_*}{\lambdatilde_j}.
\end{align*}
Rearranging yields
\begin{align*}
    \frac{\lambda_j}{\lambdatilde_j} \left( 1 - \epsilon(2+\epsilon) \right) \leq 1 + (1+\epsilon) \frac{\|A_{\backslash \r}\|_*}{\lambdatilde_j}.
\end{align*}
Dividing through by $1 - \epsilon(2+\epsilon) > 0$ finishes the proof.
\end{proof}

Using Theorem~\ref{theorem:MAM} and Corollary~\ref{coro:Ratioboundtildeorig} we can now also bound $| \lambdatilde_j - \lambdahat_j |$ in terms of $\lambdatilde_j$.  This will be useful in the next section.

\begin{lemma}
\label{lem:distance_lambda_j_lambda_tilde_j}
Let $\A\in \mathbbm{C}^{\N\times\N}$, $\r\in[\N]$, and $\epsilon\in \left(0, \frac{\sqrt{2}-1}{2} \right)$. Suppose that $\M\in\mathbbm{C}^{\m\times\N}$ satisfies the assumptions of Theorem \ref{theorem:MAM}. Define $\lambdahat_j$ as in \eqref{def:lambda}. Then, for all $j \in [\r]$,
\begin{equation*}
    | \lambdatilde_j - \lambdahat_j | \leq \epsilon b_\epsilon \lambdatilde_j + c_{\epsilon,r},
\end{equation*}
where
\begin{align*}
    b_\epsilon := \frac{3+\epsilon}{1-\epsilon(2+\epsilon)} < 6,~~{\rm and}\\  c_{\epsilon,r} := \frac{(1+\epsilon)^2}{1-\epsilon(2+\epsilon)}\|A_{\backslash \r}\|_* < 3 \|A_{\backslash \r}\|_*.
\end{align*}
\end{lemma}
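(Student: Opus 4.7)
The plan is to bound $|\tilde{\lambda}_j - \hat{\lambda}_j|$ via the triangle inequality through $\lambda_j$, using Theorem~\ref{theorem:MAM} for the $|\tilde{\lambda}_j - \lambda_j|$ piece and the JL hypotheses directly for $|\lambda_j - \hat{\lambda}_j|$, and then converting $\lambda_j$ on the right-hand side to $\tilde{\lambda}_j$ through Corollary~\ref{coro:Ratioboundtildeorig}.

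First I would note that for $j \in [r]$, the right singular vector $v_j$ lies in the column space of $A_r^*$ and the left singular vector $u_j$ lies in the column space of $A_r$, so hypotheses (1) and (2) of Theorem~\ref{theorem:MAM} apply. Using the squared form \eqref{eq:JLproperty} of the JL property (\emph{not} the cruder linear form \eqref{equ:droppedsqaresJLdef}) together with $\|u_j\|_2 = \|v_j\|_2 = 1$ yields $\|Mu_j\|_2 \in [\sqrt{1-\epsilon},\sqrt{1+\epsilon}]$ and similarly for $Mv_j$, hence
\begin{equation*}
(1-\epsilon) \;\leq\; \|Mu_j\|_2\,\|Mv_j\|_2 \;\leq\; (1+\epsilon).
\end{equation*}
Recalling $\hat{\lambda}_j = \lambda_j\|Mu_j\|_2\|Mv_j\|_2$ from \eqref{def:lambda}, this gives the sharp bound $|\lambda_j - \hat{\lambda}_j| \leq \epsilon\lambda_j$. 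Combining with Theorem~\ref{theorem:MAM}, the triangle inequality then gives
\begin{equation*}
|\tilde{\lambda}_j - \hat{\lambda}_j| \;\leq\; \epsilon(2+\epsilon)\lambda_j + (1+\epsilon)\|A_{\backslash r}\|_* + \epsilon \lambda_j \;=\; \epsilon(3+\epsilon)\lambda_j + (1+\epsilon)\|A_{\backslash r}\|_*.
\end{equation*}

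Next I would invoke Corollary~\ref{coro:Ratioboundtildeorig} in the equivalent form $\lambda_j \leq \tfrac{\tilde{\lambda}_j + (1+\epsilon)\|A_{\backslash r}\|_*}{1-\epsilon(2+\epsilon)}$ to eliminate $\lambda_j$ from the right side. Collecting the $\tilde{\lambda}_j$ coefficient gives exactly $\tfrac{\epsilon(3+\epsilon)}{1-\epsilon(2+\epsilon)} = \epsilon b_\epsilon$, and collecting the coefficient of $\|A_{\backslash r}\|_*$ produces $(1+\epsilon)\cdot\tfrac{\epsilon(3+\epsilon)+(1-\epsilon(2+\epsilon))}{1-\epsilon(2+\epsilon)} = \tfrac{(1+\epsilon)^2}{1-\epsilon(2+\epsilon)} = c_{\epsilon,r}$, where the numerator simplification $\epsilon(3+\epsilon) + 1 - \epsilon(2+\epsilon) = 1+\epsilon$ is the key piece of bookkeeping that makes the second constant come out cleanly.

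Finally, the numerical estimates $b_\epsilon < 6$ and $c_{\epsilon,r} < 3\|A_{\backslash r}\|_*$ follow by monotonicity in $\epsilon$ on the interval $(0,(\sqrt{2}-1)/2)$: at the upper endpoint one checks $\epsilon(2+\epsilon) = (2\sqrt{2}-1)/4 \approx 0.457$, so $1-\epsilon(2+\epsilon) > 0.54$, and then $b_\epsilon \lesssim 3.21/0.54 < 6$ and $c_{\epsilon,r}/\|A_{\backslash r}\|_* \lesssim 1.46/0.54 < 3$. The main delicate step is the tight JL estimate on $\|Mu_j\|_2\|Mv_j\|_2$: using the weaker linear inequality would give $|\lambda_j - \hat{\lambda}_j| \leq \epsilon(2+\epsilon)\lambda_j$ and produce a worse leading constant $2(2+\epsilon)/(1-\epsilon(2+\epsilon))$ in place of $(3+\epsilon)/(1-\epsilon(2+\epsilon))$, so the quadratic form of the JL property is essential.
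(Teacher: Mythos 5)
Your proof is correct and follows essentially the same route as the paper's: split $|\lambdatilde_j - \lambdahat_j|$ through $\lambda_j$ via the triangle inequality, bound the first piece with Theorem~\ref{theorem:MAM}, bound $\lambda_j\left|1-\|\M\u_j\|_2\|\M\v_j\|_2\right| \leq \epsilon\lambda_j$ using the squared JL property, convert $\lambda_j$ to $\lambdatilde_j$ via Corollary~\ref{coro:Ratioboundtildeorig}, and evaluate the constants at the endpoint $\epsilon = \frac{\sqrt{2}-1}{2}$ by monotonicity. Your explicit justification of the tight product bound $\|\M\u_j\|_2\|\M\v_j\|_2 \in [1-\epsilon,1+\epsilon]$ (which the paper leaves implicit) and the coefficient bookkeeping both check out.
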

\begin{proof}
Appealing to Theorem~\ref{theorem:MAM} one can see that
\begin{align*}
    \left| \lambdatilde_j - \lambdahat_j \right| &= \left| \lambdatilde_j - \lambda_j\|\M\u_j\|^2_2 \right| \nonumber \\
    &= \left| \lambdatilde_j - \lambda_j + \lambda_j(1 - \|\M\u_j\|^2_2 )\right| \nonumber \\
    &\leq \left| \lambdatilde_j - \lambda_j \right| + \lambda_j \left|1 - \|\M\u_j\|_2^2 \right| \nonumber \\
    &\leq \epsilon(2+\epsilon) \lambda_j + (1+\epsilon) \|A_{\backslash \r}\|_* + \epsilon \lambda_j \nonumber \\
    &= \epsilon(3+\epsilon) \lambda_j + (1+\epsilon) \|A_{\backslash \r}\|_*, 
\end{align*}
where we have used the fact that eigenvectors have $\|\u_j\|_2 = 1$ $\forall j$.  Continuing, we can now further see that 
\begin{align*}
    \left| \lambdatilde_j - \lambdahat_j \right| &\leq \epsilon(3+\epsilon) \lambdatilde_j \left(\frac{\lambda_j}{\lambdatilde_j} \right)+ (1+\epsilon) \|A_{\backslash \r}\|_*\\
    &\leq   \left(\frac{\epsilon(3+\epsilon)}{1-\epsilon(2+\epsilon)}\right) \lambdatilde_j + \frac{(1+\epsilon)^2}{1-\epsilon(2+\epsilon)}\|A_{\backslash \r}\|_*\\
    &= \epsilon b_\epsilon \lambdatilde_j + c_{\epsilon,r}
\end{align*}
by Corollary~\ref{coro:Ratioboundtildeorig}.

It remains to upper bound both $b_\epsilon$ and $c_{\epsilon,r}$.  Since $b_\epsilon$ and $c_{\epsilon,r}$ are both increasing with respect to $\epsilon$ for $\forall \epsilon\in (0, \frac{\sqrt{2}-1}{2})$, they are upper bounded here by
\begin{align*}
    b_\epsilon &\leq \frac{3+\frac{\sqrt{2}-1}{2}}{1-\frac{\sqrt{2}-1}{2}(2+\frac{\sqrt{2}-1}{2})}\\
    &= \frac{12+2\sqrt{2}-2}{4-(\sqrt{2}-1)(4+\sqrt{2}-1)}
    = \frac{10+2\sqrt{2}}{5-2\sqrt{2}}
    < 6, ~{\rm and}\\
    c_{\epsilon,r} &\leq \frac{(1+\frac{\sqrt{2}-1}{2})^2}{1-(\frac{\sqrt{2}-1}{2})(2+\frac{\sqrt{2}-1}{2})} \|\A_{\backslash \r}\|_*
    = \frac{3+2\sqrt{2}}{5-2\sqrt{2}}\|\A_{\backslash \r}\|_*\\
    &<3\|\A_{\backslash \r}\|_*,
\end{align*}
respectively.
\end{proof}

The following corollary of Lemma~\ref{lem:distance_lambda_j_lambda_tilde_j} will be crucial in the next section.

\begin{corollary}\label{coro:spectralgapbounds}
Let $\A\in \mathbbm{C}^{\N\times\N}$, $\r\in[\N]$, and $\epsilon\in \left(0, \frac{\sqrt{2}-1}{2} \right)$. Suppose that $\M\in\mathbbm{C}^{\m\times\N}$ satisfies the assumptions of Theorem \ref{theorem:MAM} and define $b_\epsilon, c_{\epsilon,r}$ as in Lemma~\ref{lem:distance_lambda_j_lambda_tilde_j}.  Then, $\forall j \in [r]$
\begin{equation*}
    \frac{\lambdatilde_j-\lambdahat_j}{\lambdatilde_j} \leq \epsilon b_{\epsilon} + c_{\epsilon,r} \left(\frac{1}{\lambdatilde_j} \right) < 6 \epsilon + 3 \frac{\|\A_{\backslash \r}\|_*}{\lambdatilde_j},
\end{equation*}
and
\begin{align*}
    \frac{\lambdatilde_j}{\lambdahat_j} 
    \geq 1- \epsilon b_{\epsilon} - \frac{c_{\epsilon,r}}{\lambdatilde_j} > 1 - 6 \epsilon - 3 \frac{\|\A_{\backslash \r}\|_*}{\lambdatilde_j}.
\end{align*}
\end{corollary}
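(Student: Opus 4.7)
The plan is to derive both inequalities as direct consequences of Lemma~\ref{lem:distance_lambda_j_lambda_tilde_j}, which supplies the two-sided bound $|\lambdatilde_j - \lambdahat_j| \leq \epsilon b_\epsilon \lambdatilde_j + c_{\epsilon,r}$ together with the explicit numerical estimates $b_\epsilon < 6$ and $c_{\epsilon,r} < 3 \|\A_{\backslash \r}\|_*$ valid throughout $\epsilon \in (0, \tfrac{\sqrt{2}-1}{2})$. No new machinery is required; the argument is essentially an unfolding of the absolute value in the two useful directions followed by elementary algebraic rearrangements.

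For the first inequality, I would start from the one-sided consequence $\lambdatilde_j - \lambdahat_j \leq |\lambdatilde_j - \lambdahat_j| \leq \epsilon b_\epsilon \lambdatilde_j + c_{\epsilon,r}$ and simply divide through by $\lambdatilde_j > 0$ to obtain $\frac{\lambdatilde_j - \lambdahat_j}{\lambdatilde_j} \leq \epsilon b_\epsilon + \frac{c_{\epsilon,r}}{\lambdatilde_j}$. Substituting the explicit numerical bounds on $b_\epsilon$ and $c_{\epsilon,r}$ then yields the strict inequality $< 6\epsilon + 3 \|\A_{\backslash \r}\|_*/\lambdatilde_j$.

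For the second inequality, I would use the opposite direction of the absolute value: $\lambdahat_j - \lambdatilde_j \leq \epsilon b_\epsilon \lambdatilde_j + c_{\epsilon,r}$, which rearranges to $\lambdahat_j \leq (1 + \epsilon b_\epsilon) \lambdatilde_j + c_{\epsilon,r}$. Dividing gives
\begin{equation*}
\frac{\lambdatilde_j}{\lambdahat_j} \geq \frac{\lambdatilde_j}{(1+\epsilon b_\epsilon)\lambdatilde_j + c_{\epsilon,r}} = \frac{1}{1 + \epsilon b_\epsilon + c_{\epsilon,r}/\lambdatilde_j}.
\end{equation*}
The only mildly nonobvious step is then applying the elementary inequality $\frac{1}{1+x} \geq 1 - x$ for all $x > 0$ (equivalent to $1 \geq 1 - x^2$) with $x = \epsilon b_\epsilon + c_{\epsilon,r}/\lambdatilde_j$ to arrive at $\frac{\lambdatilde_j}{\lambdahat_j} \geq 1 - \epsilon b_\epsilon - c_{\epsilon,r}/\lambdatilde_j$, and once again the numerical estimates from Lemma~\ref{lem:distance_lambda_j_lambda_tilde_j} produce the strict lower bound $> 1 - 6\epsilon - 3 \|\A_{\backslash \r}\|_*/\lambdatilde_j$.

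There is no real obstacle here; if anything, the only point requiring a moment of care is the use of $\frac{1}{1+x} \geq 1 - x$ rather than passing through a first-order Taylor-type expansion, since we want a clean and dimension-free linear lower bound rather than a ratio. It is also worth noting that the division by $\lambdatilde_j$ is legitimate for $j \in [r]$ in the regime of interest, since the hypotheses of Theorem~\ref{theorem:MAM} combined with the assumption $\epsilon < \tfrac{\sqrt{2}-1}{2}$ keep $\lambdatilde_j$ strictly positive whenever $\lambda_j$ is (and the corollary is vacuous otherwise).
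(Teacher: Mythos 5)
Your proposal is correct and follows essentially the same route as the paper: both inequalities are unfolded from the two-sided bound of Lemma~\ref{lem:distance_lambda_j_lambda_tilde_j}, with the first obtained by dividing by $\lambdatilde_j$ and the second by rearranging to a bound on $\lambdahat_j$ and invoking $\frac{1}{1+x}\geq 1-x$, exactly as in the paper's argument (which merely organizes the same algebra as $\frac{\lambdatilde_j}{\lambdahat_j}\bigl(1+\epsilon b_\epsilon + \frac{c_{\epsilon,r}}{\lambdatilde_j}\bigr)\geq 1$ before applying the same elementary inequality). No gaps.
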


\begin{proof}
The first line of inequalities follow immediately from Lemma~\ref{lem:distance_lambda_j_lambda_tilde_j}.  To obtain the second line of inequalities we note from Lemma~\ref{lem:distance_lambda_j_lambda_tilde_j} that 
\begin{align*}
    1 - \frac{\lambdatilde_j}{\lambdahat_j} ~\leq \epsilon b_{\epsilon} \frac{\lambdatilde_j}{\lambdahat_j} + \frac{c_{\epsilon,r}}{\lambdahat_j} ~=~ \epsilon b_{\epsilon} \frac{\lambdatilde_j}{\lambdahat_j} + \frac{c_{\epsilon,r}}{\lambdatilde_j} \frac{\lambdatilde_j}{\lambdahat_j}.
\end{align*}
Rearranging yields
\begin{align*}
    &\frac{\lambdatilde_j}{\lambdahat_j}\left(1 + \epsilon b_{\epsilon} + \frac{c_{\epsilon,r}}{\lambdatilde_j} \right)\geq 1.
\end{align*}
Dividing by $1 + \epsilon b_{\epsilon} + \frac{c_{\epsilon,r}}{\lambdatilde_j}$ and noting that $\frac{1}{1+\left(\epsilon b_{\epsilon} +\frac{c_{\epsilon,r}}{\lambdatilde_j}\right)}\geq 1-\left(\epsilon b_{\epsilon} + \frac{c_{\epsilon,r}}{\lambdatilde_j} \right)$ now finishes the proof. 
\end{proof}

 Having reviewed and utilized standard methods for approximating the singular values of a matrix $\A$ from $\M \A \M^*$ sketches, we will now consider how accurately we can recover compressive measurements of the eigenvectors of $\A$ from such sketches.  In particular, we will continue to assume that $A$ is a Hermitian and PSD matrix below.

\section{Obtaining Compressive Measurements of the Leading Eigenvectors of $\A$ from the Eigenvectors of $\M \A \M^*$.  How Close Are They?}
\label{sec:EigenvectorMeasApprox}

Recall that $\Atilde=\M\A\M^*$ and
\begin{align*}
    \A = \sum_{j=1}^\N \lambda_j\cdot\u_j\u_j^*, \quad
    \Atilde = \sum_{j=1}^\m \lambdatilde_j\cdot\utilde_j\utilde_j^* = \sum_{j=1}^N \lambdahat_j\cdot\uhat_j\uhat_j^*
\end{align*}
where
\begin{equation*}
    \lambdahat_j = \lambda_j\|\M\u_j\|^2_2,~~
    \uhat_j = \frac{\M\u_j}{\|M\u_j\|_2}.
\end{equation*}
We now study the relationship between $\utilde_j$ and $\uhat_j$. For notational simplicity, we define the \emph{relative gaps} between eigenvalues of $\Atilde$ as
\begin{equation}\label{eq:gap}
    g_j := \frac{\lambdatilde_j - \lambdatilde_{j+1}}{\lambdatilde_j}, \quad \forall j \in \{1,2,\cdots,m\}.
\end{equation}

Note that these relative gaps are directly computable from the sketch $\Atilde$, and that they should be close to the true relative gaps between the top eigenvectors of all approximately low-rank $\A$ by Theorem~\ref{theorem:MAM}.  The following theorem is proven in Section~\ref{PROOFOFthm:angle_lower_bound}.

\begin{theorem}
\label{thm:angle_lower_bound}
Let $\A\in \mathbbm{C}^{\N\times\N}$ be Hermitian and PSD, and $\epsilon\in \left(0, \frac{\sqrt{2}-1}{2} \right)$. Fix $\r\in[\N]$. Suppose $\lambdatilde_j$ are all distinct and that $\M \in\mathbbm{C}^{\m\times\N}$ satisfies the assumptions in Theorem \ref{theorem:MAM}. Define $\utilde_j$ and $\uhat_j$ as in \eqref{def:SVD_A_Atilde} and \eqref{def:lambda}. Then the following holds for every $j \in [r]$:
\begin{equation*}
    |\langle \utilde_j,\uhat_j \rangle|^2 
    \geq 1-6\epsilon \left(1+\frac{\kappa_j}{g_j} \right) - \frac{3\|\A_{\backslash \r}\|_*}{\lambdatilde_j}\left(1+\frac{\nu_j}{ g_j} \right),
\end{equation*}
where $g_j$ are the gaps defined as in \eqref{eq:gap}, and both $\nu_j$ and $\kappa_j$ are recursively defined via 
\begin{align*}
    \kappa_j&=1 + \frac{1}{\lambdatilde_j}\sum_{\ell=1}^{j-1}  \frac{\lambdatilde_\ell \kappa_\ell}{g_\ell} \quad {\rm and} \quad
    \nu_j = 1+\sum_{\ell=1}^{j-1} \frac{\nu_\ell}{g_\ell}.
\end{align*}
\end{theorem}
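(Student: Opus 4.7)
I would expand $\uhat_j$ in the orthonormal basis $\{\utilde_k\}_{k=1}^m$, writing $\uhat_j = \sum_{k=1}^m c_{k,j}\, \utilde_k$ with $c_{k,j} := \langle \uhat_j, \utilde_k \rangle$. Because $\|\uhat_j\|_2 = 1$, Parseval gives
\[
1 - |\langle \utilde_j, \uhat_j \rangle|^2 \;=\; \sum_{k \neq j} |c_{k,j}|^2,
\]
so lower-bounding the target inner product is equivalent to upper-bounding this off-diagonal energy. This is the quantity I would control directly.

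\textbf{Key identity.} The plan is to compute $\Atilde\uhat_j$ two ways. From the spectral decomposition $\Atilde = \sum_{k=1}^m \lambdatilde_k \utilde_k \utilde_k^*$, one gets $\Atilde \uhat_j = \sum_k \lambdatilde_k c_{k,j} \utilde_k$. From the non-orthogonal representation $\Atilde = \sum_{\ell=1}^N \lambdahat_\ell \uhat_\ell \uhat_\ell^*$ (valid since $\A$ is Hermitian PSD, so $\vhat_\ell = \uhat_\ell$), one gets $\Atilde \uhat_j = \sum_\ell \lambdahat_\ell \langle \uhat_j, \uhat_\ell \rangle \uhat_\ell$. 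Subtracting $\lambdahat_j \uhat_j$ from both and matching the $\utilde_k$-coefficient yields, for each $k \neq j$,
\[
c_{k,j}(\lambdatilde_k - \lambdahat_j) \;=\; \sum_{\ell \neq j} \lambdahat_\ell \, \langle \uhat_j, \uhat_\ell \rangle \, c_{k,\ell}.
\]
This is the central equation from which everything else is squeezed.

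\textbf{Ingredients.} I would then assemble three pieces. (i) A lower bound on $|\lambdatilde_k - \lambdahat_j|$ for $k \neq j$: combining the relative gap $g_j$ (separating $\lambdatilde_j$ from $\lambdatilde_{j+1}$) with the bound $|\lambdatilde_j - \lambdahat_j| \leq 6\epsilon\, \lambdatilde_j + 3\|\A_{\backslash \r}\|_*$ from Corollary~\ref{coro:spectralgapbounds} will produce the factor $g_j$ in the denominator of the final bound. (ii) Approximate orthogonality $|\langle \uhat_j, \uhat_\ell \rangle| \lesssim \epsilon$ for distinct indices $\ell, j \leq r$, obtained by applying (complex) polarization to the $\epsilon$-JL property of $\M$ on the leading singular subspaces, together with a collective residual of size $\|\A_{\backslash \r}\|_*/\lambdatilde_j$ absorbing the unresolved tail $\ell > r$. (iii) Corollary~\ref{coro:spectralgapbounds} again to transfer bounds between $\lambdahat_\ell$ and $\lambdatilde_\ell$.

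\textbf{Induction and main obstacle.} The recursive shape of $\kappa_j$ and $\nu_j$ strongly indicates induction on $j$. For the $\ell < j$ appearing on the right-hand side of the key identity, the diagonal coefficient $c_{\ell,\ell}$ is nearly unimodular by the inductive hypothesis, while the off-diagonal entries $c_{k,\ell}$ with $k \neq \ell$ are controlled by the inductive upper bound on $\sum_{k \neq \ell}|c_{k,\ell}|^2$; after a triangle-inequality aggregation and division by the gap factor $g_j$ (from ingredient (i)), these feed in as the multiplicative corrections $\tfrac{\lambdatilde_\ell \kappa_\ell}{g_\ell}$ and $\tfrac{\nu_\ell}{g_\ell}$ that define $\kappa_j$ and $\nu_j$. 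I expect the main obstacle to be the bookkeeping: choosing Cauchy--Schwarz versus triangle inequalities carefully enough that the inductive step closes with exactly the constants $6\epsilon$ and $3\|\A_{\backslash \r}\|_*/\lambdatilde_j$ inherited from Corollary~\ref{coro:spectralgapbounds}, and separating the tail $\ell > r$ (where individual control of $\lambdahat_\ell$ is impossible and only the aggregate $\|\A_{\backslash \r}\|_*$ is available) from the head $\ell \leq r$. That split is exactly what forces the two additive terms, one governed by $\epsilon$ and one by $\|\A_{\backslash \r}\|_*/\lambdatilde_j$, to appear as distinct contributions in the final bound.
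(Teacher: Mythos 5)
Your setup (expanding $\uhat_j$ in the orthonormal eigenbasis $\{\utilde_k\}$, Parseval, and the resolvent-type identity $(\lambdatilde_k-\lambdahat_j)c_{k,j}=\sum_{\ell\neq j}\lambdahat_\ell\langle\uhat_\ell,\uhat_j\rangle c_{k,\ell}$) is correct and is genuinely different from the paper's route: the paper never expands in the eigenbasis and never uses approximate orthogonality of the $\uhat_\ell$. Instead it writes $\Sigmatilde=\R\R^*$ with $\R=\Utilde^*\Uhat$, uses distinctness of the $\lambdatilde_j$ to force the SVD $\R=\D\W^*$ with $\D=\Sigmatilde^{1/2}$, obtains the exact identities $|\langle\utilde_j,\uhat_j\rangle|^2=(\lambdatilde_j/\lambdahat_j)|\W_{j,j}|^2$ and $\lambdahat_j=\sum_k\lambdatilde_k|\W_{j,k}|^2$, and then runs an induction that needs only the unitarity of $\W$ together with the eigenvalue bounds of Corollary~\ref{coro:spectralgapbounds}. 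However, your plan as described has a genuine gap, and it sits exactly where the recursion for $\kappa_j,\nu_j$ comes from.

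Concretely, your ingredient (i) is false for $k<j$: the relative gap $g_j$ only separates $\lambdatilde_j$ from $\lambdatilde_{j+1}$, so it lower-bounds $|\lambdatilde_k-\lambdahat_j|$ only for $k>j$. For $k<j$ the denominator is governed by the gap \emph{above} $\lambdatilde_j$ (i.e.\ $\lambdatilde_{j-1}-\lambdatilde_j$, encoded in $g_{j-1}$), and nothing in the hypotheses prevents this from being smaller than the perturbation $|\lambdatilde_j-\lambdahat_j|\le \epsilon b_\epsilon\lambdatilde_j+c_{\epsilon,r}$, so those resolvent denominators can even vanish. The entire purpose of the recursion defining $\kappa_j$ and $\nu_j$ is to avoid ever dividing by the gap above: in the paper this is achieved by the column normalization of the unitary $\W$, namely $|\W_{j,k}|^2\le 1-|\W_{k,k}|^2$ for $k<j$, which lets the already-established bounds at indices $\ell<j$ absorb the upper spectral components; your coefficient matrix $c_{k,\ell}=\langle\utilde_k,\uhat_\ell\rangle$ has unit columns but is not unitary, so this row/column symmetry is unavailable and your sketch offers no substitute. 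A second problem is quantitative: aggregating the head terms by the triangle inequality yields a contribution of order $\epsilon\sum_{\ell\le r,\,\ell\neq j}\lambdahat_\ell$ divided by the gap, which can vastly exceed the claimed $\epsilon\,\lambdatilde_j\kappa_j$-type quantity (e.g.\ many nearly equal eigenvalues just below $\lambda_j$ inflate it proportionally to their number, while the stated bound has no such dependence); the paper avoids this by absorbing the whole lower block in one stroke via the row normalization $\sum_{k\le j}|\W_{j,k}|^2\le 1$ paired with $\lambdatilde_k\le\lambdatilde_{j+1}$ for $k>j$. So, as written, your induction would not close to the stated bound; you would need either additional gap-above assumptions or a mechanism playing the role of the unitarity of $\W$.
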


Note that if $\|\A_{\backslash \r}\|_*$ and $\epsilon$ are both small then $\Atilde$'s $j^{\rm th}$ eigenvector $\utilde_j \approx \frac{\M\u_j}{\|M\u_j\|_2} \approx (1 \pm \mathcal{O}(\epsilon)) M\u_j$.  That is, $\utilde_j$ will provide an accurate compressive measurement of its associated eigenvector of $\A$.  The following corollary of Theorem~\ref{thm:angle_lower_bound} makes this precise with respect to Euclidean distance.

\begin{corollary} \label{Coro:approxerrorinEigenvectors}
Let $\A\in \mathbbm{C}^{\N\times\N}$ be Hermitian and PSD, and $\epsilon\in \left(0, \frac{\sqrt{2}-1}{2} \right)$. Fix $\r\in[\N]$. Suppose $\lambdatilde_j$ are all distinct and that $\M \in\mathbbm{C}^{\m\times\N}$ satisfies the assumptions in Theorem \ref{theorem:MAM}. Define $\utilde_j$ as in \eqref{def:SVD_A_Atilde}, $\u_j$ as in \eqref{def:SVD_A}, $g_j$ as in \eqref{eq:gap}, and both $\nu_j$ and $\kappa_j$ as in Theorem~\ref{thm:angle_lower_bound}.  Then,
\begin{align*}
    \min_{\phi \in [0,2\pi)} \| \mathbbm{e}^{\mathbbm{i} \phi} \utilde_j - M \u_j \|_2 < \sqrt{12\epsilon \left(\frac{5}{4}+\frac{\kappa_j}{g_j} \right) + \frac{6\|\A_{\backslash \r}\|_*}{\lambdatilde_j}\left(1+\frac{\nu_j}{ g_j} \right)}
\end{align*}
holds for all $j \in [r]$.
\end{corollary}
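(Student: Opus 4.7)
}
The natural first move is to convert the angle bound of Theorem~\ref{thm:angle_lower_bound} into a Euclidean distance bound. Since $\utilde_j$ is a unit vector and $M\u_j = \|M\u_j\|_2\,\uhat_j$ by the very definition of $\uhat_j$, the standard identity
\begin{equation*}
\min_{\phi\in[0,2\pi)} \|\mathbbm{e}^{\mathbbm{i}\phi}\utilde_j - M\u_j\|_2^2
= 1 + \|M\u_j\|_2^2 - 2\|M\u_j\|_2\,|\langle\utilde_j,\uhat_j\rangle|
\end{equation*}
follows by expanding the inner product and optimally choosing the phase $\phi$ to align $\mathbbm{e}^{\mathbbm{i}\phi}\langle\utilde_j,M\u_j\rangle$ with the positive real axis. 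Everything then reduces to bounding two scalar quantities: the ``length distortion'' $\|M\u_j\|_2$ and the ``angle'' $|\langle\utilde_j,\uhat_j\rangle|$.

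The first is handled directly by the JL assumption of Theorem~\ref{theorem:MAM}, which (since $\u_j$ for $j\in[r]$ lies in the column space of $\A_r$) gives $1-\epsilon\le\|M\u_j\|_2\le 1+\epsilon$. The second is exactly the content of Theorem~\ref{thm:angle_lower_bound}: abbreviating
\[
\delta_j := 6\epsilon\bigl(1+\tfrac{\kappa_j}{g_j}\bigr) + \tfrac{3\|\A_{\backslash r}\|_*}{\lambdatilde_j}\bigl(1+\tfrac{\nu_j}{g_j}\bigr),
\]
we have $|\langle\utilde_j,\uhat_j\rangle|\ge \sqrt{1-\delta_j}$ in the informative regime $\delta_j<1$. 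Plugging both bounds in, the right-hand side of the identity is viewed as $f(x,y)=1+x^2-2xy$ over $x\in[1-\epsilon,1+\epsilon]$, $y\ge\sqrt{1-\delta_j}$. Since $f$ decreases in $y$, we set $y=\sqrt{1-\delta_j}$; for this $y\le 1$, $f$ is then increasing in $x$ on the feasible interval, so the worst case is $x=1+\epsilon$, yielding
\begin{equation*}
\min_\phi \|\mathbbm{e}^{\mathbbm{i}\phi}\utilde_j-M\u_j\|_2^2 \;\le\; 2+2\epsilon+\epsilon^2 - 2(1+\epsilon)\sqrt{1-\delta_j}.
\end{equation*}

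To produce the clean target expression I apply the elementary inequality $\sqrt{1-\delta_j}\ge 1-\delta_j$ (valid for $\delta_j\in[0,1]$, strict on $(0,1)$), which collapses the display to $\epsilon^2 + 2\delta_j(1+\epsilon)$. Expanding $2\delta_j = 12\epsilon(1+\kappa_j/g_j) + \tfrac{6\|\A_{\backslash r}\|_*}{\lambdatilde_j}(1+\nu_j/g_j)$ and matching against the corollary's target $12\epsilon(\tfrac54+\tfrac{\kappa_j}{g_j})+\tfrac{6\|\A_{\backslash r}\|_*}{\lambdatilde_j}(1+\tfrac{\nu_j}{g_j})=3\epsilon+2\delta_j$, the desired inequality reduces to
\[
\epsilon^2 + 2\epsilon\delta_j \;<\; 3\epsilon \qquad\Longleftrightarrow\qquad \epsilon+2\delta_j<3,
\]
which is comfortably satisfied because $\epsilon<\tfrac{\sqrt{2}-1}{2}<\tfrac14$ and $\delta_j\le 1$ in the regime of interest. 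Taking square roots finishes the proof.

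The main conceptual obstacle is exactly this last bookkeeping step: the maximization over the JL interval inevitably contributes the extra $\epsilon^2+2\epsilon\delta_j$ term beyond the raw $2\delta_j$ from the angle bound, and this excess must be absorbed into the gap between the ``$\tfrac54$'' in the target and the ``$1$'' one naively produces. The fact that the JL parameter is restricted to $\epsilon<\tfrac{\sqrt{2}-1}{2}<\tfrac14$ is precisely what buys the slack $12\epsilon\cdot\tfrac14=3\epsilon$ needed here, so the $5/4$ coefficient in the statement is a direct reflection of the admissible range of $\epsilon$. A secondary subtlety is that one should verify the bound $\sqrt{1-\delta_j}\ge 1-\delta_j$ is used only where $\delta_j\le 1$; outside this regime Theorem~\ref{thm:angle_lower_bound} is vacuous and the corollary's statement must be interpreted accordingly, but this does not affect the proof in the meaningful range.
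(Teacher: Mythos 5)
Your proposal follows essentially the same route as the paper's proof: the same phase-optimized identity $\min_{\phi}\|\mathbbm{e}^{\mathbbm{i}\phi}\utilde_j-M\u_j\|_2^2=1+\|M\u_j\|_2^2-2\|M\u_j\|_2|\langle\utilde_j,\uhat_j\rangle|$, the JL control of $\|M\u_j\|_2$, the lower bound of Theorem~\ref{thm:angle_lower_bound}, and the same final bookkeeping in which the gap between the coefficient $\tfrac54$ and the naive $1$ absorbs an $\mathcal{O}(\epsilon)$ excess; the paper merely organizes the middle algebra differently (it bounds $2-2x\le(2-2x)(1+x)=2-2x^2$ with $x=\|M\u_j\|_2|\langle\utilde_j,\uhat_j\rangle|$ and then inserts the \emph{squared} JL and angle bounds, avoiding your square root and the inequality $\sqrt{1-\delta_j}\ge 1-\delta_j$). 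One loose end in your write-up: the case $\delta_j\ge 1$ should not be dismissed by saying the statement ``must be interpreted accordingly''---the corollary is claimed (and true) unconditionally for all $j\in[r]$, and the fix is one line: if $\delta_j\ge 1$ then $\min_\phi\|\mathbbm{e}^{\mathbbm{i}\phi}\utilde_j-M\u_j\|_2^2\le 1+\|M\u_j\|_2^2\le 2+\epsilon<3\epsilon+2\delta_j$, so the bound holds trivially. Also, your claim that $f(x,y)=1+x^2-2xy$ is ``increasing in $x$'' on $[1-\epsilon,1+\epsilon]$ is only guaranteed when $y\le 1-\epsilon$; the correct (and sufficient) observation is that the maximum of this convex parabola over the interval is attained at an endpoint and $f(1+\epsilon,y)-f(1-\epsilon,y)=4\epsilon(1-y)\ge 0$ since $y\le 1$, so your endpoint conclusion stands.
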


\begin{proof}
Fix $j \in [r]$ and let $\phi' \in [0,2\pi)$ be such that $\langle \mathbbm{e}^{\mathbbm{i} \phi'} \utilde_j, M \u_j \rangle = | \langle \utilde_j, M \u_j \rangle|  = \| M \u_j \|_2 | \langle \utilde_j, \uhat_j \rangle|$.  Then,
\begin{align*}
    \| \mathbbm{e}^{\mathbbm{i} \phi'} \utilde_j - M \u_j \|_2^2 &= 1 + \|M \u_j\|_2^2 - 2 \langle \mathbbm{e}^{\mathbbm{i} \phi'} \utilde_j, M \u_j \rangle \\
    &= 1 + \|M \u_j\|_2^2 - 2 \| M \u_j \|_2 | \langle \utilde_j, \uhat_j \rangle|\\
    &\leq 2 - 2 \| M \u_j \|_2 | \langle \utilde_j, \uhat_j \rangle| + \epsilon.
\end{align*}
Using this last line we can now in fact see that $\| \mathbbm{e}^{\mathbbm{i} \phi'} \utilde_j - M \u_j \|_2^2$ is upper bounded by $\left(2 - 2 \| M \u_j \|_2 | \langle \utilde_j, \uhat_j \rangle| \right) \left( 1 + \| M \u_j \|_2 | \langle \utilde_j, \uhat_j \rangle| \right)+ \epsilon.$
Simplifying this last expression we finally learn that 
\begin{align*}
\| \mathbbm{e}^{\mathbbm{i} \phi'} \utilde_j - M \u_j \|_2^2 &\leq 2 - 2 \| M \u_j \|_2^2 | \langle \utilde_j, \uhat_j \rangle|^2 + \epsilon\\
&\leq 2 - 2 (1 - \epsilon) \left( 1-6\epsilon \left(1+\frac{\kappa_j}{g_j} \right) - \frac{3\|\A_{\backslash \r}\|_*}{\lambdatilde_j}\left(1+\frac{\nu_j}{ g_j} \right)\right)+ \epsilon\\
&= 2(1-\epsilon)\left( 6\epsilon \left(1+\frac{\kappa_j}{g_j} \right) + \frac{3\|\A_{\backslash \r}\|_*}{\lambdatilde_j}\left(1+\frac{\nu_j}{ g_j} \right)\right) + 3 \epsilon\\
&< 12\epsilon \left(\frac{5}{4}+\frac{\kappa_j}{g_j} \right) + \frac{6\|\A_{\backslash \r}\|_*}{\lambdatilde_j}\left(1+\frac{\nu_j}{ g_j} \right).
\end{align*}
The result now follows after taking a square root. \end{proof}

In order to get a better sense for how well Corollary~\ref{Coro:approxerrorinEigenvectors} is guaranteed to work in practice, however, it'd be more convenient to have the quantities $\kappa_j, \nu_j$ from Theorem~\ref{thm:angle_lower_bound} stated in terms of the spectral properties of $\A$ instead of $\Atilde$.  We will now do this next for a particular class of matrices $\A$ that have exponentially decaying eigenvalues.  The following theorem is proven in Section~\ref{PROOFOFMainTHM:ExpDecayingSingVlaues}.

\begin{theorem} \label{MainTHM:ExpDecayingSingVlaues}
Let $q \in (0,1/3)$, $c \in [1,\infty)$, $\epsilon < \min \left\{ \frac{1}{20}, \frac{1}{4}\left(\frac{1-3q}{1+q}\right) \right\}$, and $\ell, r \in [N]$ with $2 \leq \ell \leq r$. Suppose that $\A\in \mathbbm{C}^{\N\times\N}$ is Hermitian and PSD with eigenvalues satisfying
\begin{enumerate}
    \item $\lambda_j = c q^j$ for all $j \in [\ell] \subseteq [r]$, and
    \item $\|\A_{\backslash \r}\|_* \leq \epsilon \lambda_\ell$.
\end{enumerate}
In addition, suppose that $\M \in\mathbbm{C}^{\m\times\N}$ satisfies the assumptions in Theorem \ref{theorem:MAM}, and that $\utilde_j$ and $\u_j$ are defined as in \eqref{def:SVD_A_Atilde} and \eqref{def:SVD_A}.  Then, \begin{align*}
    \min_{\phi \in [0,2\pi)} \| \mathbbm{e}^{\mathbbm{i} \phi} \utilde_j -& M \u_j \|_2 < 7 \sqrt{\epsilon} \cdot q^{1-j}
\end{align*}
holds for all $j \in [\ell]$.
\end{theorem}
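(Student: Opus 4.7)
The plan is to apply Corollary~\ref{Coro:approxerrorinEigenvectors} and bound each ingredient on its right-hand side ($\tilde{\lambda}_j$, $g_j$, $\kappa_j$, $\nu_j$, and $\|A_{\backslash r}\|_{\ast}/\tilde{\lambda}_j$) in terms of $\epsilon$ and $q$, so that every term under the square root collapses to a constant multiple of $\epsilon q^{2(1-j)}$. Squaring the target $E_j < 7\sqrt{\epsilon}\,q^{1-j}$ and sizing the constants carefully then finishes the proof.

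First I would set $\delta := \epsilon(3+2\epsilon)$ and apply Theorem~\ref{theorem:MAM} to each $j \in [\ell]$, using $\|A_{\backslash r}\|_{\ast} \leq \epsilon \lambda_\ell \leq \epsilon \lambda_j$, to obtain the two-sided multiplicative sandwich $(1-\delta)\lambda_j \leq \tilde{\lambda}_j \leq (1+\delta)\lambda_j$. Combining the upper bound $\tilde{\lambda}_{j+1} \leq (1+\delta)q\lambda_j$ with the lower bound on $\tilde{\lambda}_j$ and using $\lambda_{j+1}=q\lambda_j$ then yields a uniform lower bound on the relative gap
\[
g_j \;\geq\; g^* \;:=\; \frac{1-\delta}{1+\delta} - q,
\]
valid for every $j$ in the range where both $\tilde{\lambda}_j$ and $\tilde{\lambda}_{j+1}$ enjoy Theorem~\ref{theorem:MAM}-style control.

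The crucial quantitative step is the algebraic identity $g^* \geq q/(1-q)$, equivalently $\alpha := 1 + 1/g^* \leq 1/q$. Clearing denominators reduces this to $1 - 3q + q^2 \geq \delta(1+q-q^2)$, and since $\epsilon < (1-3q)/(4(1+q))$ together with $3+2\epsilon < 4$ forces $\delta(1+q) < 1-3q$, what remains is the elementary identity $(1+q)(1-3q+q^2) - (1-3q)(1+q-q^2) = 2q^2(1-q) \geq 0$. With $\alpha \leq 1/q$ in hand I would unwind the recursions: the substitution $\mu_j := \tilde{\lambda}_j \kappa_j$ rewrites the $\kappa$-recursion as $\mu_j = \tilde{\lambda}_j + \sum_{i<j} \mu_i/g_i$, and using $g_i \geq g^*$ together with the monotonicity $\tilde{\lambda}_j \leq \tilde{\lambda}_{j-1}$ telescopes to $\mu_j \leq \alpha^{j-1}\tilde{\lambda}_1$. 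This yields $\kappa_j \leq \alpha^{j-1}\tilde{\lambda}_1/\tilde{\lambda}_j \leq C_\epsilon \alpha^{j-1} q^{1-j} \leq C_\epsilon q^{2(1-j)}$ with $C_\epsilon := (1+\delta)/(1-\delta)$, and an analogous telescoping for $\nu_j$ gives $\nu_j \leq \alpha^{j-1} \leq q^{1-j}$.

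Finally I would plug these bounds into Corollary~\ref{Coro:approxerrorinEigenvectors}, along with $\|A_{\backslash r}\|_{\ast}/\tilde{\lambda}_j \leq \epsilon q^{\ell-j}/(1-\delta) \leq (1-\delta)^{-1}\epsilon q^{2(1-j)}$ (using $q^{\ell-j} \leq q^{2(1-j)}$, which holds whenever $\ell+j \geq 2$, and is satisfied for $\ell \geq 2,\ j \geq 1$). Every term inside the square root then takes the form of a constant times $\epsilon q^{2(1-j)}$, and the hypotheses $\epsilon < 1/20$ and $\epsilon < (1-3q)/(4(1+q))$ keep the aggregate constant strictly below $49 = 7^2$. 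The main obstacle is the sharp algebraic bound $\alpha \leq 1/q$: this is exactly the condition that forces $\alpha^{j-1} \leq q^{1-j}$, and without it the geometric factors in the bounds on $\kappa_j$ and $\nu_j$ would outpace the target decay rate. A secondary technical issue is the boundary case $j = \ell$: the gap $g_\ell$ needs a separate bound, obtained by combining $\sigma_{\ell+1}(MA_\ell M^*) = 0$ with the perturbation estimate $\sigma_1(MA_{\backslash \ell}M^*) \leq (1+\epsilon)\|A_{\backslash \ell}\|_{\ast}$ from the proof of Theorem~\ref{theorem:MAM}, to ensure $g_\ell$ remains bounded below by a comparable constant.
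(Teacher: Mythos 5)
Your overall strategy is the same as the paper's: use Theorem~\ref{theorem:MAM} to get a multiplicative sandwich $(1\pm\delta)\lambda_j$ on $\tilde{\lambda}_j$, deduce a uniform lower bound on the relative gaps $g_j$, control the recursions for $\kappa_j$ and $\nu_j$ geometrically with ratio at most $1/q$, and substitute into Corollary~\ref{Coro:approxerrorinEigenvectors}. Your execution differs only mildly and pleasantly: you unify the two recursions through a single telescoping with $\alpha=1+1/g^\ast\le 1/q$ (your reduction of $g^\ast\ge q/(1-q)$ to $(1+q)(1-3q+q^2)-(1-3q)(1+q-q^2)=2q^2(1-q)\ge 0$ is correct and is exactly where the hypothesis $\epsilon<\tfrac{1-3q}{4(1+q)}$ enters), whereas the paper runs two separate inductions giving $\kappa_j\le q^{2-2j}$ and $\nu_j\le 3^{j-1}$ together with the gap bound $g_j>1/2$.

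Two concrete caveats. First, the final constant accounting is asserted rather than checked, and with the uniform bounds exactly as you state them ($\kappa_j\le C_\epsilon q^{2(1-j)}$ with $C_\epsilon=\tfrac{1+\delta}{1-\delta}$, $\nu_j\le q^{1-j}$, $1/g_j\le 1/g^\ast$) the aggregate constant does \emph{not} stay below $49$: at the corner $q\approx 1/4$, $\epsilon\approx 1/20$ one gets $\delta\approx 0.155$, $g^\ast\approx 0.48$, and $15+12C_\epsilon/g^\ast+6/(1-\delta)+\dots\approx 55$--$60$. The paper avoids this by proving the cleaner $\kappa_j\le q^{2-2j}$ (no $C_\epsilon$ factor) and $g_j>1/2$, landing at $46.5$. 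Your route still closes, but only if you retain the sharper ingredients you actually derived (the $\alpha^{j-1}$ decay rather than its weakening to $q^{1-j}$, or the un-weakened gap bound $g_j\ge 1-\tfrac{(1+\delta)q}{1-\delta}$, and $\kappa_1=1$ exactly at $j=1$); so "sizing the constants carefully" is a genuine obligation, not a formality. Second, your proposed patch for $j=\ell$ does not work as stated: $\|A_{\backslash \ell}\|_*$ is not controlled by the hypotheses (only $\|A_{\backslash r}\|_*$ is, and $\lambda_{\ell+1},\dots,\lambda_r$ may each be as large as $\lambda_\ell$), so bounding $\tilde{\lambda}_{\ell+1}$ by $(1+\epsilon)\|A_{\backslash\ell}\|_*$ gives nothing unless $\ell=r$. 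To be fair, the paper's own proof silently assumes the same kind of control on $\tilde{\lambda}_{\ell+1}$ when it claims $g_\ell>1/2$, so you are at no disadvantage here — indeed you deserve credit for flagging the endpoint — but the specific fix you sketch is not valid under the stated assumptions.
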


\subsection{Proof of Theorem~\ref{thm:angle_lower_bound}} \label{PROOFOFthm:angle_lower_bound}

Since $\A$ is PSD, $\M\A\M^*$ is also PSD. Recall that an SVD of $\Atilde$ is given by $\Atilde=\Utilde\Sigmatilde\Utilde^*$; in analogy we set $\Uhat\in\mathbbm{C}^{\m\times\N}$ to be the matrices whose $j$-th columns are $\sqrt{\lambdahat_j}\uhat_j$, so that $\Atilde = \Uhat\Uhat^*$. Note that $\Utilde$ is orthogonal but $\Uhat$ is not. By equating the two expressions of $\Atilde$ we know that
\begin{equation*}
    \Utilde\Sigmatilde\Utilde^* = \Uhat\Uhat^*.
\end{equation*}
Together with orthogonality of $\Utilde$ we obtain
\begin{equation*}
    \Sigmatilde = \Utilde^*\Uhat\Uhat^*\Utilde =: \R\R^*.
\end{equation*}
Let $\R=\Q\D\W^*$ be an SVD where $\Q\in\mathbbm{C}^{\m\times\m}$ and $\W\in\mathbbm{C}^{\N\times\N}$ are orthogonal, and $\D\in\mathbbm{R}^{\m\times\N}$ is diagonal. Then $\Q$ and $\D$ must satisfy the relation $\Sigmatilde = \Q\D^2\Q^*$. From the last relation we deduce that, since the diagonal entries of $\Sigmatilde$ are distinct,
\begin{align*}
    Q = I \text{ and } D_{ij} = \begin{cases}
      \sqrt{\Sigmatilde_{ij}}\quad\text{if}\quad 1 \le i=j \le m\\
      0 \qquad\quad \text{else}
    \end{cases}.
\end{align*}
By equating the definition of $\R$ ($=\Utilde^*\Uhat$) and its SVD ($\D\W^*$), we get that
\begin{align*}
    \sqrt{\lambdahat_j}\langle \utilde_j, \uhat_j \rangle
    = (\Utilde^*\Uhat)_{j,j}
    = \R_{j,j}
    = (\D\W^*)_{j,j}
    = \sqrt{\lambdatilde_j}\W_{j,j}^*
\end{align*}
where we recall that $\lambdatilde_j:=\sigma_j(\Atilde)$.  Thus
\begin{equation}
    |\langle \utilde_j, \uhat_j \rangle|^2 = \frac{\lambdatilde_j}{\lambdahat_j}|\W_{j,j}|^2. \label{equ:InnerProdFormula}
\end{equation}
Since the fraction $\frac{\lambdatilde_j}{\lambdahat_j}$ has already been bounded in Corollary~\ref{coro:spectralgapbounds}, we will now focus on deriving a lower bound for $|W_{j,j}|^2$. 

To begin we note that 
\begin{align*}
\Uhat & = \Utilde\R \qquad \text{ where }\quad \Utilde \text{ is orthogonal and }\quad \R=\D\W^*. 
\end{align*}
In addition we have
\begin{align*}
    (D^*D)_{ij} = \begin{cases}
        \Sigmatilde_{ii} &\quad \text{if}\quad 1 \le i = j \le m \\
        0 ,& \quad \text{else}
                \end{cases}.
\end{align*}
Hence, we have that
\begin{align}
    \lambdahat_j
    &= (\Uhat^*\Uhat)_{j,j}
    = (\R^*\Utilde^*\Utilde\R)_{j,j}
    = (\W\D^*\D\W^*)_{j,j} \nonumber \\
    &= \sum_{k=1}^\m \lambdatilde_k |W_{j,k}|^2. \label{equ:forWjjbound}
\end{align}
We will now proceed to use $\lambdahat_j$ expressed in terms of the entries of $\W$ to bound $W_{j,j}$ in terms of eigenvalue data of $\Atilde$ and $\A$.

Rewriting \eqref{equ:forWjjbound} we have
\begin{equation*}
    \lambdahat_j = \sum_{k=1}^{j-1} \lambdatilde_k |W_{j,k}|^2 + \lambdatilde_j |W_{j,j}|^2 + \sum_{k=j+1}^\m \lambdatilde_k |W_{j,k}|^2,
\end{equation*}
which can be upper bounded by
\begin{align*}
    \lambdahat_j
    &\leq \sum_{k=1}^{j-1} \lambdatilde_k|W_{j,k}|^2 + \lambdatilde_j |W_{j,j}|^2 + \lambdatilde_{j+1}\left(1 - \sum_{k=1}^{j} |W_{j,k}|^2 \right)\\
    &\leq \sum_{k=1}^{j-1} \lambdatilde_k(1-|W_{k,k}|^2)  + \lambdatilde_j |W_{j,j}|^2 + \lambdatilde_{j+1}(1 - |W_{j,j}|^2)
\end{align*}
due to the orthonormality of $W$.
Furthermore, since
\begin{align*}
    \lambdatilde_j |W_{j,j}|^2 &+ \lambdatilde_{j+1}(1 - |W_{j,j}|^2)\\
    &= \lambdatilde_j +  \lambdatilde_j(|W_{j,j}|^2-1) + \lambdatilde_{j+1}(1 - |W_{j,j}|^2)\\
    &= \lambdatilde_j - (\lambdatilde_j-\lambdatilde_{j+1})(1 - |W_{j,j}|^2)\\
    &= \lambdatilde_j - \lambdatilde_jg_j(1 - |W_{j,j}|^2)
\end{align*}
we have that
\begin{equation*}
    \lambdahat_j
    \leq \sum_{k=1}^{j-1} \lambdatilde_k(1-|W_{k,k}|^2)  + \lambdatilde_j - \lambdatilde_jg_j(1 - |W_{j,j}|^2),
\end{equation*}
or equivalently,
\begin{equation}\label{eq:lambt_upper_bound_2}
    g_j(1 - |W_{j,j}|^2)
    \leq \frac{1}{\lambdatilde_j}\sum_{k=1}^{j-1}\lambdatilde_k (1-|W_{k,k}|^2)  + \frac{\lambdatilde_j-\lambdahat_j}{\lambdatilde_j}.
\end{equation}
We will now show by induction that for all $j\in[r]$,
\begin{align}
    g_j(1 - |W_{j,j}|^2) \leq \epsilon b_\epsilon \kappa_j + c_{\epsilon,r} \left(\frac{\nu_j}{\lambdatilde_j}\right) \label{eq:inductive_hypothesis}
\end{align}
where $b_\epsilon$ and $c_{\epsilon,r}$ are as in Lemma~\ref{lem:distance_lambda_j_lambda_tilde_j}.

For $j=1$ Corollary~\ref{coro:spectralgapbounds} together with \eqref{eq:lambt_upper_bound_2} immediately implies that  
\begin{align*}
    g_1(1 - |W_{1,1}|^2) &\leq \frac{\lambdatilde_1 - \lambdahat_1}{\lambdatilde_1} \leq \epsilon b_{\epsilon} + c_{\epsilon,r} \left(\frac{1}{\lambdatilde_1} \right)\\
    &\leq \epsilon b_\epsilon \kappa_1 + c_{\epsilon,r} \left(\frac{\nu_1}{\lambdatilde_1} \right).
\end{align*}
Now suppose that \eqref{eq:inductive_hypothesis} holds for all $\ell \in[j-1]$. That is,
\begin{align*}
   (1-|W_{\ell,\ell}|^2) \leq \epsilon b_\epsilon \frac{\kappa_\ell}{g_\ell} + c_{\epsilon,r} \frac{\nu_\ell}{\lambdatilde_\ell g_\ell} \quad \forall \ell \in [j-1].
 \end{align*}
Then by \eqref{eq:lambt_upper_bound_2} and Corollary~\ref{coro:spectralgapbounds} we obtain
\begin{align*}
    g_j(1 &- |W_{j,j}|^2) \leq \frac{1}{\lambdatilde_j}\sum_{\ell=1}^{j-1} \lambdatilde_\ell(1-|W_{\ell,\ell}|^2)  + \frac{\lambdatilde_j-\lambdahat_j}{\lambdatilde_j}\\
    &\leq \frac{\epsilon b_\epsilon}{\lambdatilde_j} \left(\sum_{\ell=1}^{j-1}  \frac{\lambdatilde_\ell \kappa_\ell}{g_\ell} \right)+ \frac{c_{\epsilon,r}}{\lambdatilde_j} \left(\sum_{\ell=1}^{j-1} \frac{\lambdatilde_\ell \nu_\ell}{\lambdatilde_\ell g_\ell}\right) + \frac{\lambdatilde_j - \lambdahat_j}{\lambdatilde_j}\\
    &\leq \frac{\epsilon b_\epsilon}{\lambdatilde_j}\left(\sum_{\ell=1}^{j-1}  \frac{\lambdatilde_\ell \kappa_\ell}{g_\ell} \right) + \frac{c_{\epsilon,r}}{\lambdatilde_j} \left(\sum_{\ell=1}^{j-1} \frac{\nu_\ell}{g_\ell} \right) + \epsilon b_\epsilon + \frac{c_{\epsilon,r}}{\lambdatilde_j}\\
    &\leq \epsilon b_\epsilon \left(1 + \frac{1}{\lambdatilde_j}\sum_{\ell=1}^{j-1}  \frac{\lambdatilde_\ell \kappa_\ell}{g_\ell}\right) + \frac{c_{\epsilon,r}}{\lambdatilde_j}\left(1+\sum_{\ell=1}^{j-1} \frac{\nu_\ell}{g_\ell}\right)\\
    &= \epsilon b_\epsilon \kappa_j + c_{\epsilon,r}\frac{\nu_j}{\lambdatilde_j}.
\end{align*}
Hence, \eqref{eq:inductive_hypothesis} holds for all $j \in [r]$.

Rearranging \eqref{eq:inductive_hypothesis} we can now see that
\begin{equation*}
     |W_{j,j}|^2 \geq 1-\epsilon b_\epsilon \frac{\kappa_j}{g_j} - c_{\epsilon,r}\frac{\nu_j}{\lambdatilde_j g_j}
\end{equation*}
holds for all $j \in [r]$.  Combining this with \eqref{equ:InnerProdFormula} and Corollary~\ref{coro:spectralgapbounds} we obtain the lower bound 
\begin{align*}
     |\langle \utilde_j, \uhat_j \rangle|^2 &= \frac{\lambdatilde_j}{\lambdahat_j}|\W_{j,j}|^2\\
     &\geq \left( 1- \epsilon b_{\epsilon} - \frac{c_{\epsilon,r}}{\lambdatilde_j} \right)\left(1-\epsilon b_\epsilon \frac{\kappa_j}{g_j} - c_{\epsilon,r}\frac{\nu_j}{\lambdatilde_j g_j}\right)\\
     &\geq 1 -\epsilon b_\epsilon \left(1+\frac{\kappa_j}{g_j} \right) - \frac{c_{\epsilon,r}}{\lambdatilde_j}\left(1+\frac{\nu_j}{ g_j} \right)
\end{align*}
where in the last line we have used that $(1-a_1)(1-a_2) \geq 1 - a_1 - a_1$ when $a_1,a_2\geq 0$. 
Substituting the upper bounds for $b_\epsilon$ and $c_{\epsilon,r}$ from Lemma~\ref{lem:distance_lambda_j_lambda_tilde_j} now finishes the proof.

\subsection{Proof of Theorem~\ref{MainTHM:ExpDecayingSingVlaues}} \label{PROOFOFMainTHM:ExpDecayingSingVlaues}

Since, e.g., $\epsilon < \frac{1}{2}$ and $\|\A_{\backslash \r}\|_* \leq \epsilon \lambda_\ell$, Theorem~\ref{theorem:MAM} implies that $|\lambdatilde_j - \lambda_j|
        \leq 4 \epsilon \lambda_j$
holds for all $j \in [\ell]$.  In addition, $\epsilon < \frac{1}{4}\left(\frac{1-3q}{1+q}\right) \iff 2q + 8 \epsilon q < 1 - 4 \epsilon - q + 4 \epsilon q$ implies that 
\begin{equation} \label{equ:importantprop}
(1 + 4 \epsilon)q < 2(1 + 4 \epsilon)q < (1 - 4 \epsilon) (1-q) < (1 - 4 \epsilon).
\end{equation}
Hence, we can further see that the order of the top $\ell$ eigenvalues of $\A$ is preserved in $\Atilde$.  That is,
$$\lambdatilde_j \leq (1+4\epsilon) c q^j < (1-4\epsilon) c q^{j-1} \leq \lambdatilde_{j-1}$$
holds for all $j \in [\ell] \setminus \{1\}$.

Considering the relative eigengaps of $\Atilde$ we can also see that since $\epsilon < \frac{1}{20}$ and $q < 1/3$ we will have
$$1 \geq g_j = 1 - \frac{\lambdatilde_{j+1}}{\lambdatilde_j} \geq 1 - \frac{(1+4 \epsilon)q}{(1-4 \epsilon)} > 1 - \frac{3}{2}\left(\frac{1}{3}\right) = \frac{1}{2}$$
for all $j \in [\ell]$.  As a consequence we may bound the $\kappa_j$ $\forall j \in [\ell]$ by
\begin{align*}
 \kappa_j &\leq 1 + \frac{q^{-j}}{c(1-4 \epsilon)} \sum_{l = 1}^{j-1} 2cq^{l}(1+4 \epsilon) \kappa_l\\ &= 1 + \frac{2(1+4 \epsilon)q^{-j}}{1-4 \epsilon} \left( \sum_{l = 1}^{j-1} q^{l} \kappa_l \right)\\
 &< 1 + q^{-j}(q^{-1}-1) \sum_{l = 1}^{j-1} q^{l} \kappa_l,
\end{align*}
where we have used \eqref{equ:importantprop} in the last step.
A short induction argument now shows that $\kappa_j \leq q^{-2j+2}$ for all $j \in [\ell]$.  First, $\kappa_1 = 1 = q^{0}$ always holds for $j = 1$. Next, if $\kappa_l \leq q^{-2l+2}$ for all $l \in [j-1]$ with $j \geq 2$, then
\begin{align*}
 \kappa_j &< 1 + q^{-j}(q^{-1}-1) \sum_{l = 1}^{j-1} q^{-l+2}\\
 &<  q^{-j+2}(q^{-1}-1) \left(1 + \sum_{l = 1}^{j-1} q^{-l} \right)\\
 &= q^{-j+2}(q^{-1}-1) \left( \frac{q^{-j} - 1}{q^{-1}-1} \right) < q^{-2j+2}.
\end{align*}

Focusing now on $\nu_j$, another short induction argument shows that $\nu_j \leq 3^{j-1}$ holds for all $j \in [\ell]$.  First, $\nu_1 = 1$.  Now assume that $\nu_l \leq 3^{l-1}$ for all $l \in [j-1]$.  Then, 
\begin{align*}
    \nu_j &< 1 + 2 \sum_{\ell=1}^{j-1} \nu_\ell \leq 1 + 2 \sum_{\ell=1}^{j-1} 3^{l-1}\\ &=1 + 2 \left( \frac{3^{j-1} - 1}{3 - 1} \right)= 3^{j-1}.
\end{align*}

The result now follows by substituting our bounds on $g_j$, $\kappa_j$, $\nu_j$, and $\|\A_{\backslash \r}\|_*$ into Corollary~\ref{Coro:approxerrorinEigenvectors}.  Doing so we learn that 
\begin{align*}
    \min_{\phi \in [0,2\pi)} \| \mathbbm{e}^{\mathbbm{i} \phi}& \utilde_j - M \u_j \|_2 \\ \leq&\sqrt{12\epsilon \left(\frac{5}{4}+\frac{\kappa_j}{g_j} \right) + \frac{6\|\A_{\backslash \r}\|_*}{\lambdatilde_j}\left(1+\frac{\nu_j}{ g_j} \right)}\\
    \leq&\sqrt{12\epsilon \left(\frac{5}{4}+2 q^{-2j+2} \right) + \frac{6 \epsilon \lambda_\ell \left(1+2 \cdot 3^{j-1} \right)}{(1-4 \epsilon) \lambda_j}}\\
    \leq&\sqrt{39\epsilon \cdot q^{-2j+2} + \frac{15}{2} \epsilon \cdot q^{\ell - j} 3^{j}}\\
    \leq&\sqrt{ 46.5 \epsilon \cdot q^{-2j+2} } ~<~ 7 \sqrt{\epsilon} \cdot q^{-j+1} 
\end{align*}
holds for all $j \in [r]$.  In the second-from-last inequality we used that $\epsilon < \frac{1}{20}$ and that $1+2 \cdot 3^{j-1} \leq 3^j$ for all $j \geq 1$.

\section{Applicable Compressive Sensing Algorithms for Rapidly Approximating Eigenvectors}
\label{sec:CompressiveSense}

In this section we consider two possible compressive sensing strategies for recovering sparse approximations of $A$'s top eigenvectors $\u_j$ \eqref{def:SVD_A} from the top eigenvectors $\utilde_j$ of $MAM^*$ \eqref{def:SVD_A_Atilde}.  Before beginning, however, we emphasize here that many other options are also possible.  These are simply two particular examples.

Given that many classical JL-embedding and compressive sensing results are formulated for $\mathbbm{R}^N$ instead of $\mathbbm{C}^N$, we will often (implicitly) use the following lemma below.  It will ultimately allow us to consider more general Hermitian (as opposed to simply symmetric) matrices $A$.\footnote{Note that symmetric PSD matrices are a special case of Hermitian PSD matrices.} 

\begin{lemma}
\label{lem:RealRIPmatricesinCN}
    Let $\mathcal{S} \subset \mathbbm{C}^N$.  If $M \in \mathbbm{R}^{m \times N}$ is an $\epsilon$-JL map of $\tilde{\mathcal{S}} := \left\{ \mathbbm{Re}(\u) ~|~ \u \in \mathcal{S} \right\} \cup \left\{ \mathbbm{Im}(\u) ~|~ \u \in \mathcal{S} \right\} \subset \mathbbm{R}^N$ into $\mathbbm{R}^m$, then $M$ is also an $\epsilon$-JL map of $\mathcal{S}$ into $\mathbbm{C}^m$.
\end{lemma}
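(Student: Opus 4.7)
The plan is to reduce the complex JL property to the real one by decomposing each $\u \in \mathcal{S}$ into its real and imaginary parts and exploiting the fact that $M$ has only real entries.

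First, for any $\u \in \mathcal{S}$ I would write $\u = \mathbbm{Re}(\u) + \mathbbm{i}\, \mathbbm{Im}(\u)$. Since $M \in \mathbbm{R}^{m \times N}$, applying $M$ preserves the real/imaginary decomposition: $M\u = M\mathbbm{Re}(\u) + \mathbbm{i}\, M\mathbbm{Im}(\u)$, with $M\mathbbm{Re}(\u), M\mathbbm{Im}(\u) \in \mathbbm{R}^m$. The standard Pythagorean identity for the complex Euclidean norm then yields
\begin{equation*}
\|M\u\|_2^2 = \|M\mathbbm{Re}(\u)\|_2^2 + \|M\mathbbm{Im}(\u)\|_2^2, \qquad \|\u\|_2^2 = \|\mathbbm{Re}(\u)\|_2^2 + \|\mathbbm{Im}(\u)\|_2^2.
\end{equation*}

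Next, since $\mathbbm{Re}(\u), \mathbbm{Im}(\u) \in \tilde{\mathcal{S}}$ by construction, the hypothesis that $M$ is an $\epsilon$-JL map on $\tilde{\mathcal{S}}$ gives two inequalities of the form \eqref{eq:JLproperty}, one for $\mathbbm{Re}(\u)$ and one for $\mathbbm{Im}(\u)$. Adding these two inequalities side by side and combining with the identities above yields
\begin{equation*}
(1-\epsilon)\|\u\|_2^2 \le \|M\u\|_2^2 \le (1+\epsilon)\|\u\|_2^2,
\end{equation*}
which is exactly the $\epsilon$-JL property for $\u$ viewed as an element of $\mathbbm{C}^N$. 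Since $\u \in \mathcal{S}$ was arbitrary, this completes the argument.

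There is no real obstacle here; the only thing to be slightly careful about is keeping the reality of $M$ explicit so that $M\u$ cleanly splits into real and imaginary parts and the two JL inequalities can be summed without cross terms. Everything else is routine linear algebra.
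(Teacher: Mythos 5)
Your proposal is correct and follows essentially the same route as the paper: decompose $\u$ into real and imaginary parts, note $\mathbbm{Re}(M\u) = M\mathbbm{Re}(\u)$ and $\mathbbm{Im}(M\u) = M\mathbbm{Im}(\u)$ since $M$ is real, apply the JL inequalities to both parts in $\tilde{\mathcal{S}}$, and add them using the Pythagorean identity $\|\u\|_2^2 = \|\mathbbm{Re}(\u)\|_2^2 + \|\mathbbm{Im}(\u)\|_2^2$. No gaps to report.
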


\begin{proof}
    Let $\u \in \mathcal{S}$ and suppose that $M \in \mathbbm{R}^{m \times N}$ is an $\epsilon$-JL map of $\tilde{\mathcal{S}}$.  Then, 
    \begin{align*}
       (1-\epsilon)\| \u \|_2^2
       &= (1-\epsilon)\left( \left\| \mathbbm{Re}(\u) \right\|_2^2 +  \left\| \mathbbm{Im}(\u)  \right\|_2^2 \right)
       \leq \left\|M \mathbbm{Re}(\u) \right\|_2^2 +  \left\| M \mathbbm{Im}(\u)  \right\|_2^2\\
       &=\left\|\mathbbm{Re}(M\u) \right\|_2^2 +  \left\| \mathbbm{Im}(M\u)  \right\|_2^2 = \| M \u \|_2^2\\
       &\leq (1+\epsilon)\left( \left\| \mathbbm{Re}(\u) \right\|_2^2 +  \left\| \mathbbm{Im}(\u)  \right\|_2^2 \right)
       = (1+\epsilon)\| \u \|_2^2.
    \end{align*}
\end{proof}

Looking at Lemma~\ref{lem:RealRIPmatricesinCN} one can see that if $\mathcal{S} \subset \mathbbm{C}^N$ is finite, then $|\tilde{\mathcal{S}}| \leq 2 |\mathcal{S}|$.  In addition, if $\mathcal{S}$ is an $r$-dimensional subspace of $\mathbbm{C}^N$, then $\tilde{\mathcal{S}}$ is contained in a $2r$-dimensional subspace of $\mathbbm{R}^N$.  Finally, if $\mathcal{S} = \Sigma_{s} = \{ {\bf x} ~|~ \| {\bf x} \|_0 \leq s \} \subset \mathbbm{C}^N$, then $\tilde{\mathcal{S}} = \Sigma_{s} \cap \mathbbm{R}^N$. All three of these facts will be used liberally below.

\subsection{A Linear-Time Compressive Sensing Algorithm}

We will first present a linear-in-$N$-time strategy for approximating the eigenvectors of $A$ based on the CoSaMP algorithm \cite{needell2009cosamp}.  The performance of this method is summarized in the next Theorem (see, e.g., \cite[Theorem A]{needell2009cosamp}).  

\begin{theorem} \label{Thm:CoSamp} Let $s \in [N]$, $\eta \in (0,1)$, and 
    suppose that $M \in \mathbbm{R}^{m \times N}$ has both $(i)$ the RIP of order $\left(4s,0.1 \right)$ (see Definition~\ref{def:RIP}), and $(ii)$ an associated $\mathcal{O}(N \log N)$-time matrix-vector multiplication algorithm for both $M$ and $M^*$.  Then, there exists a compressive sensing algorithm $g: \mathbbm{C}^m \rightarrow \mathbbm{C}^N$ such that 
\begin{equation*}
        \| \u - g \left(M \u + {\bf e} \right) \|_2 \leq c \cdot \max \left\{ \eta, \frac{1}{\sqrt{s}} \| {\u} - \u_s\|_1 + \| {\bf e} \|_2 \right\}
\end{equation*}
    holds for all $\u \in \mathbbm{C}^N$ and ${\bf e} \in \mathbbm{C}^m$,
    where $c \in \mathbbm{R}^+$ is an absolute/universal constant.  Furthermore, $g$ can always be evaluated in $\mathcal{O}(N \log N \cdot \log(\| \u\|_2/\eta))$-time.
\end{theorem}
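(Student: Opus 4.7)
The plan is to view Theorem~\ref{Thm:CoSamp} as an almost direct invocation of the main convergence guarantee for the CoSaMP algorithm (Theorem~A of \cite{needell2009cosamp}), with the only nontrivial bookkeeping being the upgrade from the real-valued setting in which CoSaMP is classically analyzed to the complex-valued setting required here. The three ingredients needed from Needell--Tropp are: (i) the RIP of order $(4s,0.1)$ is exactly the regime in which CoSaMP is shown to converge geometrically; (ii) each iteration of CoSaMP costs $\mathcal{O}(1)$ applications of $M$, $M^*$, plus a least-squares solve on a $4s$-column submatrix which can itself be handled by conjugate gradient within the same matvec-bottlenecked cost; and (iii) the number of iterations needed to drive the iterate within accuracy $\eta$ of the limiting recoverable vector is $\mathcal{O}(\log(\|\u\|_2/\eta))$. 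Combining (ii) with the assumed $\mathcal{O}(N\log N)$ matrix--vector multiplication for $M$ and $M^*$ immediately yields the stated $\mathcal{O}(N \log N \cdot \log(\|\u\|_2/\eta))$ runtime.

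To pass from real to complex inputs, I would split $\u = \mathbbm{Re}(\u) + \mathbbm{i}\,\mathbbm{Im}(\u)$ and ${\bf e} = \mathbbm{Re}({\bf e}) + \mathbbm{i}\,\mathbbm{Im}({\bf e})$. Because $M \in \mathbbm{R}^{m\times N}$, one has
\begin{equation*}
M\u + {\bf e} \;=\; \bigl(M\,\mathbbm{Re}(\u) + \mathbbm{Re}({\bf e})\bigr) \;+\; \mathbbm{i}\bigl(M\,\mathbbm{Im}(\u) + \mathbbm{Im}({\bf e})\bigr).
\end{equation*}
Let $g_{\mathbbm{R}}:\mathbbm{R}^m \to \mathbbm{R}^N$ be the real-valued CoSaMP algorithm analyzed in \cite{needell2009cosamp}, and define $g:\mathbbm{C}^m \to \mathbbm{C}^N$ by $g(\mathbf{y}) := g_{\mathbbm{R}}(\mathbbm{Re}(\mathbf{y})) + \mathbbm{i}\, g_{\mathbbm{R}}(\mathbbm{Im}(\mathbf{y}))$. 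Lemma~\ref{lem:RealRIPmatricesinCN} applied to $\mathcal{S} = \Sigma_{4s}$ shows that the complex RIP hypothesis on $M$ of order $(4s,0.1)$ implies the real RIP of the same order, so Needell--Tropp applies to each of the two real sub-recoveries.

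The final step is to aggregate the two real bounds. Applied twice, Needell--Tropp yields
\begin{equation*}
\bigl\| \mathbbm{Re}(\u) - g_{\mathbbm{R}}(M\,\mathbbm{Re}(\u) + \mathbbm{Re}({\bf e}))\bigr\|_2 \leq c_0 \max\!\left\{\eta,\; \tfrac{1}{\sqrt{s}}\|\mathbbm{Re}(\u) - (\mathbbm{Re}(\u))_s\|_1 + \|\mathbbm{Re}({\bf e})\|_2\right\},
\end{equation*}
and the analogous inequality for the imaginary part, where $c_0$ is the absolute constant from \cite{needell2009cosamp}. Because the restriction to the support of $\u_s$ of $\mathbbm{Re}(\u)$ is an $s$-sparse real approximation to $\mathbbm{Re}(\u)$, one has $\|\mathbbm{Re}(\u) - (\mathbbm{Re}(\u))_s\|_1 \leq \|\mathbbm{Re}(\u-\u_s)\|_1 \leq \|\u - \u_s\|_1$, and likewise for the imaginary part. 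Squaring both inequalities, adding them, using $\|\mathbbm{Re}({\bf e})\|_2^2 + \|\mathbbm{Im}({\bf e})\|_2^2 = \|{\bf e}\|_2^2$, and taking a square root absorbs a $\sqrt{2}$ into $c := \sqrt{2}\, c_0$ and delivers the claimed bound on $\|\u - g(M\u+{\bf e})\|_2$.

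The main (mild) obstacle is the constant-tracking: ensuring that the specific RIP constant threshold in \cite{needell2009cosamp} is indeed met by the order-$(4s,0.1)$ assumption, and that the max-versus-sum form of the error bound survives the real/imaginary recombination (since the CoSaMP error bound is of the form $\max\{\eta, \cdot\}$, not purely additive, one handles the two cases $\eta$ dominates versus the tail/noise term dominates separately, which only affects the absolute constant $c$). Neither of these involves any real difficulty beyond careful bookkeeping, so no deeper idea is required.
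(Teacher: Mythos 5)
Your proposal matches the paper's treatment: the paper states this result as a direct citation of Theorem A of Needell--Tropp, with the passage from real to complex vectors handled implicitly (via the real/imaginary splitting device discussed around Lemma~\ref{lem:RealRIPmatricesinCN}), and your argument is exactly that invocation with the complexification and constant bookkeeping made explicit, which is correct. The only nitpick is that the step ``complex RIP of order $(4s,0.1)$ implies real RIP of the same order'' does not come from Lemma~\ref{lem:RealRIPmatricesinCN} (which gives the reverse direction, real-set JL implying complex-set JL); it follows trivially by restricting the RIP inequality to the real sparse vectors $\Sigma_{4s}\cap\mathbbm{R}^N\subset\Sigma_{4s}$, so this does not affect correctness.
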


Let $H \in \mathbbm{R}^{N \times N}$ be a symmetric unitary Hadamard matrix with an $\mathcal{O}(N \log N)$-time matrix-vector multiply.\footnote{The \emph{Hadamard matrix of order $N = 2^k$} is a matrix with real entries satisfying $HH^\top = I$.  Hadamard matrices $H \in \mathbbm{R}^{N \times N}$ can be multiplied against vectors in 
$\mathcal{O}(N \log N)$-time via the \emph{Walsh--Hadamard transform}, a real-valued
analog of the FFT (see, e.g., \cite{ahmed2012orthogonal,hedayat1978hadamard,shanks1969computation}).  We assume here that $N$ is a power of $2$ for simplicity.  If not, one can simply (implicitly) pad all vectors involved with $0$s until they are of length $2^{\lceil \log_2 N \rceil}$.}  Let $R \in \{ 0,1\}^{m \times N}$ be a random matrix created by independently selecting $m$ rows of the $N \times N$ identity matrix $I$ uniformly at random with replacement.  Finally, let $D \in \{ -1,0,1\}^{N \times N}$ be a random diagonal matrix with independent and identically distributed (i.i.d.) Rademacher random variables (i.e., $\pm 1$ with probability $1/2$) on its diagonal, and set
\begin{equation} \label{equ:DefCosampM}
    M := \sqrt{\frac{N}{m}}RHD \in \mathbbm{R}^{m \times N}.
\end{equation}   

Note that $M$ in \eqref{equ:DefCosampM} will have an associated $\mathcal{O}(N \log N)$-time matrix-vector multiplication algorithm for both $M$ and $M^*$ by construction (i.e., via the Walsh–Hadamard transform).  In addition, when $M$ has a sufficiently large number of rows it will also both $(i)$ have the RIP of order $\left(4s,0.1 \right)$, and $(ii)$ satisfy the assumptions in Theorem \ref{theorem:MAM} with respect to an arbitrary $A \in \mathbbm{C}^{N \times N}$ with high probability.  The next lemma guarantees that $M \in \mathbbm{R}^{m \times N}$ will simultaneously satisfy all of these useful properties whenever $m$ is sufficiently large.

\begin{lemma} \label{lem:MergedPropsCosamp}
Let $\A\in\mathbbm{C}^{\N\times\N}$, $s, r \in [N]$, and $\epsilon, p \in (0,1)$.  If 
$$N ~\geq~ m ~\geq~ c_1 \max\{s,r/\epsilon^2\} \log^4\left(c_2 N/p\epsilon^2 \right)$$ 
then the matrix $\M\in\mathbbm{R}^{\m\times\N}$ in \eqref{equ:DefCosampM} will simultaneously satisfy all of the following properties with probability at least $1-p$:
    \begin{enumerate}
    \item $\M$ will have the RIP of order $\left(4s,0.1 \right)$,
        \item $\M$ will be an $\epsilon$-JL map of the column space of $\A_\r$ into $\mathbbm{C}^\m$,
        \item $\M$ will be an $\epsilon$-JL map of the column space of $\A_\r^*$ into $\mathbbm{C}^\m$, and
        \item  $\M$ will be an $\epsilon$-JL map of the smallest $N-r$ right and $N-r$ left singular vectors of A.
    \end{enumerate}
Here, $c_1, c_2 \in \mathbbm{R}^+$ are absolute/universal constants.
\end{lemma}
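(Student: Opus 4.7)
My plan is to verify each of the four claimed properties separately for the SRHT matrix $M = \sqrt{N/m}\,RHD$ and then combine them via a union bound with budget $p/4$ per event. Throughout I will exploit Lemma~\ref{lem:RealRIPmatricesinCN} to transfer complex JL conditions to real ones, noting that an $r$-dimensional complex subspace becomes a real subspace of dimension at most $2r$ (spanned by real and imaginary parts of a basis), and a set of $k$ complex vectors becomes at most $2k$ real vectors.

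For property (1), I would invoke one of the standard RIP guarantees for the SRHT (e.g.\ the Rudelson--Vershynin-type result, later sharpened by Cheraghchi--Guruswami--Velingker and Haviv--Regev), which states that with probability at least $1-p/4$, the matrix $M$ has the RIP of order $(4s,0.1)$ whenever $m \gtrsim s\log^4(N/p)$. For properties (2) and (3), the column spaces of $A_r$ and $A_r^*$ are each at most $r$-dimensional over $\mathbb{C}$, hence at most $2r$-dimensional over $\mathbb{R}$ after applying Lemma~\ref{lem:RealRIPmatricesinCN}. The oblivious subspace embedding theorem for the SRHT (Tropp; Halko--Martinsson--Tropp; Boutsidis--Gittens) then says that $M$ acts as an $\epsilon$-JL map on a $2r$-dimensional subspace, each with probability $\geq 1-p/4$, provided $m \gtrsim \epsilon^{-2} r \log^4(N/p\epsilon^2)$.

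For property (4), after applying Lemma~\ref{lem:RealRIPmatricesinCN} we face the JL requirement for a fixed finite set of at most $4N$ real vectors. Here I would use the Krahmer--Ward theorem: since the internal Rademacher diagonal $D$ is randomized \emph{inside} $M$, the SRHT is itself of the form $\Phi D$ with $\Phi = \sqrt{N/m}\,RH$ satisfying the RIP of order $(s',\delta)$ for appropriate $s',\delta$, and Krahmer--Ward then delivers that $M$ is an $\epsilon$-JL map of any fixed set $E \subset \mathbb{R}^N$ with probability $\geq 1-p/4$ as soon as $m \gtrsim \epsilon^{-2}\log(|E|/p)\log^4(N/p)$, which here reduces to $m \gtrsim \epsilon^{-2}\log(N/p)\log^4(N/p)$. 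Taking a union bound over the four events (each failing with probability at most $p/4$) delivers all four simultaneously with probability $\geq 1-p$, and the binding requirement collapses to $m \gtrsim \max\{s, r/\epsilon^2\}\log^4(c_2 N/p\epsilon^2)$.

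The main obstacles I anticipate are bookkeeping rather than conceptual. First, I must ensure that each cited SRHT result can be stated with the precise $\log^4$ polylogarithmic factor claimed in the lemma, absorbing any smaller $\log$ factors into constants; this is why the statement uses $\log^4(c_2 N/p\epsilon^2)$ rather than a tighter exponent. Second, I need to verify that the \emph{same} draw of $(R,D)$ can simultaneously satisfy RIP, finite-set JL, and the two subspace JL conditions — this is immediate by union bound since each property is a statement about that single random matrix. Finally, I should remark that property~(1) alone, combined with Krahmer--Ward, would in principle imply property~(4), but stating them as separate union-bounded events keeps the constants transparent.
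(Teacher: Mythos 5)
Your proposal is correct in substance and follows the same skeleton as the paper's proof: verify each of the four properties separately for the single draw of $M=\sqrt{N/m}RHD$ from \eqref{equ:DefCosampM}, transfer the complex statements to real ones via Lemma~\ref{lem:RealRIPmatricesinCN} (doubling dimensions/cardinalities), and union bound with budget $p/4$ per event. The difference is only in the supporting results invoked. The paper gets property 1 from \cite[Theorem 12.32]{FoucartRauhut2013} after the observation $D\Sigma_s=\Sigma_s$, and gets properties 2--4 from the Gaussian-width-based embedding results of \cite{iwen2024on} (Theorem 2.6 for the two $2r$-dimensional subspaces, with width bounded by $\sqrt{2r}$, and Corollary 2.5 for the finite set of trailing singular vectors), whereas you invoke classical SRHT oblivious subspace embeddings for properties 2--3 and the Krahmer--Ward theorem \cite{krahmer2011new} for property 4; these serve identical functions and both routes work. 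Two bookkeeping points if you carry out your version: first, the Krahmer--Ward step conditions on $\Phi=\sqrt{N/m}RH$ having RIP of order roughly $\log(N/p)$ with constant $\epsilon/4$, an event over the randomness in $R$ that must be added to the union bound as its own (fifth) event --- harmless, but it is not free as written; second, your stated requirement $m\gtrsim \epsilon^{-2}\log(N/p)\log^4(N/p)$ for property 4 overshoots the lemma's $\max\{s,r/\epsilon^2\}\log^4(c_2N/p\epsilon^2)$ budget by a logarithmic factor when $s$ and $r$ are small; applying the bounded-orthonormal-system RIP bound with RIP order $s'\sim\log(N/p)$ gives $m\gtrsim \epsilon^{-2}\log(N/p)\log^2(s')\log^2(N)$, which does fit inside the claimed $\log^4$ bound, so the discrepancy is an artifact of loose accounting rather than a flaw in the approach.
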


\begin{proof}
    Noting that $D \Sigma_s = \Sigma_s$ for all $s \in [N]$, one can apply \cite[Theorem 12.32]{FoucartRauhut2013} in order to see that $M$ will satisfy property 1 above with probability at least $1-(p/4)$ since $m \geq c_1' s \cdot \left(  \log^2(4s)\log^2(9N)+ \log(4/p)\right)$.  Considering properties 2 and 3 above, we can see that they will also each be satisfied with probability at least $1-(p/4)$ by \cite[Theorem 2.6]{iwen2024on} since $m \geq \frac{c''_1}{\epsilon^2} r \cdot \log^2 \left(\frac{c_2' r \log(8/p)}{\epsilon^2} \right)\log(8/p) \log(8 \mathbbm{e} N/p)$, where we have also used that the Gaussian width \cite{vershynin2018high} of a $2r$-dimensional subspace of $\mathbbm{R}^N$ is bounded above by $\sqrt{2r}$.  Finally, considering property 4 above, we can see that it will also be satisfied with probability at least $1-(p/4)$ by \cite[Corollary 2.5]{iwen2024on} since $m \geq \frac{c_1'''}{\epsilon^2} \log\left( \frac{c_2'' N}{p}  \right) \cdot \left(\log^2\left( \frac{\log(c_2''' N/p)}{\epsilon}  \right) \log(\mathbbm{e}N) + \log(8 \mathbbm{e}/p) \right)$. Setting $c_1 = \max\{2c_1', c_1'', 2c_1'''\}$, $c_2 = \max\{8c_2', c_2'', c_2''',8\mathbbm{e}\}$, and applying the union bound now finishes the proof.
\end{proof}

Our main result of this subsection now follows immediately from Theorem~\ref{Thm:CoSamp} and Lemma~\ref{lem:MergedPropsCosamp}.

\begin{theorem} \label{Thm:MainCosampMatrixSetup}
Let $\A\in\mathbbm{C}^{\N\times\N}$, $s, r \in [N]$, and $\epsilon, p, \eta \in (0,1)$.  If 
$$N ~\geq~ m ~\geq~ c_1 \max\{s,r/\epsilon^2\} \log^4\left(c_2 N/p\epsilon^2 \right)$$ 
then with probability at least $1-p$ the matrix $\M\in\mathbbm{R}^{\m\times\N}$ in \eqref{equ:DefCosampM} will both $(i)$ satisfy all 3 assumptions of Theorem \ref{theorem:MAM}, and $(ii)$ satisfy
\begin{equation*}
        \| \u - g \left(M \u + {\bf e} \right) \|_2 \leq c \cdot \max \left\{ \eta, \frac{1}{\sqrt{s}} \| {\u} - \u_s\|_1 + \| {\bf e} \|_2 \right\}
\end{equation*}
    for all $\u \in \mathbbm{C}^N$ and ${\bf e} \in \mathbbm{C}^m$.  Here $g: \mathbbm{C}^m \rightarrow \mathbbm{C}^N$ is as in Theorem~\ref{Thm:CoSamp} so that evaluations of $g$ can be computed in $\mathcal{O}(N \log N \cdot \log(\| \u\|_2/\eta))$-time, and $c, c_1, c_2 \in \mathbbm{R}^+$ are absolute/universal constants.
\end{theorem}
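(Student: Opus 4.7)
The plan is to observe that this theorem is simply the combination of the two ingredients assembled in the excerpt immediately preceding it, so the proof amounts to bookkeeping rather than any new analysis. Concretely, I would invoke Lemma~\ref{lem:MergedPropsCosamp} once, with the given values of $s$, $r$, $\epsilon$, and $p$, and use the hypothesis $m \geq c_1 \max\{s, r/\epsilon^2\} \log^4(c_2 N / p \epsilon^2)$ to conclude that, with probability at least $1-p$, the random matrix $M = \sqrt{N/m}\, RHD$ defined in \eqref{equ:DefCosampM} simultaneously has (a) the RIP of order $(4s, 0.1)$, and (b)--(d) the three $\epsilon$-JL embedding properties that constitute the hypotheses of Theorem~\ref{theorem:MAM}. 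This immediately yields part $(i)$ of the conclusion.

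For part $(ii)$, I would note that by construction $M$ is a subsampled randomized Hadamard transform composed with a diagonal sign flip, so both $M$ and $M^*$ admit $\mathcal{O}(N \log N)$-time matrix--vector multiplication via the Walsh--Hadamard transform (with the usual implicit zero-padding to the next power of two if $N$ is not already one). Combined with the RIP property from the previous paragraph, this places $M$ squarely in the hypotheses of Theorem~\ref{Thm:CoSamp}, which then produces the CoSaMP-type reconstruction map $g : \mathbbm{C}^m \to \mathbbm{C}^N$ satisfying
\begin{equation*}
    \| \u - g(M\u + {\bf e}) \|_2 \leq c \cdot \max\left\{ \eta, \tfrac{1}{\sqrt{s}} \| \u - \u_s \|_1 + \| {\bf e} \|_2 \right\}
\end{equation*}
for every $\u \in \mathbbm{C}^N$ and ${\bf e} \in \mathbbm{C}^m$, with evaluation time $\mathcal{O}(N \log N \cdot \log(\|\u\|_2/\eta))$.

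Finally, I would combine the two parts by setting the absolute constants $c$, $c_1$, $c_2$ to be the maxima of whatever absolute constants are produced by Theorem~\ref{Thm:CoSamp} and Lemma~\ref{lem:MergedPropsCosamp}. Since the RIP guarantee in Theorem~\ref{Thm:CoSamp} is deterministic given the matrix, a single application of the union bound is not even needed beyond the one already invoked inside Lemma~\ref{lem:MergedPropsCosamp}, so the event of success has probability at least $1-p$ exactly as stated.

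I do not expect any true obstacle here: the only tiny point of care is to note that Theorem~\ref{Thm:CoSamp} is stated over $\mathbbm{C}^N$ while the RIP guarantee of \cite[Theorem 12.32]{FoucartRauhut2013} used inside Lemma~\ref{lem:MergedPropsCosamp} is stated over $\mathbbm{R}^N$; this is handled by Lemma~\ref{lem:RealRIPmatricesinCN}, applied to $\mathcal{S} = \Sigma_{4s} \subset \mathbbm{C}^N$ (whose associated set $\tilde{\mathcal{S}}$ sits inside $\Sigma_{4s} \cap \mathbbm{R}^N$), so that the real RIP of $M$ automatically upgrades to a complex RIP with the same order and constant. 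With this remark in place, the theorem follows by direct substitution.
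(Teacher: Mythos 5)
Your proposal is correct and matches the paper's own treatment, which simply states that Theorem~\ref{Thm:MainCosampMatrixSetup} ``follows immediately from Theorem~\ref{Thm:CoSamp} and Lemma~\ref{lem:MergedPropsCosamp}'': part $(i)$ and the RIP come from Lemma~\ref{lem:MergedPropsCosamp} with the stated bound on $m$, and part $(ii)$ follows by feeding that RIP together with the $\mathcal{O}(N\log N)$ matrix--vector multiply of $M=\sqrt{N/m}\,RHD$ into Theorem~\ref{Thm:CoSamp}, with no further union bound needed since the latter is deterministic given the matrix. Your remark about upgrading the real RIP to the complex setting via Lemma~\ref{lem:RealRIPmatricesinCN} is exactly the (implicit) bookkeeping the paper relies on as well.
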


We will now turn our attention to an algorithm for approximation the eigenvectors of $A \in \mathbbm{C}^{N \times N}$ whose runtime scales sublinearly in $N$.

\subsection{A Sublinear-Time Compressive Sensing Algorithm}

In this subsection we will utilize slightly generalized versions of the compressive sensing algorithms in \cite{BaileyIwenSpencer2012,iwen2014} which are themselves generalized variants of the compressive sensing strategy first proposed by Cormode and  Muthukrishnan in \cite{10.1007/11780823_22}.  As we shall see, these results will ultimately allow us to recover good best $s$-term approximations to the eigenvectors $\u_j$ of $\A$ in sub-linear time using the eigenvectors $\utilde_j$ of $MAM^*$.  To begin we will review some basic definitions and supporting results.

\begin{definition}
	Let $ K, \alpha \in [N] := \{ 1, \ldots, N \} $.
	A matrix $ W \in \{0, 1\}^{ m \times N } $ is \emph{$ (K, \alpha) $-coherent} if the following conditions hold:
	\begin{enumerate}
		\item Every column of $ W $ contains at least $ K $ ones, and
		\item For every $ j, \ell \in [N] $, $ j \neq \ell $, the inner product of the columns $ {\bf w }_j $ and $ {\bf w }_\ell $ satisfies $ \langle {\bf w }_j, {\bf w }_\ell \rangle \leq \alpha $.
	\end{enumerate}
\end{definition}

Random constructions of $(K, \alpha) $-coherent matrices with a small number of rows include, e.g., this one from \cite[Theorem 2]{iwen2014}:

\begin{runex}
	A random matrix $ W \in \{0, 1\}^{ m \times N } $ with  i.i.d.\ Bernoulli random entries will be $ (K, \alpha) $-coherent with high probability under mild assumptions provided that $ m \geq c K^2 / \alpha $.
\end{runex}

There are also low-memory explicit and deterministic constructions with nearly as few rows, including those by, e.g., \cite{Kashin1975,DeVore2007,Iwen2008,Iwen2009}.  More generally, one can easily prove the following lemma after recalling a couple basic definitions from the theory of error correcting codes.

\begin{runex}
\label{ex:ECCconnection}
Let ${\bf c}_j, {\bf c}_\ell \in \{ 0,1 \}^m$. The Hamming weight of ${\bf c}_j$ is ${\rm wt}({\bf c}_j) := \|{\bf c}_j\|_1$. Moreover, the Hamming distance between ${\bf c}_j$ and ${\bf c}_\ell$ is ${\rm d}({\bf c}_j, {\bf c}_\ell) := \| {\bf c}_j - {\bf c}_\ell \|_1$. Any error correcting code $({\bf c}_0, \dots, {\bf c}_{N-1}) \in \{ 0,1 \}^{m \times N}$ with constant Hamming weight $K = {\rm wt}({\bf c}_j)$~$\forall j \in [N]$ and minimum Hamming distance $\Delta := \min_{j \neq \ell }\| {\bf c}_j - {\bf c}_\ell \|_1$ is also a $(K, K - \frac{\Delta}{2})$-coherent matrix. In fact, one can see that $\langle {\bf c}_j, {\bf c}_\ell \rangle = K - \frac{\| {\bf c}_j - {\bf c}_\ell \|_1}{2} \leq K - \frac{\Delta}{2}$ for all $j \neq \ell$.
\end{runex}

Given Example~\ref{ex:ECCconnection} one can see that there are in fact many deterministic constructions of $ (K, \alpha) $-coherent matrices waiting in the error correcting code literature.  See, e.g., \cite{lau2021construction} for more related discussion.  In addition to $ (K, \alpha) $-coherent matrices we will also need small error correcting code matrices known as ``bit testing matrices".  An example of such a matrix follows.

\begin{definition}[Bit Testing Matrices] \label{Def:BitTest}
    For $N \in \mathbbm{N}$ the $N^{\rm th}$ bit testing matrix, $B_N \in \{ 0, 1 \}^{(1 + \lceil \log_2 N \rceil ) \times N}$, is the matrix whose $j^{\rm th}$-column $\forall j \in [N]$ is a $1$ followed by $j-1$ written in binary.  For example,
    	\begin{equation*}
		B_8 =
		\begin{pmatrix}
			1 & 1 & 1 & 1 & 1 & 1 & 1 & 1\\
			0 & 1 & 0 & 1 & 0
            & 1 & 0 & 1\\
			0 & 0 & 1 & 1 & 0
            & 0 & 1 & 1\\
			0 & 0 & 0 & 0 & 1
            & 1 & 1 & 1       
		\end{pmatrix}.
	\end{equation*}
    Finally, below we will also consider $B_N' \in \{ 0, 1 \}^{2(1 + \lceil \log_2 N \rceil ) \times N}$, an extended version of $B_N$ created by appending each column of $B_N$ with its binary complement.  Note that every column of $B_N'$ will contain exactly $1 + \lceil \log_2 N \rceil$ ones.
\end{definition}

Note that each $B_N$ matrix defined above is effectively the parity check matrix for a Hamming code (with an additional row of $1$'s appended at the top).  As a result, linear systems of the form $B_N {\bf x} = {\bf y}$ can be solved for all $1$-sparse ${\bf x} = \gamma {\bf e}_j$ in $\mathcal{O}(\log N)$-time.  The same will therefore be true of $B_N'$ since it contains $B_N$ as a sub-matrix.

Finally, the compressive sensing matrices utilized in \cite{10.1007/11780823_22,BaileyIwenSpencer2012,iwen2014} all (implicitly) utilize the Khatri–Rao product defined below.

\begin{definition}[Khatri–Rao Product]
\label{Def:Khatri–Rao}
Let $B \in \C^{m \times N}$ and $C \in \C^{p \times N}$. Their \textit{Khatri–Rao product}, $B \odot C \in \C^{mp \times N},$ is defined as
\begin{equation*}
		B \odot C =
		\begin{pmatrix}
			B_{0,0} C_{:,0}& B_{0,1} C_{:,1} & \hdots & B_{0,N-1} C_{:,N-1} \\
			B_{1,0} C_{:,0} & B_{1,1} C_{:,1} & \hdots & B_{1,N-1} C_{:,N-1} \\
			\vdots & \vdots & \ddots & \vdots \\
			B_{m-1,0} C_{:,0} & B_{m-1,1} C_{:,1} & \hdots & B_{N-1,N-1} C_{:,N-1}
		\end{pmatrix}.
	\end{equation*}
More precisely, for any $j \in [mp]$ we note that we may uniquely write $j = q m + r$ for some $q \in [p]$ and $r \in [m]$.  The $(j,k) \in [mp] \times [N]$ entry of $B \odot C$ is then $\left(B \odot C\right)_{j,k} = B_{r,k}C_{q,k}$.
\end{definition}

The following sublinear-time compressive sensing result follows from a straightforward combination of \cite[Theorem 5.1.27]{Iwennotes} with \cite[Theorem 5.1.35]{Iwennotes}.\footnote {Alternatively, one may combine \cite[Lemma 2]{BaileyIwenSpencer2012} with trivial modification of \cite[Theorem 4]{BaileyIwenSpencer2012}.}

\begin{theorem} \label{Thm:FinalsublinearCSRes}
Let $s, K, \alpha \in [N]$, $W \in \{ 0,1 \}^{m' \times N}$ be $(K \geq 4 \alpha s+1,\alpha)$-coherent, and $B_N' \in \{ 0,1 \}^{ 2(1 + \lceil \log_2 N \rceil) \times N}$ be an extended bit testing matrix.  Then, there exists a compressive sensing algorithm $f: \mathbbm{C}^{2(1 + \lceil \log_2 N \rceil)m'} \rightarrow \mathbbm{C}^N$ such that
	\begin{equation*} 
		\| {\bf x } - f\left(\left( W \odot B_N' \right) {\bf x } + {\bf n }  \right) \|_2 \leq \| {\bf x } - {\bf x }_{ 2s } \|_2 + 6(1+\sqrt{2}) \left(\frac{\| {\bf x } - {\bf x }_s \|_1}{\sqrt{ s }} + \sqrt{ s } \| {\bf n } \|_\infty \right)
	\end{equation*}
holds for all ${\bf x} \in \mathbbm{C}^N$ and ${\bf n} \in \C^{2(1 + \lceil \log_2 N \rceil)m'}$. Furthermore, evaluations of $f$ (which are $2s$-sparse) can always be computed in $\mathcal{O}\left(m' \log N \right)$-time.
\end{theorem}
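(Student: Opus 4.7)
The plan is to construct $f$ by exploiting the bucketing structure inherent in the Khatri--Rao product $W \odot B_N'$. Writing $S_i := \{ j \in [N] : W_{i,j} = 1 \}$ for the support of the $i$-th row of $W$, the $2(1+\lceil \log_2 N \rceil)$ measurements corresponding to row $i$ of $W$ in the sketch $(W \odot B_N'){\bf x}$ are precisely $B_N'({\bf x}|_{S_i})$, plus the associated block of ${\bf n}$. This partitions the $m' \cdot 2(1+\lceil \log_2 N \rceil)$ measurements into $m'$ independent ``buckets''. My algorithm $f$ would first apply a bit-testing decoder to each bucket that, using the augmented-Hamming parity-check structure of $B_N'$ described after Definition~\ref{Def:BitTest}, attempts in $\mathcal{O}(\log N)$ time to identify a single dominant index $j \in S_i$ together with an estimate of $x_j$. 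It would then collect these at-most-$m'$ candidate (index, coefficient) pairs, keep the indices that appear in at least $K/2$ buckets, and output the median of their estimated coefficients, truncated to $2s$ terms.

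Correctness of the identification step rests on the $(K \geq 4\alpha s + 1, \alpha)$-coherence hypothesis. For any index $j$ lying in the top-$2s$ support of ${\bf x}$, column $\mathbf{w}_j$ has at least $K$ ones, so $j$ lies in at least $K$ buckets. For each of the other $2s - 1$ top-$2s$ indices $\ell$, the buckets simultaneously containing $j$ and $\ell$ number at most $\langle \mathbf{w}_j, \mathbf{w}_\ell \rangle \leq \alpha$. Hence at most $(2s-1)\alpha < K/2$ of the buckets containing $j$ are contaminated by another heavy entry, so a strict majority of $j$'s buckets are ``clean'' and the bit-testing decoder recovers $j$ correctly there up to an error controlled by the compressible tail ${\bf x}-{\bf x}_{2s}$ (leaking additively into the bucket sum) and the per-measurement noise ${\bf n}$. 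A median across these clean buckets then suppresses the contaminated outliers.

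To upgrade this qualitative picture to the stated quantitative bound, I would invoke the two cited results from \cite{Iwennotes}: Theorem~5.1.27, which packages per-bucket identification and coefficient-estimation accuracy for a $(K,\alpha)$-coherent sketch combined with $B_N'$ under $\ell_\infty$-bounded noise, and Theorem~5.1.35, which tracks propagation of the compressible tail $\|{\bf x}-{\bf x}_s\|_1/\sqrt{s}$ and the noise magnitude $\|{\bf n}\|_\infty$ through the median-aggregation/thresholding step to produce the explicit bound $\|{\bf x} - {\bf x}_{2s}\|_2 + 6(1+\sqrt{2})\bigl( \|{\bf x}-{\bf x}_s\|_1/\sqrt{s} + \sqrt{s}\|{\bf n}\|_\infty \bigr)$. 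The runtime claim is then immediate: each of the $m'$ buckets is decoded in $\mathcal{O}(\log N)$ time by solving a $1$-sparse Hamming parity system, and the voting/median aggregation is linear in the number of candidate pairs, giving $\mathcal{O}(m' \log N)$ overall; $2s$-sparsity of the output is enforced by the final truncation.

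The step I expect to be the main obstacle, and the reason the combination of the two lemmas is needed rather than either alone, is the bookkeeping for the interaction between the two noise sources inside a single bucket. Each bit-test equation sees both an $\ell_1$-type leakage from ${\bf x}|_{S_i} - {\bf x}_{2s}|_{S_i}$ (scaling, after median over $K$ buckets, into an $\ell_2$-tail term) and an additive $\ell_\infty$ contribution from ${\bf n}$ (scaling, after identifying $2s$ indices, into $\sqrt{s}\|{\bf n}\|_\infty$). Balancing these two with the constant $6(1+\sqrt{2})$ — and verifying that the threshold $K \geq 4\alpha s + 1$ is the precise quantity ensuring the median is not corrupted — is the calculation I would expect to occupy the bulk of a full proof, and is exactly what the combined application of Theorems~5.1.27 and~5.1.35 is engineered to accomplish.
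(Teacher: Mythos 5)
Your proposal is correct and follows essentially the same route as the paper, which itself proves this theorem simply by combining \cite[Theorem 5.1.27]{Iwennotes} with \cite[Theorem 5.1.35]{Iwennotes}; your bucketing/bit-testing/median description is an accurate account of the mechanism those results encapsulate, and your final step invokes exactly the same two cited theorems for the quantitative bound and the $\mathcal{O}(m' \log N)$ runtime. The only cosmetic difference is that you sketch the voting and truncation details explicitly, whereas the paper leaves all of that to the cited course notes.
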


Note $W \odot B_N'$ will be $\left(K(1 + \lceil \log_2 N \rceil),\alpha (1 + \lceil \log_2 N \rceil) \right)$-coherent whenever $W$ is $(K,\alpha)$-coherent.  Furthermore, if $W$ is $(K,\alpha)$-coherent with $K$ ones in every column then $\frac{1}{\sqrt{K}}W$ will have the RIP of order $\left(r',\frac{(r'-1)\alpha}{K} \right)$ (recall Definition~\ref{def:RIP})
\cite[Theorem 5]{BaileyIwenSpencer2012}.  Finally, let $K \in [N]$.  There is a $(K,\lfloor \log_{K} N \rfloor)$-coherent matrix $W \in \{ 0,1\}^{m' \times N}$ due to DeVore \cite{DeVore2007} that has both $(i)$ $K$ ones in every column, and $(ii)$ $m' = \mathcal{O}(K^2)$ rows.\footnote{This $W$ is effectively constructed using a Reed–Solomon code.}  Combining these facts together with Theorem~\ref{Thm:FinalsublinearCSRes} we obtain the following lemma.

\begin{lemma} \label{lem:MergedPropsLem}
Let $s, r', 1/\epsilon \in [N]$ and $B_N' \in \{ 0,1 \}^{ 2(1 + \lceil \log_2 N \rceil) \times N}$ be an extended bit testing matrix.  Let $K := \max \{4s, r'/\epsilon \} \left\lceil \log_{\max \left\{s, \frac{r'}{\epsilon} \right\}} N \right\rceil$.  There exists a $\left(K, \lfloor \log_{K} N \rfloor \right)$-coherent matrix $W \in \{ 0,1\}^{m' \times N}$ with $m' = \mathcal{O}\left( K^2 \right)$ that both $(i)$ satisfies the conditions of Theorem~\ref{Thm:FinalsublinearCSRes}, and $(ii)$ has the property that $\tilde{W} := \frac{1}{\sqrt{K(1 + \lceil \log_2 N \rceil)}}W \odot B_N' \in \mathbbm{R}^{2(1 + \lceil \log_2 N \rceil)m' \times N}$ has the RIP of order $\left(r', \epsilon \right)$.
\end{lemma}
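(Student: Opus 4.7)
The plan is to construct $W$ directly via DeVore's explicit Reed--Solomon-based construction cited in the excerpt, then to verify both claimed properties by chasing through the coherence arithmetic in two stages: first tracking the effect of taking the Khatri--Rao product with $B_N'$ on coherence parameters, and then converting the resulting coherence into an RIP statement via the cited bound from \cite{BaileyIwenSpencer2012}.

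For the construction step, I would invoke DeVore's result stated in the paragraph above the lemma: for $K$ as defined in the statement, there exists a matrix $W \in \{0,1\}^{m' \times N}$ with $m' = \mathcal{O}(K^2)$ which is $(K, \lfloor \log_K N \rfloor)$-coherent and has exactly $K$ ones in each column. Setting $\alpha := \lfloor \log_K N \rfloor$, conclusion (i) then reduces to verifying $K \geq 4 \alpha s + 1$, i.e., that $W$ meets the coherence hypothesis of Theorem~\ref{Thm:FinalsublinearCSRes}. Writing $L := \max\{s, r'/\epsilon\}$, we have $\log K = \log(\max\{4s, r'/\epsilon\} \lceil \log_L N \rceil) \geq \log L$, so $\alpha = \lfloor \log_K N \rfloor \leq \log N / \log L = \log_L N$. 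Hence $4\alpha s \leq 4 s \log_L N \leq \max\{4s, r'/\epsilon\} \lceil \log_L N \rceil = K$, and a simple ceiling argument upgrades this to $K \geq 4\alpha s + 1$.

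For conclusion (ii), I would combine two observations noted in the paragraph preceding the lemma. Since $W$ has exactly $K$ ones per column and $B_N'$ has exactly $1+\lceil \log_2 N\rceil$ ones per column (by Definition~\ref{Def:BitTest}), the Khatri--Rao product $W \odot B_N'$ has exactly $K(1+\lceil \log_2 N \rceil)$ ones per column and is $(K(1+\lceil \log_2 N \rceil), \alpha(1+\lceil \log_2 N \rceil))$-coherent. Applying the excerpt's coherence-to-RIP conversion $\bigl($if a $0/1$ matrix is $(K',\alpha')$-coherent with $K'$ ones per column, then $(K')^{-1/2}$ times it has the RIP of order $(r',(r'-1)\alpha'/K')\bigr)$ to $W \odot B_N'$, the normalization $\tilde W = (K(1+\lceil\log_2 N\rceil))^{-1/2}(W \odot B_N')$ has the RIP of order
\begin{equation*}
\left(r',\; \frac{(r'-1)\alpha(1+\lceil \log_2 N \rceil)}{K(1+\lceil \log_2 N \rceil)}\right) = \left(r',\; \frac{(r'-1)\alpha}{K}\right).
\end{equation*}
It then suffices to show $(r'-1)\alpha/K \leq \epsilon$, which I would verify by using $\alpha \leq \log_L N$, writing $K \geq (r'/\epsilon)\lceil \log_L N \rceil$, and cancelling.

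The step I expect to be most delicate is the bookkeeping in the arithmetic inequalities for both (i) and (ii), since one must juggle the two cases embedded in $\max\{4s, r'/\epsilon\}$ and the floors/ceilings in the logarithms without accidentally losing a constant factor that would break the RIP bound at exactly $\epsilon$. A clean way to handle this is to split into the cases $4s \geq r'/\epsilon$ and $4s < r'/\epsilon$ and verify each inequality in each case separately; no analytical difficulty is expected beyond this, and the proof is essentially an accounting exercise assembling the three bulleted facts given in the paragraph preceding the lemma.
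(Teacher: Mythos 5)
Your proposal matches the paper's intended argument exactly: the paper gives no separate proof of this lemma, simply asserting that it follows by combining DeVore's $(K,\lfloor\log_K N\rfloor)$-coherent construction, the preservation of coherence under the Khatri--Rao product with $B_N'$, and the coherence-to-RIP conversion from \cite{BaileyIwenSpencer2012} with Theorem~\ref{Thm:FinalsublinearCSRes}. Your write-up is precisely that combination, with the arithmetic checks $K \geq 4\alpha s + 1$ and $(r'-1)\alpha/K \leq \epsilon$ filled in correctly (modulo degenerate parameter regimes, e.g.\ $\max\{s,r'/\epsilon\} \geq N$, which the paper itself does not address).
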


Set $t := \max\{s,r'/\epsilon\}$.  Looking at Lemma~\ref{lem:MergedPropsLem} we can see that using DeVore's construction for $W \in \{ 0,1\}^{m' \times N}$ as done therein will yield a matrix $\tilde{W} = \frac{1}{\sqrt{K(1 + \lceil \log_2 N \rceil)}}W \odot B_N' \in \mathbbm{R}^{m \times N}$ with 
$$m ~=~ \mathcal{O}(K^2 \log N) ~=~ \mathcal{O}\left( t^2 \log_t^2 N \log N\right)$$
rows that has the RIP of order $\left(r', \epsilon \right)$.  Furthermore, let $f: \mathbbm{C}^{2(1 + \lceil \log_2 N \rceil)m'} \rightarrow \mathbbm{C}^N$ be as per Theorem~\ref{Thm:FinalsublinearCSRes}, and set 
\begin{equation} \label{equ:RIPtypemeasurements}
\utilde ~:=~ \tilde{W} {\bf \u} + {\bf n} ~=~ W \odot B_N'\left( \frac{\bf \u }{\sqrt{K(1 + \lceil \log_2 N \rceil)}} \right) + {\bf n}.
\end{equation}
In this case we will also have
	\begin{align}
\label{equ:SublinErrorGuarantee}
		&\left \| \u - \sqrt { K(1 + \lceil \log_2 N \rceil) } f \left ( \utilde \right ) \right \|_2 \\ &\quad \quad \leq \| \u - \u_{ 2s } \|_2 + 6(1+\sqrt{2}) \left( \frac{\| \u - \u_s \|_1}{\sqrt{ s }} + \sqrt{ s K(1 + \lceil \log_2 N \rceil)} ~\| {\bf n } \|_\infty \right) \nonumber\\
        &\quad \quad = \| \u - \u_{ 2s } \|_2 + 6(1+\sqrt{2}) \left( \frac{\| \u - \u_s \|_1}{\sqrt{ s }} + \beta_{m}({\bf n }) \| {\bf n } \|_2 \right), \nonumber
	\end{align}
where 
\begin{equation} \label{equ:DefofBeta_mn}
\beta_{m}({\bf n }) := \frac{\| {\bf n } \|_\infty \sqrt{ s K(1 + \lceil \log_2 N \rceil)}}{\| {\bf n } \|_2}.
\end{equation}

Using this computation together with several classical results from the compressive sensing literature we can now prove the following theorem.
\begin{theorem} \label{Thm:MainSublinearMatrixSetup}
    Let $\A\in\mathbbm{C}^{\N\times\N}$, and choose $r, s, 1/\epsilon \in[\N]$ and $p \in (0,1)$.  Set $t := \max \left\{ 4s, \frac{512 r \lceil \ln(231 N / p) \rceil}{\epsilon}  \right\}$ and $K := t \left\lceil \log_{t} N \right\rceil$.  Let $B_N' \in \{ 0,1 \}^{ 2(1 + \lceil \log_2 N \rceil) \times N}$ be an extended bit testing matrix, and $W \in \{ 0,1\}^{\mathcal{O}\left( K^2 \right) \times N}$ be the $\left(K, \lfloor \log_{K} N \rfloor \right)$-coherent matrix from Lemma~\ref{lem:MergedPropsLem}.  Finally, let $D \in \mathbbm{R}^{N \times N}$ be a random diagonal matrix with i.i.d. Rademacher random variables on its diagonal, and set
    \begin{equation*}
    M := \tilde{W} D = \frac{1}{\sqrt{K(1 + \lceil \log_2 N \rceil)}} \left(W \odot B_N' \right)D \in \mathbbm{R}^{m \times N},
    \end{equation*}
    where $\tilde{W}$ is also as in Lemma~\ref{lem:MergedPropsLem}, and $m = \mathcal{O}(K^2 \log N)$.  
    
    Then, $M$ both $(i)$ satisfies all 3 assumptions of Theorem \ref{theorem:MAM} with probability at least $1-p$, and $(ii)$ satisfies
\begin{align*}
		&\left \| \u - \sqrt { K(1 + \lceil \log_2 N \rceil) } \cdot D f \left ( M \u + {\bf n } \right ) \right \|_2 \\ &\qquad \qquad \qquad \qquad \leq \| \u - \u_{ 2s } \|_2 + 6(1+\sqrt{2}) \left( \frac{\| \u - \u_s \|_1}{\sqrt{ s }} + \beta_{m}({\bf n }) \| {\bf n } \|_2 \right) 
\end{align*}
for all $\u \in \mathbbm{C}^N$ and ${\bf n} \in \mathbbm{C}^m$.  Here $\beta_{m}({\bf n })$ is as in 
\eqref{equ:DefofBeta_mn}, and $f: \mathbbm{C}^{m} \rightarrow \mathbbm{C}^N$ is as in Theorem~\ref{Thm:FinalsublinearCSRes} so that evaluations of $f$ can be computed in $\mathcal{O}\left(m \right)$-time.
\end{theorem}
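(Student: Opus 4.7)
The plan is to combine the deterministic RIP guarantee from Lemma~\ref{lem:MergedPropsLem} with Krahmer--Ward-style results that convert the RIP of $\tilde{W}$ into $\epsilon$-JL embeddings after right-multiplication by the Rademacher diagonal $D$. Part~(ii) will then follow almost immediately from~\eqref{equ:SublinErrorGuarantee} by absorbing $D$ into the input vector.

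For part~(i), the first step is to invoke Lemma~\ref{lem:MergedPropsLem} with the choice $r' := 512\, r \lceil \ln(231 N/p) \rceil$. With this $r'$, the quantity $\max\{4s, r'/\epsilon\} \lceil \log_{\max\{s, r'/\epsilon\}} N \rceil$ in Lemma~\ref{lem:MergedPropsLem} equals $K = t \lceil \log_t N \rceil$ from the theorem statement, so $\tilde{W}$ has RIP of order $(r', \epsilon)$. The second step is to apply a Krahmer--Ward-style RIP-to-JL conversion (cf.~\cite{krahmer2011new} and the sharper variants \cite[Theorem~2.6 and Corollary~2.5]{iwen2024on} already employed in the proof of Lemma~\ref{lem:MergedPropsCosamp}): because $\tilde{W}$ has RIP of order $(r', \epsilon)$, the product $M = \tilde{W} D$ is simultaneously an $\epsilon$-JL map of any fixed real subspace of dimension at most $2r$ with probability at least $1 - p/3$, and an $\epsilon$-JL map of any fixed finite subset of $\mathbbm{R}^N$ of cardinality $\mathcal{O}(N)$ with probability at least $1 - p/3$. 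The numerical constants $512$ and $231$ are chosen so that $r' \geq c(2r + \log(1/p))$ (subspace case, via a standard net argument) and $r' \geq c' \log(N/p)$ (finite-set case, via a union bound) hold simultaneously within the prescribed failure probability.

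The third step transfers these real-valued embeddings into the complex-valued JL properties required by Theorem~\ref{theorem:MAM} via Lemma~\ref{lem:RealRIPmatricesinCN}. The column spaces of $A_r$ and $A_r^*$ are at most $r$-dimensional complex subspaces, so their real and imaginary parts live in real subspaces of dimension at most $2r$, handled by the subspace embedding above. The smallest $N-r$ left and $N-r$ right singular vectors form a finite set in $\mathbbm{C}^N$ of size $\leq 2(N-r)$, yielding a finite set in $\mathbbm{R}^N$ of size $\leq 4(N-r)$ after taking real and imaginary parts, handled by the finite-set embedding. A union bound over the three JL events produces the claimed $1-p$ success probability.

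For part~(ii), observe that $M {\bf u} + {\bf n} = \tilde{W}(D {\bf u}) + {\bf n}$, so applying~\eqref{equ:SublinErrorGuarantee} with $D {\bf u}$ in place of ${\bf u}$ yields
\begin{equation*}
\bigl\| D{\bf u} - \sqrt{K(1+\lceil \log_2 N \rceil)} \, f(M{\bf u} + {\bf n}) \bigr\|_2 \leq \| D{\bf u} - (D{\bf u})_{2s} \|_2 + 6(1+\sqrt{2}) \left( \frac{\| D{\bf u} - (D{\bf u})_s \|_1}{\sqrt{s}} + \beta_m({\bf n}) \| {\bf n} \|_2 \right).
\end{equation*}
Every diagonal entry of $D$ satisfies $|D_{ii}| = 1$, so $D$ is an $\ell_p$-isometry for every $p$ and commutes with best $k$-sparse approximation up to sign-flips of individual entries; in particular $\| D{\bf u} - (D{\bf u})_k \|_p = \| {\bf u} - {\bf u}_k \|_p$ for $p \in \{1,2\}$ and every $k$. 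Left-multiplying the argument of the left-hand $\ell_2$-norm by $D$ and using $D^2 = I$ converts $D{\bf u} - \sqrt{K(\cdot)} f(\cdot)$ into ${\bf u} - \sqrt{K(\cdot)} D f(\cdot)$ without changing the $\ell_2$-norm, producing exactly the bound claimed in the theorem.

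The main technical obstacle is the bookkeeping in part~(i): the specific value $r' = 512\, r \lceil \ln(231 N/p) \rceil$ must be calibrated so that RIP of order $(r', \epsilon)$, after Rademacher-ization, simultaneously covers the oblivious embedding of a $(2r)$-dimensional real subspace \emph{and} the union-bounded embedding of $\mathcal{O}(N)$ points, all within total failure probability $p$. Everything else -- the complex-to-real reduction via Lemma~\ref{lem:RealRIPmatricesinCN} and the sign-diagonal manipulation in part~(ii) -- is routine.
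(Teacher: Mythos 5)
Your proposal follows essentially the same route as the paper's proof: part $(i)$ via the RIP of $\tilde{W}$ from Lemma~\ref{lem:MergedPropsLem}, a Krahmer--Ward-style RIP-to-JL conversion after right multiplication by the Rademacher diagonal $D$ (the paper uses \cite[Theorem 9.36]{FoucartRauhut2013} together with the net argument of \cite[Lemma 3]{iwen2021lower} for the two subspaces and a direct union bound for the $2(N-r)$ trailing singular vectors), the complex-to-real reduction of Lemma~\ref{lem:RealRIPmatricesinCN}, and a union bound over the three events; and part $(ii)$ exactly as you describe, by absorbing $D$ into the input of \eqref{equ:SublinErrorGuarantee} using $\|D\u\|_p = \|\u\|_p$, $(D\u)_k = D(\u_k)$, and $D^2 = I$.

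One calibration point in part $(i)$ is off as written, though it does not change the construction. You invoke Lemma~\ref{lem:MergedPropsLem} with RIP parameters $\left(r', \epsilon\right)$ for $r' = 512\, r \lceil \ln(231 N/p)\rceil$ and then assert that this RIP certificate directly makes $\tilde{W}D$ an $\epsilon$-JL map of the relevant subspaces and point sets. The RIP-to-JL conversion loses constant factors: to get an $\epsilon$-JL embedding of an $r$-dimensional complex subspace one needs an $(\epsilon/2)$-JL embedding of an $(\epsilon/16)$-net of its unit sphere, which via \cite[Theorem 9.36]{FoucartRauhut2013} requires RIP \emph{distortion} $\epsilon/8$ (at order $\gtrsim r \ln(231/\epsilon p)$), and the finite-set case requires distortion $\epsilon/4$; RIP with distortion $\epsilon$ only yields JL distortion $\Theta(\epsilon)$ with a constant larger than $1$. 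This is precisely how the paper arrives at the factor $512 = 8 \cdot 64$: it calls Lemma~\ref{lem:MergedPropsLem} with distortion $\epsilon/8$ and order $r' = 64\, r\ln(231N/p)$, which produces the same $K$ (and hence the same matrix $M$) as your choice. Your argument is repaired by noting that the coherence-based RIP bound for the very same $\tilde{W}$ also certifies RIP of order $\left(64\, r\ln(231 N/p), \epsilon/8\right)$, so the conclusion stands after re-splitting the constant between the order and the distortion; as literally stated, however, the step ``RIP of order $(r',\epsilon)$ implies $\epsilon$-JL after Rademacherization'' is not valid.
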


\begin{remark}\label{remark:sublinearmbound}
The upper bound on $m$ provided by Theorem~\ref{Thm:MainSublinearMatrixSetup} can be simplified in terms of $r,s,\epsilon$ and $p$ as follows.
\begin{align*}
m = \mathcal{O}(K^2 \log N) = \mathcal{O}\left( t^2 \log_t^2 N \log N\right) = \mathcal{O}\left( \max \left\{ s^2, \frac{r^2}{\epsilon^2}  \right\} \log^5(N/p) \right).
\end{align*}
Though somewhat loose, we will utilize the rightmost upper bound outside of this section for simplicity.
\end{remark}

\begin{remark}
The matrix $M$ in provided by Theorem~\ref{Thm:MainSublinearMatrixSetup} is in fact somewhat sparse, containing only $K(1 + \lceil \log_2 N \rceil)$ nonzero entries in each column.  As a consequence, it can be multiplied by an arbitrary vector in better-than-expected $\mathcal{O}(N K \log N)$-time.  In addition, the highly structured nature of $M$ makes it extremely efficient to store and generate on the fly.  See, e.g., \cite[Section 2.1]{iwen2014} for a related discussion regarding how to compactly store and quickly generate portions of $W$ on demand.
\end{remark}

\begin{remark}\label{remark:sublinearBetaissmall}
In the application we consider herein it's reasonable to expect the noise vector ${\bf n} \in \mathbbm{C}^m$ appearing in Theorem~\ref{Thm:MainSublinearMatrixSetup} to be relatively {\it flat}, i.e., to satisfy $\| {\bf n} \|_2 \sim \sqrt{m} \| {\bf n} \|_\infty$.  Whenever this happens we can see from \eqref{equ:DefofBeta_mn} that 
$$\beta_{m}({\bf n }) \sim \sqrt{ \frac{s K(1 + \lceil \log_2 N \rceil)}{m}} = \mathcal{O}\left( \sqrt{\frac{s}{K}}\right) = \mathcal{O}\left( \sqrt{\frac{1}{\log_t N}}\right).$$
Hence, it's reasonable to expect that $\beta_{m}({\bf n })$ is generally $\mathcal{O}(1)$ herein.  See also Section~\ref{sec:Experiments} for empirical verification that this indeed does generally hold in practice.
\end{remark}

\begin{proof}
We first prove part $(i)$ of Theorem~\ref{Thm:MainSublinearMatrixSetup} by showing that $\M = \tilde{W} D \in\mathbbm{R}^{\m\times\N}$ satisfies each of the following properties with probability at least $1 - (p/3)$:
    \begin{enumerate}
        \item $\M$ is an $\epsilon$-JL map of the column space of $\A_\r$ into $\mathbbm{C}^\m$.
        \item $\M$ is an $\epsilon$-JL map of the column space of $\A_\r^*$ into $\mathbbm{C}^\m$.
        \item  M is an $\epsilon$-JL map of the smallest $N-r$ right and $N-r$ left singular vectors of A.
    \end{enumerate}
Claim $(i)$ then follows from the union bound.

Considering properties 1 and 2 above, one can appeal to, e.g., \cite[Lemma 3]{iwen2021lower} to see that $M$ will be an $\epsilon$-JL map of a given $r$-dimensional subspace $\mathcal{L}$ of $\mathbbm{C}^N$ if it's an $(\epsilon/2)$-JL map of an $(\epsilon / 16)$-net $\mathcal{C}$ of the unit $\ell_2$-sphere in $\mathcal{L}$.  Furthermore, this net $\mathcal{C}$ may be chosen so that $|\mathcal{C}| \leq \left(\frac{47}{\epsilon} \right)^{2r}$.  Applying \cite[Theorem 9.36]{FoucartRauhut2013} one can then further see that it suffices, in turn, for $\tilde{W}$ from Lemma~\ref{lem:MergedPropsLem} to have the RIP of order $\left( 64 r \ln(231 / \epsilon p) , \epsilon/8 \right)$ in order for $M = \tilde{W}D$ to be an $(\epsilon/2)$-JL map of an $(\epsilon / 16)$-net $\mathcal{C}$ of that size with probability at least $1 - (p/3)$.  Considering property 3 above, one can similarly see that it suffices for $\tilde{W}$ from Lemma~\ref{lem:MergedPropsLem} to have the RIP of order $\left( 32 \ln (48(N-r)/p), \epsilon/4 \right)$ in order for $\tilde{W}D$ to be an $\epsilon$-JL map of the $2(N-r)$ smallest right and left singular vectors of $A$ with probability at least $1 - (p/3)$.

Hence, in order for $M$ to satisfy all of properties $1 - 3$ with probability at least $1-p$ it suffices to set $\epsilon$ and $r'$ in Lemma~\ref{lem:MergedPropsLem} to be $\epsilon / 8$ and $r' = 64 \max \big\{ r \ln(231 / \epsilon p),$ $ \ln (48(N-r)/p) \big\} \leq 64 r \ln(231 N / p)$, respectively, where we have used that $\epsilon \geq 1/N$ to help more simply bound $r'$.  Doing so then guarantees that $\tilde{W}$ with $K$ defined as above will have all of the necessary RIP conditions required in order to guarantee that $M = \tilde{W} D$ will simultaneously satisfy all the assumptions of Theorem \ref{theorem:MAM} with probability at least $1-p$.   

Next, we prove part $(ii)$ by using \eqref{equ:SublinErrorGuarantee} in combination with the many special properties of any possible realization of the random diagonal matrix $D \in \mathbbm{R}^{N \times N}$.  Let $\u \in \mathbbm{C}^N$.  Note that $\| D \u \|_p = \| \u \|_p$ for all $p \geq 1$, $(D\u)_s = D (\u_s)$ for all $s \in [N]$, and $D^2 = I$ always hold. Using these properties of $D$ together with \eqref{equ:SublinErrorGuarantee} we have that
\begin{align*}
		&\left \| \u - \sqrt { K(1 + \lceil \log_2 N \rceil) } \cdot D f \left ( M \u + {\bf n } \right ) \right \|_2\\ &\quad \quad \quad \quad = \left \| D\u - \sqrt { K(1 + \lceil \log_2 N \rceil) } \cdot f \left ( \tilde{W}D \u + {\bf n } \right ) \right \|_2 \\ &\quad \quad \quad \quad \leq \| D\u - (D\u)_{ 2s } \|_2 + 6(1+\sqrt{2}) \left( \frac{\| D\u - (D\u)_s \|_1}{\sqrt{ s }} + \beta_{m}({\bf n }) \| {\bf n } \|_2 \right)\\ &\quad \quad \quad \quad = \| \u - \u_{ 2s } \|_2 + 6(1+\sqrt{2}) \left( \frac{\| \u - \u_s \|_1}{\sqrt{ s }} + \beta_{m}({\bf n }) \| {\bf n } \|_2 \right). 
\end{align*}
The result now follows from Theorem~\ref{Thm:FinalsublinearCSRes}.
\end{proof}

We will now provide an empirical demonstration of the sublinear-time compressive sensing strategy developed just above.  


\section{Preliminary Experiments on Large Matrices $A \in \mathbbm{R}^{N \times N}$} \label{sec:Experiments}

In this section we assess a sublinear-time-in-$N$ implementation of Algorithm~\ref{Alg:ApproxfromMAM} on large matrices $A$ (see Theorem~\ref{THM:MAINRESULT_SUBLINEAR} for related theory).  In all experiments below the real-valued test matrices $A \in \mathbbm{R}^{N \times N}$ are randomly generated symmetric, positive semidefinite, and exactly rank-$r$ matrices whose eigenvectors are exactly $s$-sparse.  More specifically, each individual test matrix $A \in \mathbbm{R}^{N \times N}$ is randomly generated in a sparse $({\rm row},{\rm column},{\rm value}) \in [N]^2\times \mathbbm{R}$ coordinate list format by
\begin{enumerate}
\item choosing $rs$ support values $s_1, \dots, s_{rs} \in [N]$ uniformly at random with replacement,

\item independently drawing $rs$ mean $0$ and variance $1$ normal random values $u_{1}, \dots, u_{rs} \in \mathbbm{R}$ to use as eigenvector entries, 

\item creating $r$ $\ell_2$-normalized and $s$-sparse eigenvectors $\u_1, \dots, \u_r \in \mathbbm{R}^N$ in a sparse $({\rm entry},{\rm value }) \in [N] \times \mathbbm{R}$ coordinate list format using the random values above, and then by

\item setting $A =\sum_{j=1}^{r} 2^{-j} \u_j \u_j^\top$
\end{enumerate}
for varying choices of $r,s \in [N]$.  We used $N = 2^{27}-1 \approx 130$ million in all experiments below (so that all matrices have $N^2 \sim 10^{16}$ entries that must be considered).  Finally, every value plotted in a figure below is the average of 100 experimental runs using 100 i.i.d.\ random test matrices $A_1, \dots, A_{100}$ generated as above.  

The measurement matrices $M \in \mathbbm{R}^{m \times N}$ used for all experiments are of the form
    \begin{equation}\label{Meas_structure}
    M = \frac{1}{\sqrt{K(1 + \lceil \log_2 N \rceil)}} \left(W \odot B_N' \right)D \in \mathbbm{R}^{m \times N},
    \end{equation}
where $W$ is a $\left(K, \alpha \right)$-coherent matrix from \cite{Iwen2009,BaileyIwenSpencer2012}, $B_N'$ is an extended bit testing matrix as per Definition~\ref{Def:BitTest}, and $D$ is a random diagonal matrix with i.i.d.\ Rademacher random variables on its diagonal.  To save working memory the matrix $M$ is not explicitly constructed in the experiments below. Instead, $M$ is implicitly represented by an $\mathcal{O}(N)$-memory\footnote{If one doesn't store the random diagonal matrix $D$ in working memory the algorithm can in fact be implemented to utilize only $\mathcal{O}(m)$-memory.} algorithm which computes matrix-vector products $M {\bf v} \in \mathbbm{C}^m$ for any given vector ${\bf v} \in \mathbbm{C}^N$ when given sampling access to ${\bf v}$'s entries.  The sketch $MAM^* \in \mathbbm{R}^{m \times m}$ is then computed using $$MAM^* = \sum_{j=1}^{r}2^{-j}(M\u_j)(M\u_j)^\top.$$ The Compressive Sensing (CS) algorithm $\mathcal{A}: \mathbbm{C}^m \rightarrow \mathbbm{C}^N$ used in our implementation of Algorithm~\ref{Alg:ApproxfromMAM} was \cite[Algorithm 25]{Iwennotes}.\footnote{Note that \cite[Algorithm 25]{Iwennotes} is a trivial modification of \cite[Algorithm 1]{BaileyIwenSpencer2012}.}  The code used for all experiments below is publicly available at \url{https://github.com/boaed/MAM-Method-for-Eigenvector-Recovery}.

\begin{remark}Note that the test matrices $A$ constructed in this section don't quite satisfy all of the assumptions of Theorems \ref{THM:MAINRESULT_LINEARcosamp} and \ref{THM:MAINRESULT_SUBLINEAR}.   In particular, the test matrices herein have eigenvalues $\lambda_j = 2^{-j}$ for $j \in [r]$ (i.e., they have $c=1$, $q = 1/2$ and $\epsilon = 0$ since $\|A_{{\setminus}r}\|_* = 0$ ) which violates the condition $q < 1/3$ required by Theorem \ref{MainTHM:ExpDecayingSingVlaues} on which both Theorems \ref{THM:MAINRESULT_LINEARcosamp} and \ref{THM:MAINRESULT_SUBLINEAR} rely.  That said, the violated restriction to $q < 1/3$ was primarily made for cosmetic reasons in order to help relate the quantities $g_j, \nu_j$ and $\kappa_j$ in Corollary~\ref{Coro:approxerrorinEigenvectors} back to the spectrum of $A$ in an easily interpretable fashion.  Experiments indicate that the method performs well numerically under far more general conditions than those highlighted in Theorems \ref{THM:MAINRESULT_LINEARcosamp} and \ref{THM:MAINRESULT_SUBLINEAR}.
\end{remark}

\subsection{An Evaluation of Algorithm~\ref{Alg:ApproxfromMAM} as Sparsity and Rank Vary}

In the experiments below we empirically evaluate the errors considered in Theorem~\ref{MainTHM:ExpDecayingSingVlaues} and Theorem~\ref{THM:MAINRESULT_SUBLINEAR} as the true eigenvector-sparsity $s$ and rank $r$ used to generate the $100$ random test matrices $A_1, \dots, A_{100} \in \mathbbm{R}^{(2^{27}-1) \times (2^{27}-1) }$ used for each plotted value vary.  In each plot below the size parameter $m$ of the $MAM^* \in \mathbbm{R}^{m \times m}$ sketch is held fixed.  Two sets of average errors are then plotted as a function of $s$ (with $r$ fixed) and $r$ (with $s$ fixed):
\begin{enumerate}
    \item \underline{Relative $\ell_2$-error before CS inversion:}  The error averaged over our $100$ trial matrices, $$\frac{1}{100} \sum_{k = 1}^{100} \frac{\min \left\{ \|\utilde^k_j - M\u^k_j\|_2,\|\utilde^k_j + M\u^k_j\|_2\right\}}{\| \utilde^k_j \|_2} = \frac{1}{100} \sum_{k = 1}^{100} \min \left\{ \|\utilde^k_j - M\u^k_j\|_2,\|\utilde^k_j + M\u^k_j\|_2\right\},$$
where $\utilde^k_j \in \mathbbm{R}^m$ is the eigenvector of $M A_k M^*$ associated with it's $j^{\rm th}$ largest eigenvalue, and $\u^k_j \in \mathbbm{R}^N$ is the eigenvector of $A_k$ associated with it's $j^{\rm th}$ largest eigenvalue.  As above, we always consider our eigenvectors to be $\ell_2$-normalized so that $\|\u^k_j\|_2 = \|\utilde^k_j\|_2 = 1$ for all $k \in [100]$ and $j \in [m]$.  This still leaves a sign indeterminacy in our eigenvectors, however, which is the reason for minimizing over $\|\utilde^k_j - M\u^k_j\|_2$ and $\|\utilde^k_j + M\u^k_j\|_2$ above.

\item \underline{Relative $\ell_2$-error after CS inversion:}  This error averaged over our $100$ trial matrices is $$\frac{1}{100} \sum_{k = 1}^{100} \frac{\min \left\{ \|\u^k_j - \mathcal{A}(\utilde^k_j)\|_2,\|\u^k_j + \mathcal{A}(\utilde^k_j)\|_2\right\}}{\| \u^k_j \|_2} = \frac{1}{100} \sum_{k = 1}^{100} \min \left\{ \|\u^k_j - \mathcal{A}(\utilde^k_j)\|_2,\|\u^k_j + \mathcal{A}(\utilde^k_j)\|_2\right\},$$
where $\utilde^k_j \in \mathbbm{R}^m$, $\u^k_j \in \mathbbm{R}^N$ are as above, and the CS algorithm $\mathcal{A}: \mathbbm{C}^m \rightarrow \mathbbm{C}^N$ is \cite[Algorithm 25]{Iwennotes}.
\end{enumerate}
Results for the top $4$ eigenvectors of our test matrices $A$ (i.e., for $j \in [4]$) are plotted in Figures~\ref{Fig:sparsityvaries} and~\ref{Fig:rankvaries}.

\begin{figure}[!htb]
	\centering
	\begin{minipage}[t]{0.48\textwidth}
		\centering
		\includegraphics[height=5.5cm]{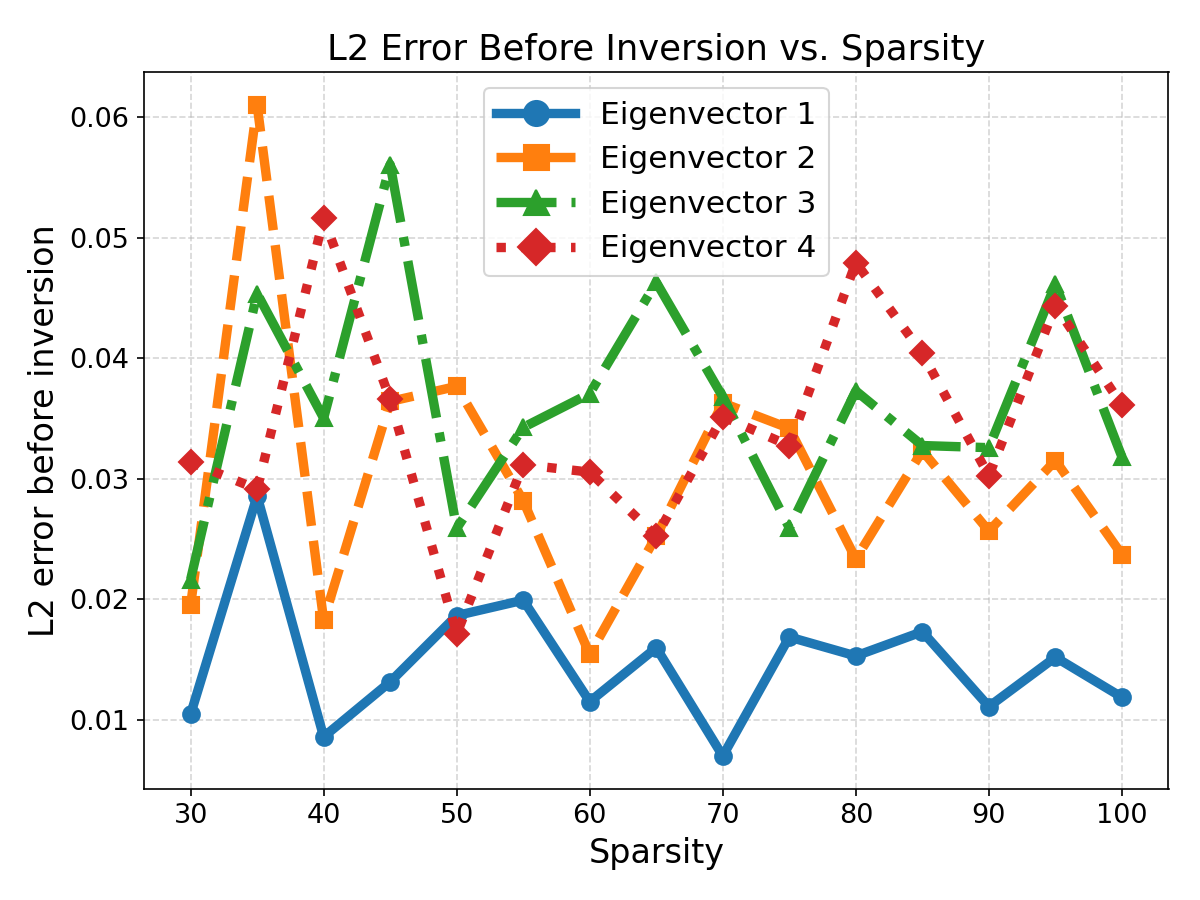}
		\caption*{$\min\{\|\tilde{u}_j - M u_j\|_2,\ \|\tilde{u}_j + M u_j\|_2\}$, $j \in [4]$}
	\end{minipage}%
	\hfill
	\begin{minipage}[t]{0.48\textwidth}
		\centering
		\includegraphics[height=5.5cm]{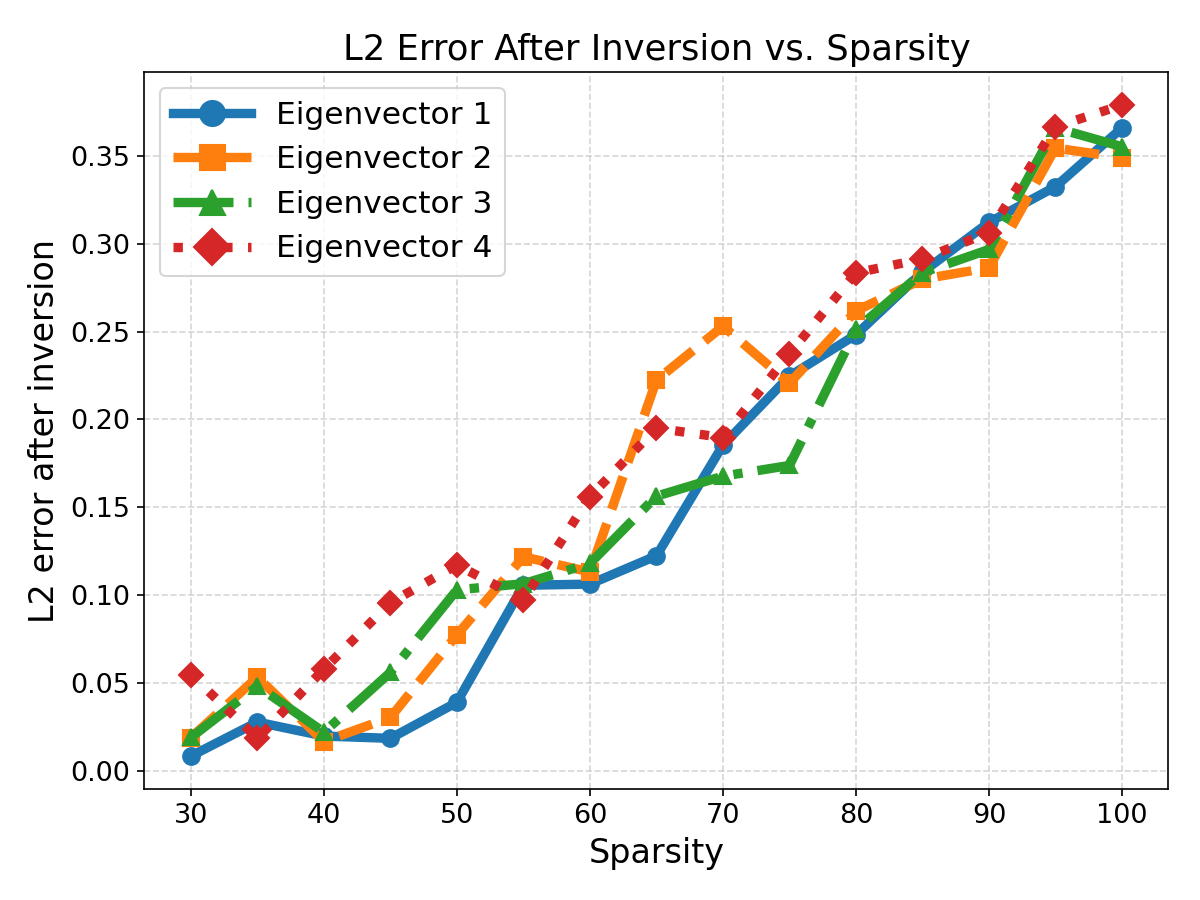}
		\caption*{$\min\{\|u_j - \mathcal{A}(\tilde{u}_j)\|_2,\ \|u_j + \mathcal{A}(\tilde{u}_j)\|_2\}$, $j \in [4]$}
	\end{minipage}
	
	\caption{
		Measurement and CS approximation errors for fixed rank $r=20$ matrices 
		$A \in \mathbb{R}^{(2^{27}-1)\times(2^{27}-1)}$ whose eigenvectors $u_j$ 
		are $s$-sparse for $s \in [30,100]$. 
		The measurement matrix $M \in \mathbb{R}^{m \times N}$ defined in 
		\ref{Meas_structure} had 
		$m = 2(1+\lceil\log_2(2^{27}-1)\rceil)\cdot 2147 = 120{,}232$, 
		so that $MAM^* \in \mathbb{R}^{120232 \times 120232}$.
	}
	\label{Fig:sparsityvaries}
\end{figure}

\begin{figure}[!htb]
	\centering
	\begin{minipage}[t]{0.48\textwidth}
		\centering
		\includegraphics[width=0.9\linewidth]{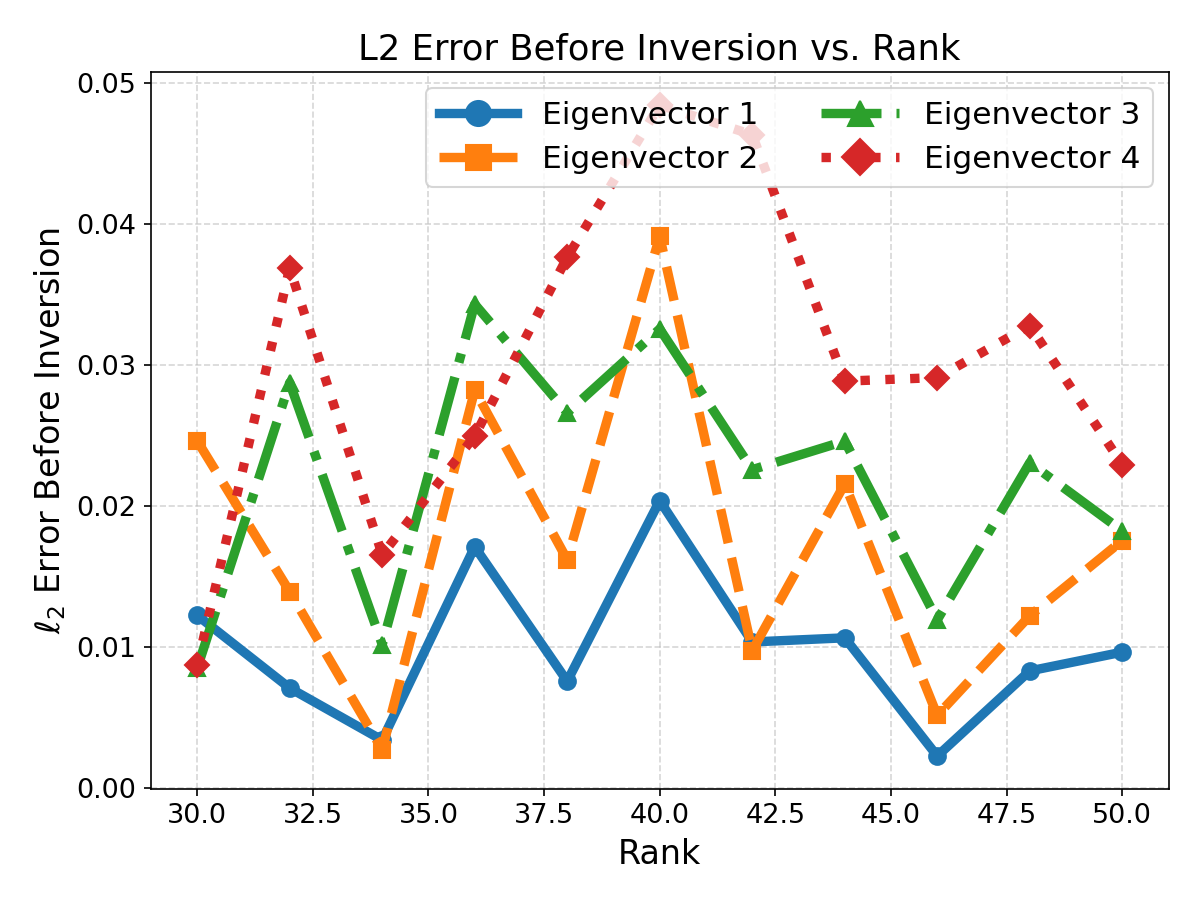}
		\caption*{$\min\{\|\tilde{u}_j - M u_j\|_2,\ \|\tilde{u}_j + M u_j\|_2\}$, $j \in [4]$}
	\end{minipage}%
	\hfill
	\begin{minipage}[t]{0.48\textwidth}
		\centering
		\includegraphics[width=0.9\linewidth]{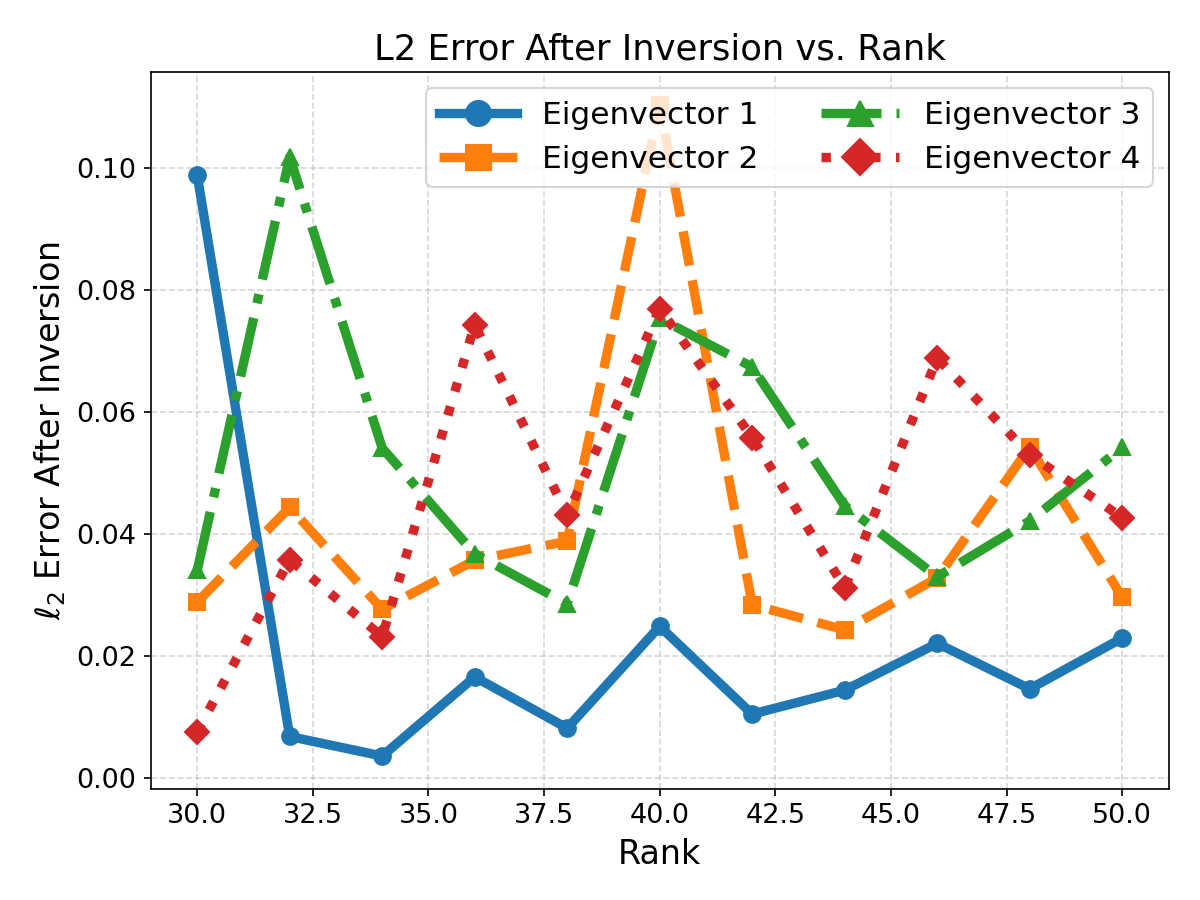}
		\caption*{$\min\{\|u_j - \mathcal{A}(\tilde{u}_j)\|_2,\ \|u_j + \mathcal{A}(\tilde{u}_j)\|_2\}$, $j \in [4]$}
	\end{minipage}
	
	\caption{
		Measurement and CS approximation errors for rank $r \in [30,50]$ matrices 
		$A \in \mathbb{R}^{(2^{27}-1)\times(2^{27}-1)}$ whose eigenvectors $u_j$ 
		are all fixed $s=100$ sparse. 
		The measurement matrix $M \in \mathbb{R}^{m \times N}$ had 
		$m = 2(1+\lceil\log_2(2^{27}-1)\rceil)\cdot 2965 = 166{,}040$, 
		so that $MAM^* \in \mathbb{R}^{166040 \times 166040}$.
	}
	\label{Fig:rankvaries}
\end{figure}

\paragraph{Effect of Eigenvector Sparsity on Reconstruction Error.}

Figure~\ref{Fig:sparsityvaries} (left) shows that the $\ell_2$ error before inversion remains low and consistent across a range of sparsity levels in the true eigenvectors, and behaves largely as expected with respect to the eigen-structure of $A$.  The CS measurements of the leading eigenvector of $A$ provided by the leading eigenvector of $MAM^*$ are generally most accurate, followed by the measurements of the second leading eigenvector of $A$ provided by the second leading eigenvector of $MAM^*$, etc..  This suggests that the sketching process retains the dominant eigenvectors' CS measurement information even as sparsity varies.  

On the other hand, the error after CS inversion (right) grows noticeably as the eigenvectors become less sparse. This trend indicates that standard CS measurement thresholds are at play.  That is, the way the CS measurements are being computed (as eigenvectors of the $MAM^*$ sketch) appears to be less of a bottleneck than the usual requirements of one's favorite CS algorithm.  We consider this to be a positive attribute of the proposed approach.

\paragraph{Effect of Matrix Rank on Reconstruction Error.}

As shown in Figure~\ref{Fig:rankvaries} (left), the error before inversion remains consistently small (below $0.05$ accross all runs) as the rank of the PSD matrix increases with other parameters held fixed. This suggests that the sketch captures the dominant eigenvector measurements well even as the number of less significant eigenvectors grows. Moreover, the error after inversion is relatively stable as a function of the rank.  This again bolsters our general observation that the recovery errors one will encounter using the proposed approach will be largely dominated by properties of the CS algorithm one employs for matrices $A$ that exhibit sufficiently fast spectral decay.

\subsection{An Evaluation of Algorithm~\ref{Alg:ApproxfromMAM} as the Sketch Size Varies}

In Figure~\ref{fig:Measurevary} $r$ is fixed to $20$, $s$ is fixed to $100$, and then the same two relative $\ell_2$ errors as in the last section are plotted as $m$ varies.  \\

\begin{figure}[!htb]
\centering
\begin{minipage}[t]{0.48\textwidth}
  \centering
  \includegraphics[width=\linewidth]{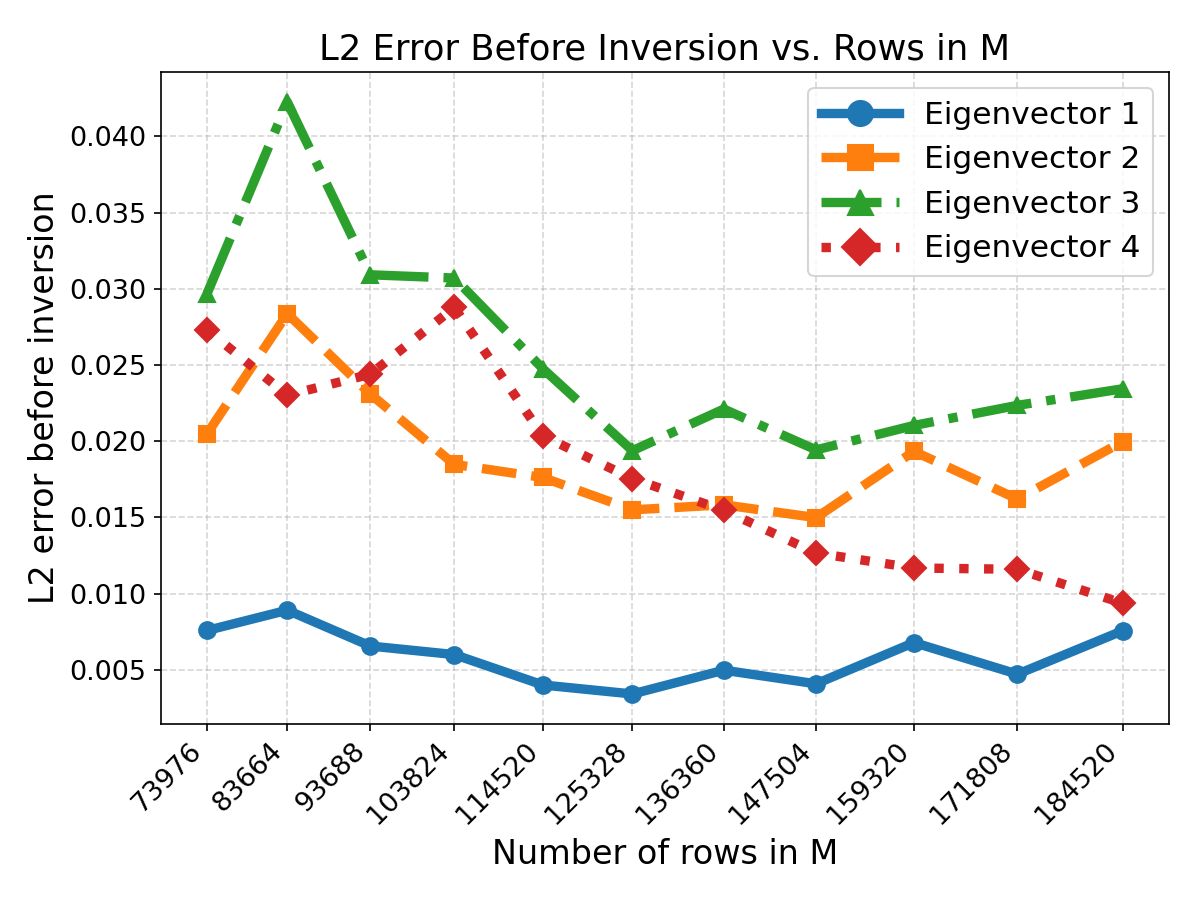}
  \captionof*{figure}{$\min \left\{ \|\tilde{ \u}_j - M \u_j\|_2,\|\tilde{\u}_j + M \u_j\|_2\right\}$, $j \in [4]$}
  \label{fig:beforeinversionmeasure}
\end{minipage}%
\hfill
\begin{minipage}[t]{0.48\textwidth}
  \centering
  \includegraphics[width=\linewidth]{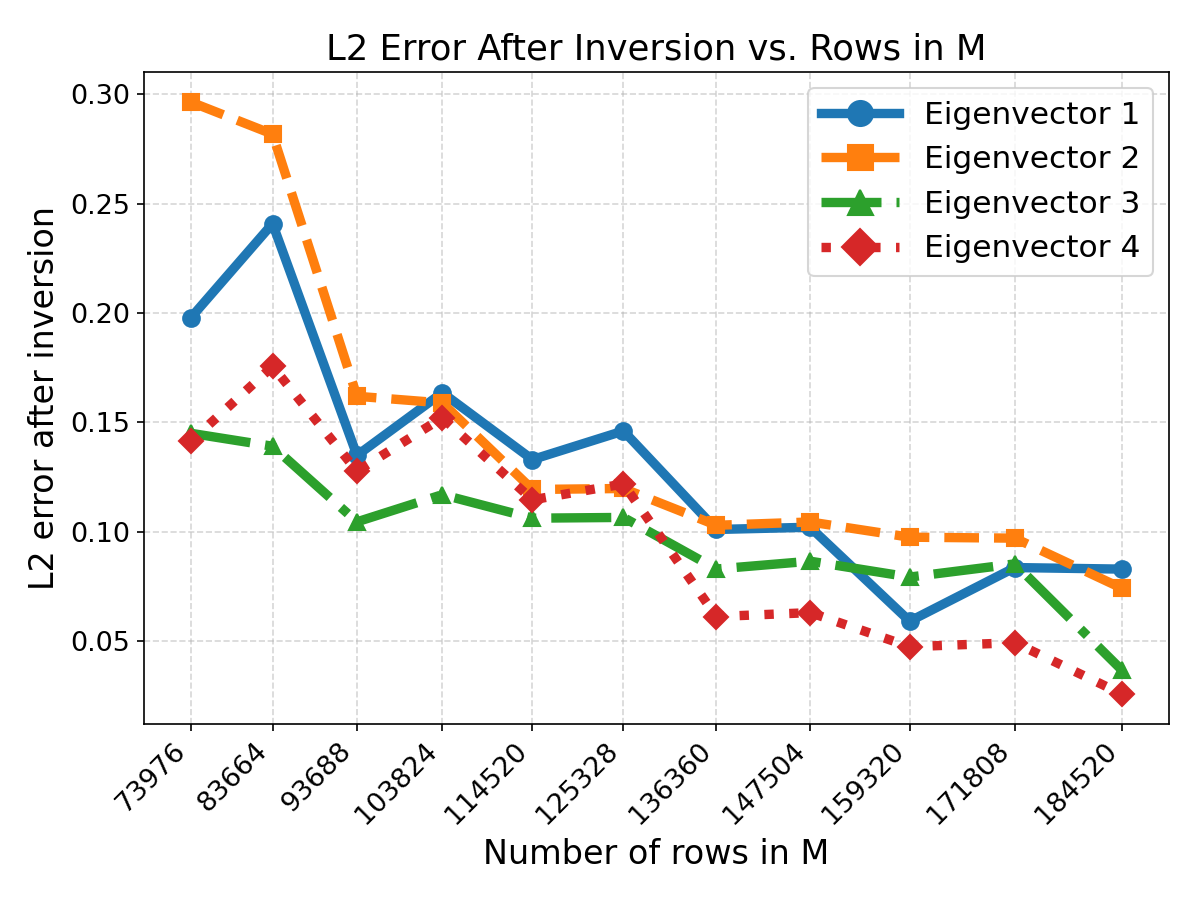}
  \captionof*{figure}{$\min \left\{ \| \u_j - \mathcal{A}(\tilde{\u}_j) \|_2, \| \u_j + \mathcal{A}(\tilde{\u}_j) \|_2 \right\}$, $j \in [4]$}
  \label{fig:afterCSerrormeas}
\end{minipage}
\caption{Measurement and CS approximation errors for rank $r =50$ matrices $A \in \mathbbm{R}^{(2^{27}-1) \times (2^{27}-1)}$ whose eigenvectors $\u_j$ are all $s = 100$ sparse.  The measurement matrices $M \in \mathbbm{R}^{m \times N}$ have $m \in [73976,184520]$ rows. Effectively this corresponds to the matrix $W$ with $w \in [1321,3295]$ rows.}
\label{fig:Measurevary}
\end{figure}
\paragraph{Effect of the Number of Measurements on Error.}

Figure~\ref{fig:Measurevary} (left) shows that the reconstruction error before inversion decreases slowly as the number of measurements $m$ increases, while the error after inversion (right) drops more sharply. Principally, with more measurements, the sketch contains richer information about the leading eigenvectors which then supports more reliable CS recovery. 

\subsection{The $\beta^{A}_{M}$ Value in Theorem~\ref{THM:MAINRESULT_SUBLINEAR}} \label{sec:BetaExperiments}

Recall that $\beta^{A}_{M} = 7 \max_{j \in [\ell]} \beta_{m}\left(\utilde_j - \mathbbm{e}^{\mathbbm{i} \phi'_j}M\u_j \right)$   , where $\beta_{m}(\cdot)$ is defined as in \eqref{equ:DefofBeta_mn} and $\phi'_j \in [0,2\pi)$ is such that $\left\| \utilde_j - \mathbbm{e}^{\mathbbm{i} \phi'_j} M \u_j \right\|_2 = \min_{\phi \in [0,2\pi)} \| \mathbbm{e}^{\mathbbm{i} \phi} \utilde_j - M \u_j \|_2$.  In this section explore empirical values of $\beta^{A}_{M}$ in real-valued setting when $\ell = 4$.  Toward this end we define $s_j^k \in \{-1,1\}$ to be such that 
$$\left\| \utilde_j^k - s_j^k M \u^k_j \right\|_2 = \min \left\{ \left\|\utilde_j^k - M\u_j^k \right\|_2, \left\|\utilde_j^k + M\u_j^k \right\|_2\right\},$$
and then plot
$$\beta = \frac{1}{100}\sum_{k=1}^{100} \max_{j \in [4]} \beta_{m}\left(\utilde^k_j - s_j^kM\u_j^k \right) = \frac{1}{100}\sum_{k=1}^{100} \left(\max_{j \in [4]} \frac{ \left\| \utilde^k_j - s_j^kM\u_j^k \right\|_\infty \sqrt{ s K(1 + \lceil \log_2 N \rceil)}}{\left\| \utilde_j^k - s_j^kM\u_j \right\|_2}\right),$$  
where $\utilde^k_j \in \mathbbm{R}^m$ and $\u^k_j \in \mathbbm{R}^N$ are as above.  See Figure~\ref{fig:beta_m} for plots $\beta$ as $s$, $r$, and $m$ each vary.  Recall that these values should be $\mathcal{O}(1)$ whenever $\utilde_j^k - s_j^k M \u^k_j$ is flat (see Remark~\ref{remark:sublinearBetaissmall}).

\begin{figure}[!htb]
\centering
\begin{minipage}{.33\textwidth}
  \centering
  \includegraphics[width=\linewidth]{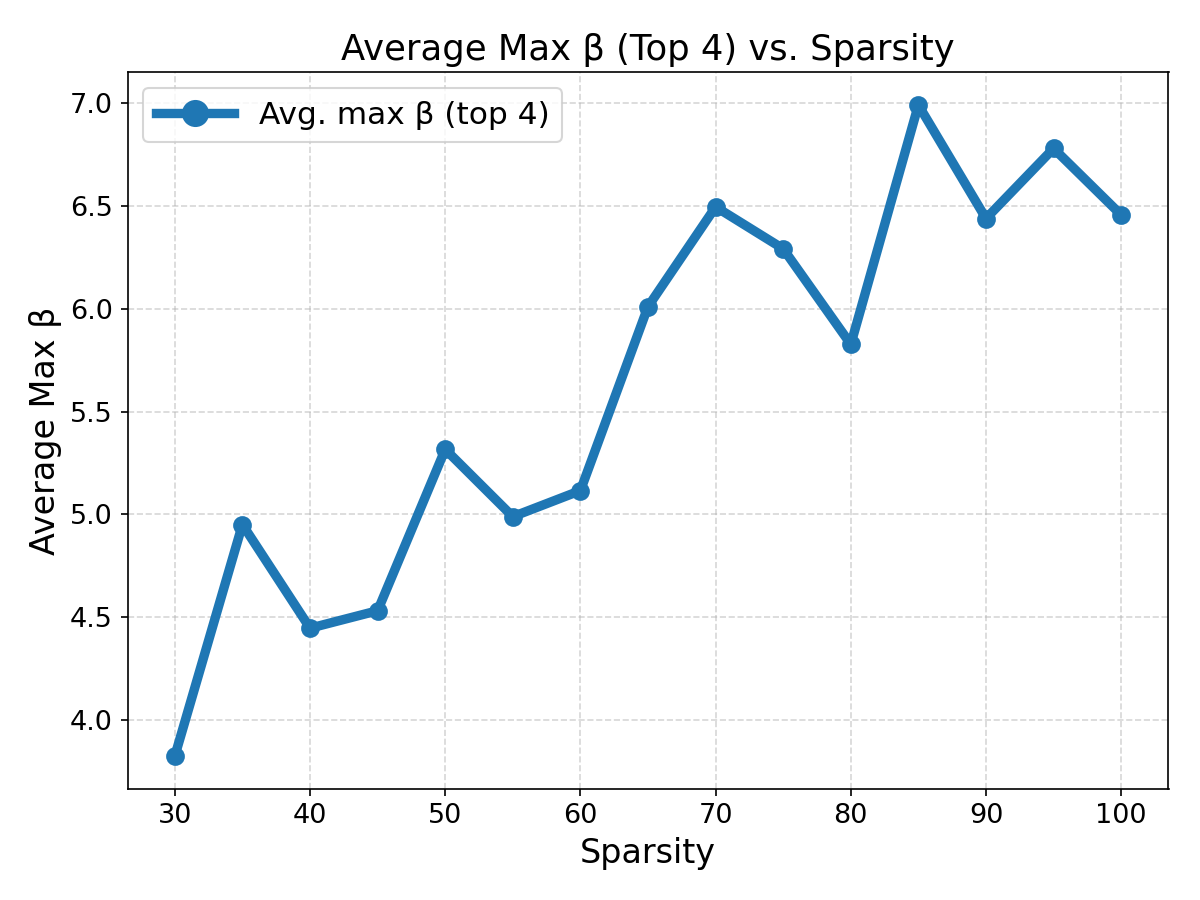}
  \captionof*{figure}{\tiny $r = 20$, $m = 120232$, $s \in [30,70]$}
  \label{fig:test_beta1}
\end{minipage}%
\begin{minipage}{.33\textwidth}
  \centering
  \includegraphics[width=1\linewidth]{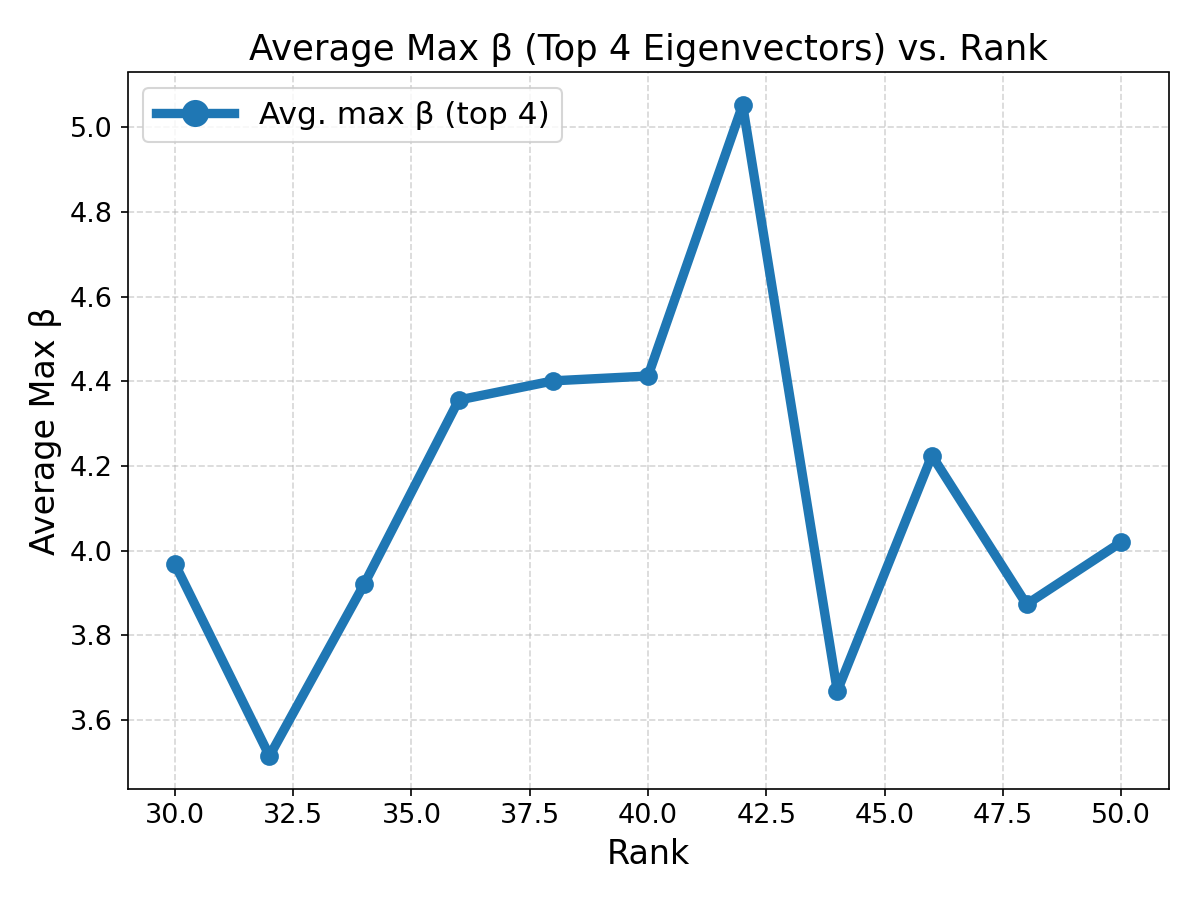}
  \captionof*{figure}{\tiny $s = 100$, $m = 166040$, $r \in [30, 50]$}
  \label{fig:test_beta2}
\end{minipage}
\begin{minipage}{.33\textwidth}
  \centering
  \includegraphics[width=\linewidth]{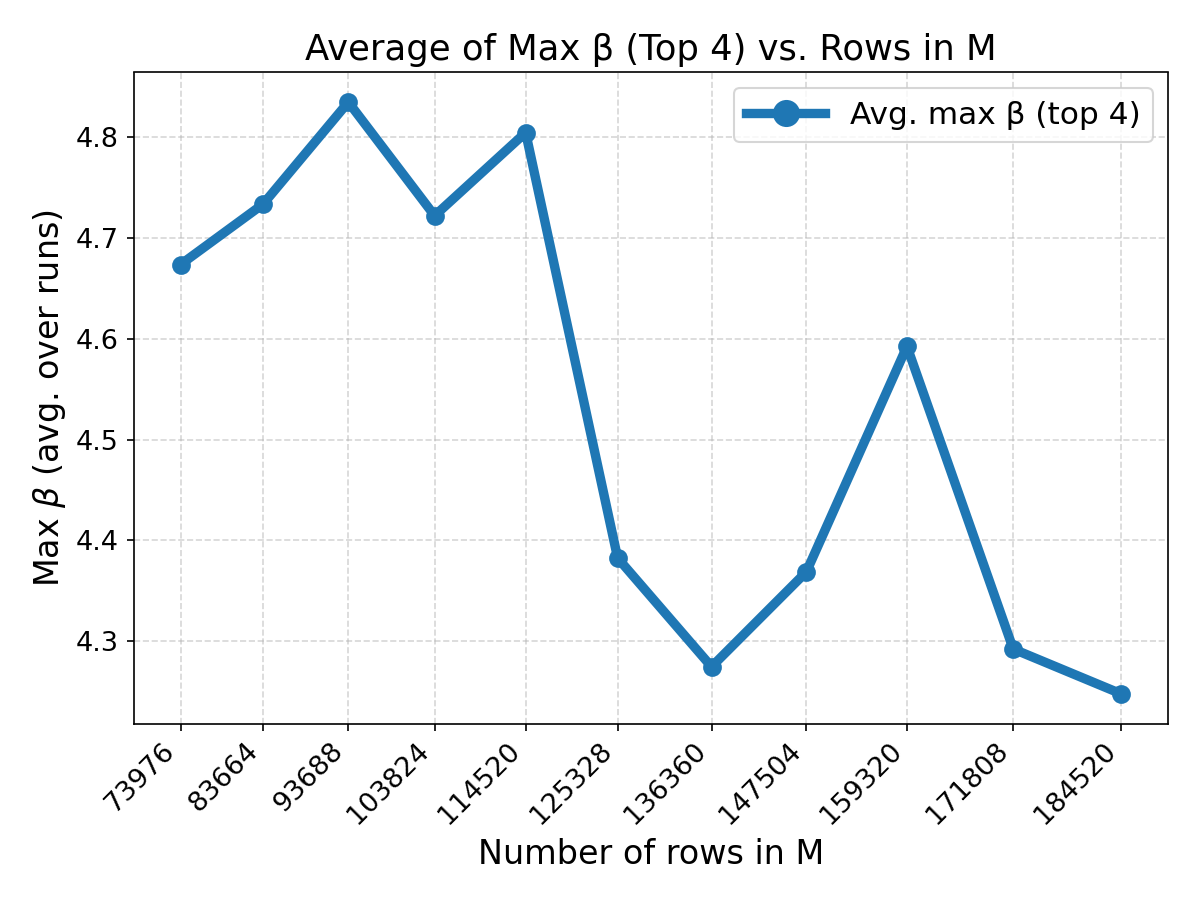}
  \captionof*{figure}{\tiny $r =50$, $s = 100$, $m \in [73976,184520]$}
  \label{fig:test_beta3}
\end{minipage}
\caption{Plot of average $\beta_m$ values associated with Theorem~\ref{THM:MAINRESULT_SUBLINEAR} across the experiments reported in Section~\ref{sec:Experiments}.  Note that all the reported values are less than $8$.}\label{fig:beta_m}
\end{figure}
\paragraph{Behavior of $\beta$.}

Figure~\ref{fig:beta_m} demonstrates that the average $\beta$ value remains less than $8$ in all the experiments presented herein.  Moreover, $\beta$ tends to decrease with more measurements (see, e.g., the far right plot) in keeping with our intuition.  

\subsection{Noisy Low-Rank Matrices with Compressible Eigenvectors and Slower Spectral Decay}

Let ${\bf u'} \in \mathbbm{R}^N$ for $N = 10^7$ have compressible entries given by
$$u'_k := \begin{cases} 1 & {\rm if}~ 1 \leq k \leq s := 100 \\
\frac{1}{(k-s+1)^2} & {\rm if}~ k > s 
\end{cases}, $$
and then set ${\bf u} := {\bf u'}/\| {\bf u'} \|_2$.  In this section we consider random rank-$4$ test matrices of the form 
\begin{align}
A_j := \left( \sum_{\ell=1}^{3} (11 - \ell) {\bf u}_{j,\ell} {\bf u}_{j,\ell}^{*} \right) + \tau {\bf g}_j {\bf g}_j^* \in \mathbbm{R}^{10^7 \times 10^7},  
\label{equ:compressibletrialAs}
\end{align} 
where each ${\bf g}_j \in \mathbbm{R}^N$ is an i.i.d. normalized standard Gaussian vector, and 
each ${\bf u}_{j,\ell} := \Pi_{j,\ell} D_{j,\ell}{\bf u}$ is independently generated using a uniformly random permutation matrix $\Pi_{j,\ell} \in \{0,1\}^{N \times N}$ together with a random diagonal sign matrix $D_{j,\ell} \in \{-1, 0, 1\}^{N \times N}$ having i.i.d. Rademachers on it's diagonal.  The set $S_{j,\ell} \subset [N]$ records the locations of the $s$ largest-magnitude entries in each ${\bf u}_{j,\ell} \in \mathbbm{R}^N$ so that $S_{j,\ell} := \left\{ i ~\big|~ |({\bf u}_{j,\ell})_i| = 1 \right\}$.  As in the preceding experiments, the measurement matrix is of the form \eqref{Meas_structure} where, here, 
$W$ has $7,013$ rows and $B_N'$ 
has $50$ rows, yielding $m = 50 \times 7,013 = 350,650 < N = 10^7$ total rows in $M$ for all experiments.

Below we consider $100$ $N \times N$ i.i.d. test matrices $A_1, \dots, A_{100}$ generated according to \eqref{equ:compressibletrialAs}.  Each matrix sketch $MA_jM^* \in \mathbbm{R}^{m \times m}$ is then computed as 
$$ MA_jM^* = \tau (M{\bf g}_j)(M{\bf g}_j)^* + \sum_{\ell = 1}^{3}(11-\ell)(M{\bf u}_{j,\ell})
(M{\bf u}_{j,\ell})^* ~~ \forall j \in [100],$$
where the value $\tau$ varies over the set $\{10^7, 10^8, \ldots, 10^{16}\}$.  Note that the dominant eigenvector of each sketch matrix $MA_jM^*$ corresponds to the rank-$1$ additive noise term $\tau (M{\bf g}_j)(M{\bf g}_j)^*$ since $\tau \gg 10$ always holds.  We denote the eigenvectors corresponding to the second, third, and forth largest eigenvalues of $MA_jM^*$ by $\tilde{\bf u}_{j,1} \in \mathbbm{R}^m, \tilde{\bf u}_{j,2} \in \mathbbm{R}^m$, and $\tilde{\bf u}_{j,3} \in \mathbbm{R}^m$, respectively.  The CS algorithm $\mathcal{A}: \mathbbm{C}^m \rightarrow \mathbbm{C}^N$ utilized below is again 
\cite[Algorithm~25]{Iwennotes}.\footnote{Recall that \cite[Algorithm 25]{Iwennotes} is a trivial modification of \cite[Algorithm 1]{BaileyIwenSpencer2012}.}

\begin{figure}[ht]
\centering
\includegraphics[width=0.32\textwidth]{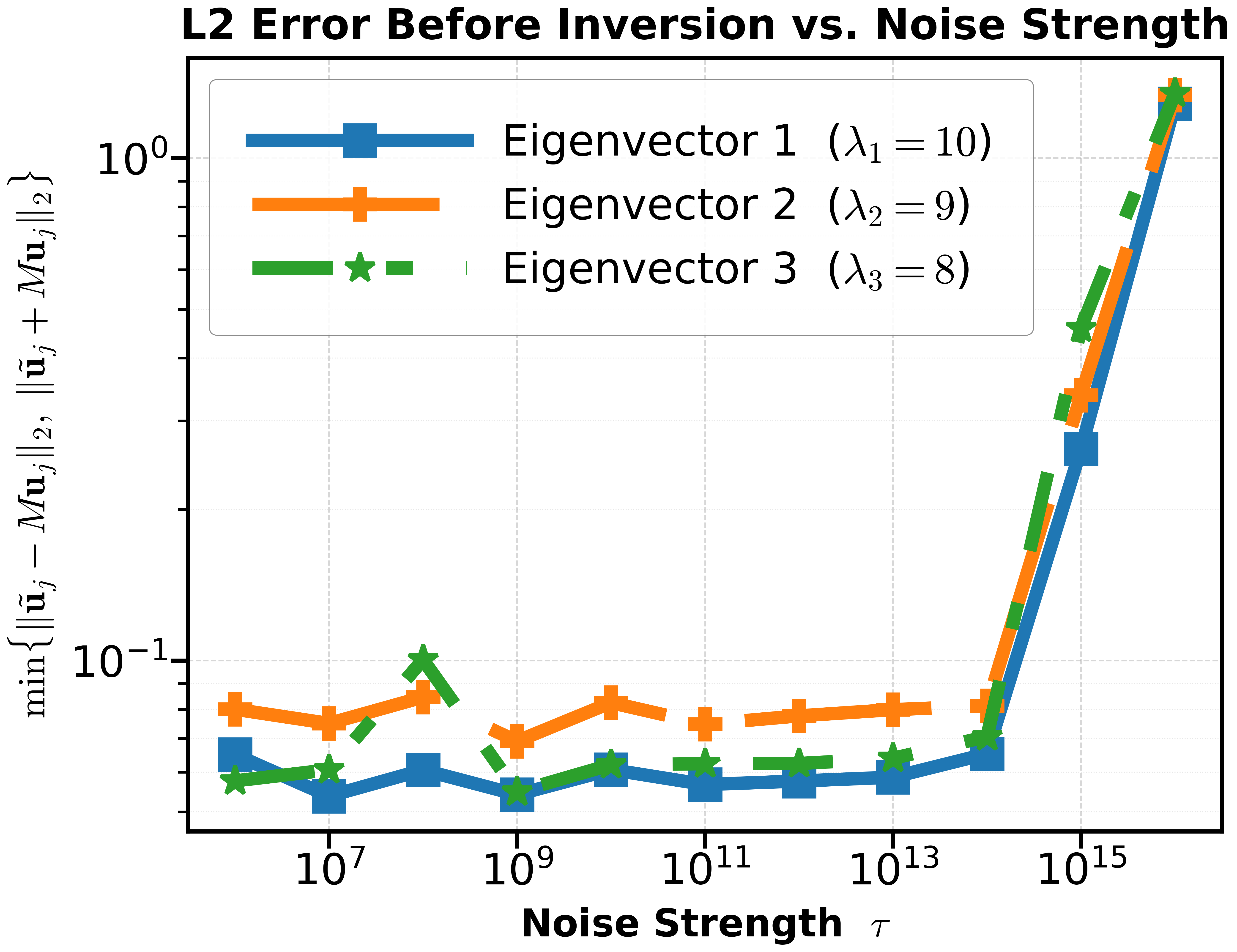}
\hfill
\includegraphics[width=0.32\textwidth]{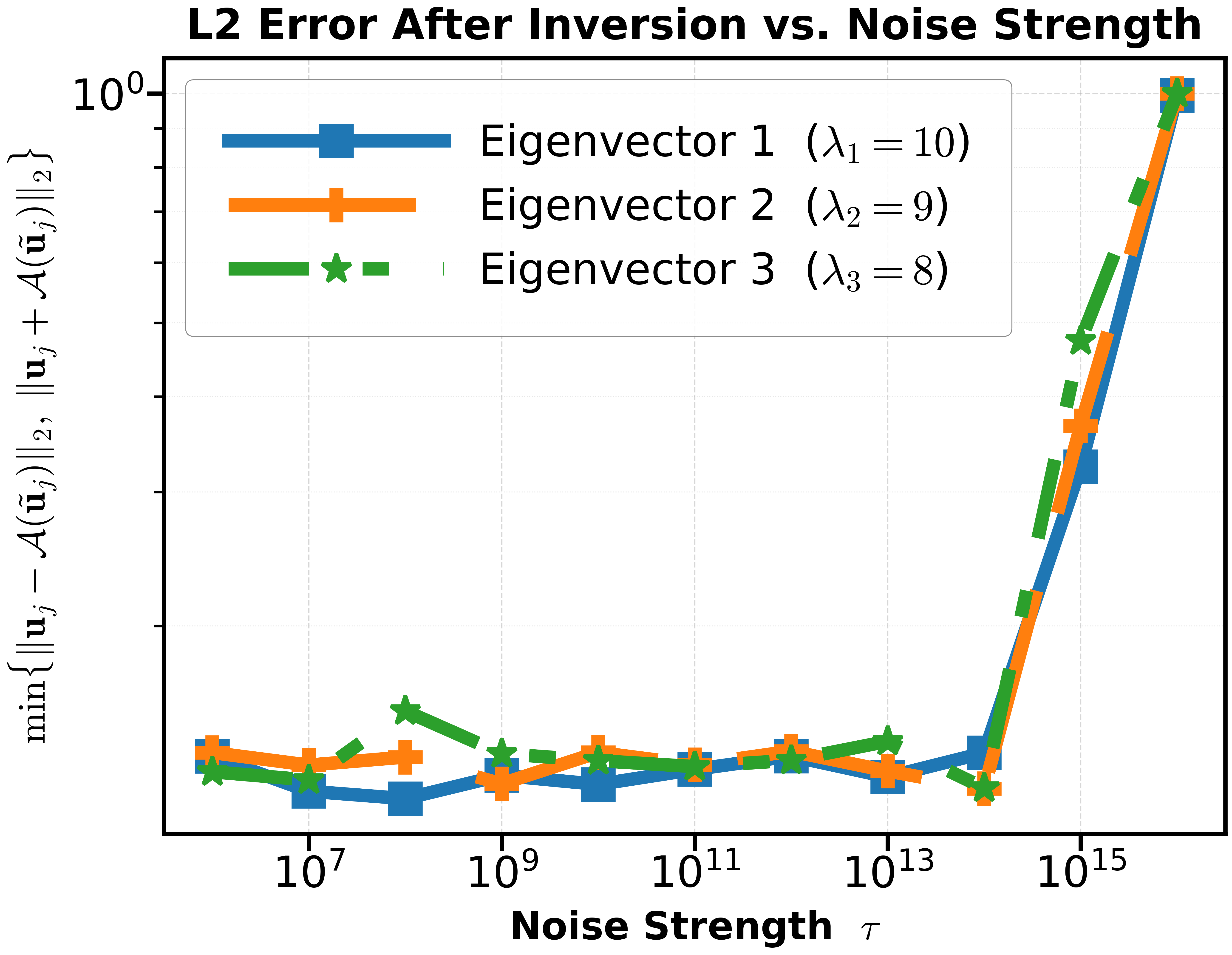}
\hfill
\includegraphics[width=0.32\textwidth]{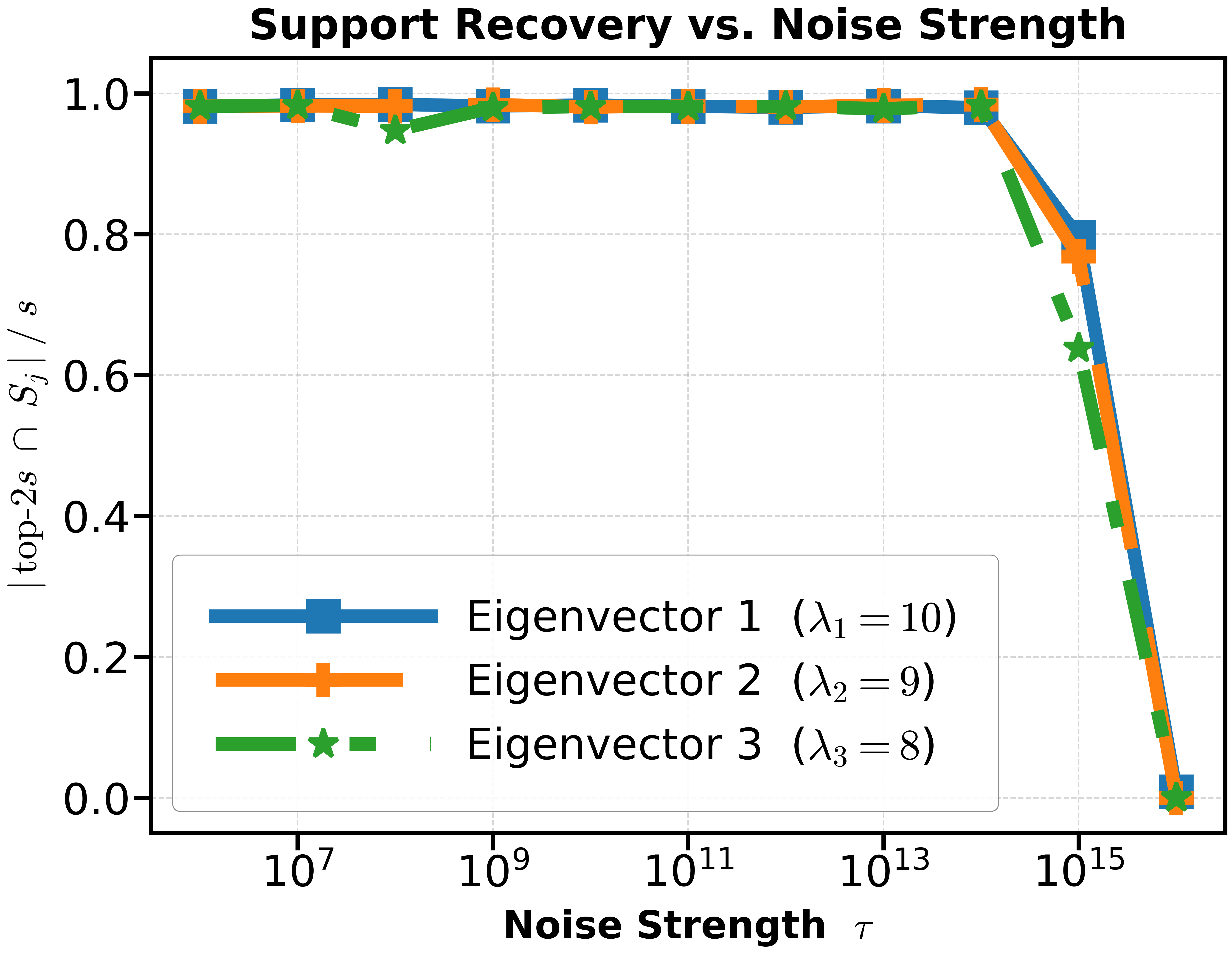}
\caption{Measurement error (left), CS recovery error (center), and
support recovery (right) for rank-$4$ matrices
$A\in\mathbb{R}^{(10^7)\times(10^7)}$ with compressible eigenvectors as per \eqref{equ:compressibletrialAs}
($s=100$, $\lambda_1 =10$, $\lambda_2=9$, $\lambda_3=8$)
corrupted by dominant rank-$1$ Gaussian noise $\tau\in[10^{7},10^{16}]$.}
\label{fig:noisy_compressible}
\end{figure}

In Figure~\ref{fig:noisy_compressible} we plot three averaged performance metrics (vertical axes) against $\tau \in \{10^7, 10^8, \ldots, 10^{16}\}$ (horizontal axes) for each of $\ell = 1, 2$, and $3$.  The plotted metrics are:
\begin{enumerate}
    \item \textbf{Measurement error} (left):
$$\frac{1}{100}\sum_{j=1}^{100}
\min\left\{\left\|\tilde{\bf u}_{j,\ell}
- M{\bf u}_{j,\ell}\right\|_2,\left\|\tilde{\bf u}_{j,\ell}
+ M{\bf u}_{j,\ell}\right\|_2\right\}.$$
\item \textbf{Compressive Sensing (CS) Recovery error} (center):
$$\frac{1}{100}\sum_{j=1}^{100}
\min\left\{\left\|{\bf u}_{j,\ell}
- \mathcal{A}\left(\tilde{\bf u}_{j,\ell}\right)\right\|_2,\left\|{\bf u}_{j,\ell}
+ \mathcal{A}\left(\tilde{\bf u}_{j,\ell}\right)\right\|_2\right\}.$$
\item \textbf{Support recovery fraction} (right):
$$\frac{1}{100} \sum_{j=1}^{100}\frac{\left|\operatorname{supp}_{2s}\left(
\mathcal{A}(\tilde{\bf u}_{j,\ell})\right)
\cap\, S_{j,\ell}\right|}{s},$$
where $\operatorname{supp}_{2s}({\bf x})$  denotes the indices of the $2s$ largest entries of ${\bf x}$ in magnitude.
\end{enumerate}
Figure~\ref{fig:noisy_compressible} demonstrates that the $MAM^*$ approach is not limited to exactly sparse eigenvectors associated with exponentially decaying eigenvalues.  The method still accurately approximates the original ${\bf u}_{j,\ell}$-vectors across a wide range of noise strengths (i.e., $\tau$ values) despite them being compressible and associated with linearly decaying eigenvalues.  As one can see, the relative measurement and recovery errors remain stable for $\tau$ up to approximately $10^{14}$ after which the additive noise matrix $\tau {\bf g}_j {\bf g}_j^*$ completely overwhelms the compressible components.

\section{Proofs of Main Results}
\label{sec:MainResFinalProofs}

We prove our two example main results below.

\subsection{Proof of Theorem~\ref{THM:MAINRESULT_LINEARcosamp}}
\label{sec:PROOFOFMAINRESULT_LINEAR}

We reproduce Theorem~\ref{THM:MAINRESULT_LINEARcosamp}
below for ease of reference.  

\begin{theorem} 
Let $q \in (0,1/3)$, $c \in [1,\infty)$, $\ell, r \in [N]$, and $\epsilon \in (0,1)$ be such that $\epsilon < \min \left\{ \frac{1}{20}, \frac{1}{4}\left(\frac{1-3q}{1+q}\right) \right\}$ and $2 \leq \ell \leq r$. Suppose that $\A\in \mathbbm{C}^{\N\times\N}$ is Hermitian and PSD with eigenvalues $\lambda_1 \geq \lambda_2 \geq \dots \geq \lambda_N \geq 0$ satisfying
\begin{enumerate}
    \item $\lambda_j = c q^j$ for all $j \in [\ell] \subseteq [r]$, and
    \item $\|\A_{\backslash \r}\|_* \leq \epsilon \lambda_\ell$.
\end{enumerate}
Choose $s\in [\N]$, $p, \eta \in (0,1)$, and form a random matrix $\M \in\mathbbm{C}^{\m\times\N}$ with $m = \mathcal{O}\left( \max \left\{s,\frac{r}{\epsilon^2} \right\} \log^4\left( N/p\epsilon^2 \right) \right)$ as per Theorem~\ref{Thm:MainCosampMatrixSetup}.  Let $\utilde_j$ and $\u_j$ be the ordered eigenvectors of $MAM^*$ \eqref{def:SVD_A_Atilde} and $A$ \eqref{def:SVD_A}, respectively, for all $j \in [\ell]$.  Then, there exists a compressive sensing algorithm $\mathcal{A}_{\rm lin}: \mathbbm{C}^m \rightarrow \mathbbm{C}^N$ and an absolute constant $c' \in \mathbbm{R}^+$ such that 
\begin{align*}
		&\min_{\phi \in [0,2\pi)} \left \| \mathbbm{e}^{\mathbbm{i} \phi} \u_j - \mathcal{A}_{\rm lin}(\utilde_j) \right \|_2 < c' \cdot \max \left\{ \eta, \frac{1}{\sqrt{s}} \| \u_j - (\u_j)_s\|_1 + \sqrt{\epsilon} \cdot q^{1-j} \right\}
\end{align*}
holds for all $j \in [\ell]$ with probability at least $1-p$.  Furthermore, all $\ell$ estimates $\left\{ \mathcal{A}_{\rm lin}(\utilde_j) \right \}_{j \in [\ell]}$ can be computed in $\mathcal{O}\left(m^3 + \ell N \log N \cdot \log(1/\eta) \right)$-time from $MAM^* \in \mathbbm{C}^{m \times m}$.
\end{theorem}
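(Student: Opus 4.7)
The plan is to combine two ingredients already assembled in the paper: Theorem~\ref{MainTHM:ExpDecayingSingVlaues}, which on the appropriate good event controls the per-index measurement error $\min_{\phi} \|\mathbbm{e}^{\mathbbm{i}\phi}\utilde_j - M\u_j\|_2$, and Theorem~\ref{Thm:MainCosampMatrixSetup}, which supplies a single random $M \in \mathbbm{R}^{m \times N}$ that simultaneously (i) satisfies the three JL hypotheses of Theorem~\ref{theorem:MAM} and (ii) comes with a linear-time CoSaMP-type decoder $g : \mathbbm{C}^m \to \mathbbm{C}^N$ obeying the standard compressive sensing error bound. First I would apply Theorem~\ref{Thm:MainCosampMatrixSetup} with the given $s, r, p, \eta, \epsilon$ to produce such an $M$ (with the stated row count) and set $\mathcal{A}_{\rm lin} := g$. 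Conditioning on the probability-$(1-p)$ event that the conclusions of Theorem~\ref{Thm:MainCosampMatrixSetup} hold, I would then invoke Theorem~\ref{MainTHM:ExpDecayingSingVlaues} (whose hypotheses on $q$, $c$, $\ell$, $r$, $\epsilon$, and on $A$ are precisely those in the theorem being proved) to obtain, for each $j \in [\ell]$, the bound $\min_{\phi}\|\mathbbm{e}^{\mathbbm{i}\phi}\utilde_j - M\u_j\|_2 < 7\sqrt{\epsilon}\,q^{1-j}$.

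The one genuine subtlety is tracking the global phase, and this is where I expect the main (mild) care to be required. For each $j \in [\ell]$ let $\phi_j^\ast \in [0,2\pi)$ attain the minimum above, and define $\u_j' := \mathbbm{e}^{-\mathbbm{i}\phi_j^\ast}\u_j$, so that $\utilde_j = M\u_j' + {\bf e}_j$ with $\|{\bf e}_j\|_2 < 7\sqrt{\epsilon}\,q^{1-j}$. Because multiplication by a unit-modulus scalar preserves every $\ell_p$-norm, one has both $\|\u_j'\|_2 = \|\u_j\|_2 = 1$ and $\|\u_j' - (\u_j')_s\|_1 = \|\u_j - (\u_j)_s\|_1$. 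Applying Theorem~\ref{Thm:MainCosampMatrixSetup} to the pair $(\u_j', {\bf e}_j)$ then yields
\[
\|\u_j' - \mathcal{A}_{\rm lin}(\utilde_j)\|_2 \;\leq\; c\,\max\!\left\{\eta,\; \tfrac{1}{\sqrt{s}}\|\u_j - (\u_j)_s\|_1 + 7\sqrt{\epsilon}\,q^{1-j}\right\}
\]
for the absolute constant $c$ of Theorem~\ref{Thm:CoSamp}. Since $\|\u_j' - \mathcal{A}_{\rm lin}(\utilde_j)\|_2 = \|\mathbbm{e}^{-\mathbbm{i}\phi_j^\ast}\u_j - \mathcal{A}_{\rm lin}(\utilde_j)\|_2$, this quantity dominates $\min_{\phi}\|\mathbbm{e}^{\mathbbm{i}\phi}\u_j - \mathcal{A}_{\rm lin}(\utilde_j)\|_2$, and the stated inequality follows after absorbing the factor $7$ into a new absolute constant $c'$.

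For the runtime, I would note that computing the top-$\ell$ eigenvectors of the Hermitian $m \times m$ matrix $MAM^\ast$ costs $\mathcal{O}(m^3)$ by any standard dense eigensolver. Each subsequent call $\mathcal{A}_{\rm lin}(\utilde_j)$ runs in $\mathcal{O}(N \log N \cdot \log(\|\u_j'\|_2/\eta)) = \mathcal{O}(N \log N \cdot \log(1/\eta))$ by Theorem~\ref{Thm:CoSamp}, using the fact that the true signals $\u_j'$ are unit-norm eigenvectors and that $M$ (and hence $M^\ast$) admits an $\mathcal{O}(N\log N)$-time matrix-vector product by construction in \eqref{equ:DefCosampM}. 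Summing over the $\ell$ calls gives the advertised $\mathcal{O}(m^3 + \ell N \log N \cdot \log(1/\eta))$ total, with $\mathcal{O}(N)$ working memory inherited from the implicit representation of $M$. No step here poses a serious technical obstacle; the entire argument is a clean composition of the prior theorems together with the phase-absorption bookkeeping described above.
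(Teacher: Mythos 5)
Your proposal is correct and follows essentially the same route as the paper's proof: invoke Theorem~\ref{Thm:MainCosampMatrixSetup} to obtain an $M$ that simultaneously satisfies the hypotheses of Theorem~\ref{theorem:MAM} and carries the CoSaMP guarantee, apply Theorem~\ref{MainTHM:ExpDecayingSingVlaues} to bound the measurement error by $7\sqrt{\epsilon}\,q^{1-j}$, feed the phase-rotated eigenvector plus noise into the CoSaMP bound (the paper rotates $\u_j$ by $\mathbbm{e}^{\mathbbm{i}\phi'_j}$ rather than defining $\u_j'$, a purely notational difference), absorb the factor $7$ into the constant, and conclude with the same $\mathcal{O}(m^3 + \ell N\log N\cdot\log(1/\eta))$ runtime accounting. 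No gaps.
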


\begin{proof} 
Define $\mathcal{A}_{\rm lin}(\cdot) := g ( \cdot )$ as per Theorem~\ref{Thm:CoSamp} and let $\phi'_j \in [0,2\pi)$ be such that $\left\| \utilde_j - \mathbbm{e}^{\mathbbm{i} \phi'_j} M \u_j \right\|_2 = \min_{\phi \in [0,2\pi)} \| \mathbbm{e}^{\mathbbm{i} \phi} \utilde_j - M \u_j \|_2$ for all $j \in [\ell]$.  Theorem~\ref{Thm:MainCosampMatrixSetup} implies that, with probability at least $1-p$, $\M \in\mathbbm{R}^{\m\times\N}$ will satisfy the assumptions in Theorem~\ref{theorem:MAM}. As a consequence, Theorem~\ref{MainTHM:ExpDecayingSingVlaues} tells us that 
\begin{align}\label{equ:MainTh1proof}
    \left\| \utilde_j - \mathbbm{e}^{\mathbbm{i} \phi'_j} M \u_j \right\|_2 < 7 \sqrt{\epsilon} \cdot q^{1-j} ~~\forall j \in [\ell].
\end{align}
In addition, whenever \eqref{equ:MainTh1proof} holds part $(ii)$ of Theorem~\ref{Thm:MainCosampMatrixSetup} also holding then further implies that
\begin{align*}
\min_{\phi \in [0,2\pi)} \big \| \mathbbm{e}^{\mathbbm{i} \phi} \u_j -& \mathcal{A}_{\rm lin}(\utilde_j) \big \|_2 \leq \left\| \mathbbm{e}^{\mathbbm{i} \phi'_j} \u_j - \mathcal{A}_{\rm lin}(\utilde_j) \right\|_2\\
		&\qquad = \left \| \mathbbm{e}^{\mathbbm{i} \phi'_j} \u_j - g \left ( \mathbbm{e}^{\mathbbm{i} \phi'_j} M \u_j + \left( \utilde_j - \mathbbm{e}^{\mathbbm{i} \phi'_j} M \u_j \right) \right ) \right \|_2 \\ & \qquad  \leq c'' \cdot \max \left\{ \eta, \frac{1}{\sqrt{s}} \| {\u}_j - (\u_j)_s\|_1 + \left \| \utilde_j - \mathbbm{e}^{\mathbbm{i} \phi'_j} M \u_j \right \|_2 \right \}\\ 
        & \qquad  < c'' \cdot \max \left \{ \eta, \frac{1}{\sqrt{s}} \| \u_j - (\u_j)_s \|_1 + 7 \sqrt{\epsilon} \cdot q^{1-j} \right \}
\end{align*}
holds for all $j \in [\ell]$, where $c''$ is the absolute from the error bound in part $(ii)$ of Theorem~\ref{Thm:MainCosampMatrixSetup}.  Setting $c' = 7c''$ finishes the proof of the error bound.

Turning our attention to the claimed measurement and runtime bounds, we refer the reader to Theorem~\ref{Thm:MainCosampMatrixSetup} for the quoted upper bound on $m$.  Concerning the time required to compute $\left\{ \mathcal{A}_{\rm lin}(\utilde_j) \right \}_{j \in [\ell]}$, we note that the top $\ell \leq m$ eigenvectors $\left\{ \utilde_j \right \}_{j \in [\ell]}$ of $MAM^*$ can be computed in $\mathcal{O}(m^3)$-time (see, e.g., \cite{golub1996matrix}).  Furthermore, once those eigenvectors are in hand, $\left\{ \mathcal{A}_{\rm lin}(\utilde_j) \right \}_{j \in [\ell]}$ can then be computed in $\mathcal{O}(\ell N \log N \cdot \log(1/\eta))$-time since $g: \mathbbm{C}^{m} \rightarrow \mathbbm{C}^N$, can always be evaluated in $\mathcal{O}(N \log N \cdot \log(1/\eta))$-time by Theorem~\ref{Thm:MainCosampMatrixSetup}. \end{proof}

\subsection{Proof of Theorem~\ref{THM:MAINRESULT_SUBLINEAR}}
\label{sec:PROOFOFMAINRESULT_SUBLINEAR}

We reproduce Theorem~\ref{THM:MAINRESULT_SUBLINEAR}
below for ease of reference.  

\begin{theorem} 
Let $q \in (0,1/3)$, $c \in [1,\infty)$, and $\ell, r, 1/\epsilon \in [N]$ be such that $\epsilon < \min \left\{ \frac{1}{20}, \frac{1}{4}\left(\frac{1-3q}{1+q}\right) \right\}$ and $2 \leq \ell \leq r$. Suppose that $\A\in \mathbbm{C}^{\N\times\N}$ is Hermitian and PSD with eigenvalues $\lambda_1 \geq \lambda_2 \geq \dots \geq \lambda_N \geq 0$ satisfying
\begin{enumerate}
    \item $\lambda_j = c q^j$ for all $j \in [\ell] \subseteq [r]$, and
    \item $\|\A_{\backslash \r}\|_* \leq \epsilon \lambda_\ell$.
\end{enumerate}
Choose $s\in [\N]$, $p \in (0,1)$, and form a random matrix $\M \in\mathbbm{C}^{\m\times\N}$ with $m = \mathcal{O}\left( \max \left\{ s^2, \frac{r^2}{\epsilon^2}  \right\} \log^5(N/p) \right)$ as per Theorem~\ref{Thm:MainSublinearMatrixSetup}.  Let $\utilde_j$ and $\u_j$ be the ordered eigenvectors of $MAM^*$ \eqref{def:SVD_A_Atilde} and $A$ \eqref{def:SVD_A}, respectively, for all $j \in [\ell]$.  Then, there exists a compressive sensing algorithm $\mathcal{A}_{\rm sub}: \mathbbm{C}^m \rightarrow \mathbbm{C}^N$ and $\beta^{A}_{M} \in \mathbbm{R}^+$ such that 
\begin{align*}
		&\min_{\phi \in [0,2\pi)} \left \| \mathbbm{e}^{\mathbbm{i} \phi} \u_j - \mathcal{A}_{\rm sub}(\utilde_j) \right \|_2\\ 
        & \qquad \qquad < \| \u_j - (\u_j)_{ 2s } \|_2 + 6(1+\sqrt{2}) \left( \frac{\| \u_j - (\u_j)_s \|_1}{\sqrt{ s }} + \beta^{A}_{M} \sqrt{\epsilon} \cdot q^{1-j}\right), 
\end{align*}
holds for all $j \in [\ell]$ with probability at least $1-p$.  Furthermore, all $\ell$ estimates $\left\{ \mathcal{A}_{\rm sub}(\utilde_j) \right \}_{j \in [\ell]}$ can be computed in just $\mathcal{O}\left(m^3 \right)$-time from $MAM^* \in \mathbbm{C}^{m \times m}$.
\end{theorem}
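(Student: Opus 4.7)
The plan is to mirror the proof of Theorem~\ref{THM:MAINRESULT_LINEARcosamp} almost verbatim, swapping the linear-time CoSaMP-based ingredients for the sublinear-time ingredients provided by Theorem~\ref{Thm:MainSublinearMatrixSetup}. First, I would set $\mathcal{A}_{\rm sub}(\cdot) := \sqrt{K(1 + \lceil \log_2 N \rceil)} \cdot D f (\cdot )$, where $f$ is the sublinear-time CS algorithm from Theorem~\ref{Thm:MainSublinearMatrixSetup} and $D$ is the Rademacher diagonal matrix used in building $M$. I would then invoke Theorem~\ref{Thm:MainSublinearMatrixSetup} to conclude that, with probability at least $1-p$, $M$ satisfies all three JL assumptions of Theorem~\ref{theorem:MAM}; the hypothesis on $m$ in Theorem~\ref{THM:MAINRESULT_SUBLINEAR} matches the quantitative bound coming from that theorem (via Remark~\ref{remark:sublinearmbound}).

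Next, since the hypotheses on $q$, $c$, $\epsilon$, $r$, $\ell$, and the spectrum of $A$ are identical to those of Theorem~\ref{MainTHM:ExpDecayingSingVlaues}, I would apply that theorem to obtain, for each $j \in [\ell]$, a phase $\phi'_j \in [0,2\pi)$ for which
\begin{equation*}
\left\| \utilde_j - \mathbbm{e}^{\mathbbm{i} \phi'_j} M \u_j \right\|_2 \;<\; 7\sqrt{\epsilon}\cdot q^{1-j}.
\end{equation*}
Writing the argument of $\mathcal{A}_{\rm sub}$ as $\utilde_j = \mathbbm{e}^{\mathbbm{i} \phi'_j} M \u_j + {\bf n}_j$ with ${\bf n}_j := \utilde_j - \mathbbm{e}^{\mathbbm{i} \phi'_j} M \u_j$, and then applying the error bound in part $(ii)$ of Theorem~\ref{Thm:MainSublinearMatrixSetup} to the vector $\mathbbm{e}^{\mathbbm{i} \phi'_j}\u_j$ (noting that multiplication by a global phase commutes through $D$ and through the best-$s$ and best-$2s$ sparse approximations), I would obtain
\begin{equation*}
\min_{\phi} \| \mathbbm{e}^{\mathbbm{i} \phi} \u_j - \mathcal{A}_{\rm sub}(\utilde_j)\|_2 \;\leq\; \| \u_j - (\u_j)_{2s}\|_2 + 6(1+\sqrt{2})\left(\tfrac{\|\u_j - (\u_j)_s\|_1}{\sqrt{s}} + \beta_m({\bf n}_j)\|{\bf n}_j\|_2\right).
\end{equation*}

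To consolidate the noise terms into the single constant $\beta^{A}_{M}$, I would set $\beta^{A}_{M} := 7 \max_{j \in [\ell]} \beta_m({\bf n}_j)$ and bound $\beta_m({\bf n}_j)\|{\bf n}_j\|_2 \leq \tfrac{\beta^{A}_{M}}{7}\cdot 7\sqrt{\epsilon}\,q^{1-j} = \beta^{A}_{M}\sqrt{\epsilon}\,q^{1-j}$, which yields the theorem's stated error bound uniformly in $j \in [\ell]$. This matches exactly the definition of $\beta^{A}_{M}$ given in the text following Theorem~\ref{THM:MAINRESULT_SUBLINEAR}.

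Finally, for the runtime, I would note that the top $\ell \leq m$ eigenvectors of the Hermitian $MAM^* \in \mathbbm{C}^{m\times m}$ can be computed in $\mathcal{O}(m^3)$-time by a standard dense eigendecomposition, and each evaluation of $\mathcal{A}_{\rm sub}$ takes $\mathcal{O}(m)$-time since $f$ itself runs in $\mathcal{O}(m)$-time by Theorem~\ref{Thm:FinalsublinearCSRes} (and the left multiplication by $D$ touches only the $\mathcal{O}(s)$ nonzero output coordinates of $f$). The only minor subtlety I anticipate is being careful about the complex-valued / phase invariance when invoking Theorem~\ref{Thm:MainSublinearMatrixSetup}, which is stated for a fixed input $\u$; this is handled by applying the bound to $\mathbbm{e}^{\mathbbm{i}\phi'_j}\u_j$ (whose best-$s$ approximations and $\ell_p$ norms are identical to those of $\u_j$) and then absorbing the global phase back into the minimum over $\phi$ on the left-hand side. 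No step looks substantively harder than in the linear-time case — the main conceptual work was already done in Sections~\ref{sec:Preliminaries}--\ref{sec:CompressiveSense}.
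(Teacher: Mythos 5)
Your proposal is correct and follows essentially the same route as the paper's proof: defining $\mathcal{A}_{\rm sub}$ via the rescaled $D f(\cdot)$ from Theorem~\ref{Thm:MainSublinearMatrixSetup}, using its part $(i)$ to invoke Theorem~\ref{MainTHM:ExpDecayingSingVlaues} for the $7\sqrt{\epsilon}\,q^{1-j}$ measurement-error bound, feeding $\utilde_j = \mathbbm{e}^{\mathbbm{i}\phi'_j}M\u_j + {\bf n}_j$ into part $(ii)$, setting $\beta^{A}_{M} := 7\max_{j\in[\ell]}\beta_m({\bf n}_j)$, and bounding the runtime by an $\mathcal{O}(m^3)$ eigendecomposition plus $\mathcal{O}(m)$-time evaluations of $f$. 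Your handling of the global phase (applying the recovery bound to $\mathbbm{e}^{\mathbbm{i}\phi'_j}\u_j$ and absorbing the phase into the minimum) matches the paper's argument.
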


\begin{proof} 
Define $\mathcal{A}_{\rm sub}(\cdot) := \sqrt { K(1 + \lceil \log_2 N \rceil) } D f ( \cdot )$ as per Theorem~\ref{Thm:MainSublinearMatrixSetup} and let $\phi'_j \in [0,2\pi)$ be such that $\left\| \utilde_j - \mathbbm{e}^{\mathbbm{i} \phi'_j} M \u_j \right\|_2 = \min_{\phi \in [0,2\pi)} \| \mathbbm{e}^{\mathbbm{i} \phi} \utilde_j - M \u_j \|_2$ for all $j \in [\ell]$.  Part $(i)$ of Theorem~\ref{Thm:MainSublinearMatrixSetup} implies that $\M \in\mathbbm{R}^{\m\times\N}$ will satisfy the assumptions in Theorem~\ref{theorem:MAM} with probability at least $1-p$.  As a consequence, Theorem~\ref{MainTHM:ExpDecayingSingVlaues} tells us that 
\begin{align}\label{equ:MainTh2proof}
    \left\| \utilde_j - \mathbbm{e}^{\mathbbm{i} \phi'_j} M \u_j \right\|_2 < 7 \sqrt{\epsilon} \cdot q^{1-j} ~~\forall j \in [\ell]
\end{align}
holds with probability at least $1-p$.  In addition, whenever \eqref{equ:MainTh2proof} holds part $(ii)$ of Theorem~\ref{Thm:MainSublinearMatrixSetup} then further implies that 
\begin{align*}
&\min_{\phi \in [0,2\pi)} \left \| \mathbbm{e}^{\mathbbm{i} \phi} \u_j - \mathcal{A}_{\rm sub}(\utilde_j) \right \|_2 \leq \left\| \mathbbm{e}^{\mathbbm{i} \phi'_j} \u_j - \mathcal{A}_{\rm sub}(\utilde_j) \right\|_2\\
		&\qquad = \left \| \mathbbm{e}^{\mathbbm{i} \phi'_j} \u_j - \sqrt { K(1 + \lceil \log_2 N \rceil) } \cdot D f \left ( \mathbbm{e}^{\mathbbm{i} \phi'_j} M \u_j + \left( \utilde_j - \mathbbm{e}^{\mathbbm{i} \phi'_j} M \u_j \right) \right ) \right \|_2 \\ & \qquad  \leq \| \u_j - (\u_j)_{ 2s } \|_2 + 6(1+\sqrt{2}) \left( \frac{\| \u_j - (\u_j)_s \|_1}{\sqrt{ s }} + \beta_{m}\left(\utilde_j - \mathbbm{e}^{\mathbbm{i} \phi'_j} M \u_j \right) \left\| \utilde_j - \mathbbm{e}^{\mathbbm{i} \phi'_j} M \u_j \right\|_2 \right) \\ &
        \qquad < \| \u_j - (\u_j)_{ 2s } \|_2 + 6(1+\sqrt{2}) \left( \frac{\| \u_j - (\u_j)_{ s } \|_1}{\sqrt{ s }} + 7 \beta_{m}\left(\utilde_j - \mathbbm{e}^{\mathbbm{i} \phi'_j} M \u_j \right) \sqrt{\epsilon} \cdot q^{1-j} \right)
\end{align*}
holds for all $j \in [\ell]$.  Setting $\beta^{A}_{M} := 7 \max_{j \in [\ell]} \beta_{m}\left(\utilde_j - \mathbbm{e}^{\mathbbm{i} \phi'_j} M \u_j \right)$ where $\beta_{m}(\cdot)$ is defined as per \eqref{equ:DefofBeta_mn} now finishes the proof of the error bound.

Turning our attention to the claimed measurement and runtime bounds, we refer the reader to Remark~\ref{remark:sublinearmbound} for the quoted upper bound on $m$.  Concerning the time required to compute $\left\{ \mathcal{A}_{\rm sub}(\utilde_j) \right \}_{j \in [\ell]}$, we note that the top $\ell \leq m$ eigenvectors $\left\{ \utilde_j \right \}_{j \in [\ell]}$ of $MAM^*$ can be computed in $\mathcal{O}(m^3)$-time (see, e.g., \cite{golub1996matrix}).  Furthermore, once those eigenvectors are in hand, $\left\{ \mathcal{A}_{\rm sub}(\utilde_j) \right \}_{j \in [\ell]}$ can then be computed in $\mathcal{O}(m^2)$-time since $f: \mathbbm{C}^{m} \rightarrow \mathbbm{C}^N$, which outputs $2s$-sparse vectors, can always be evaluated in $\mathcal{O}\left(m \right)$-time (recall Theorem~\ref{Thm:FinalsublinearCSRes}).
\end{proof}

\section{Conclusion}
\label{sec:Conclusion}

Making use of compressive sensing and sketching principles, we proposed the $MAM^*$ method (Algorithm~\ref{Alg:ApproxfromMAM}) for the sparse approximation of eigenvectors of very large-scale, approximately low-rank matrices. More specifically, we studied two variants of $MAM^*$: a first one based on CoSaMP reconstruction and a second one relying on sublinear-time recovery. The CoSaMP-based variant of $MAM^*$, analyzed in Theorem~\ref{THM:MAINRESULT_LINEARcosamp}, can compute accurate $s$-sparse approximations to the first $\ell$ eigenvectors of an approximately rank-$r$, Hermitian PSD matrix $A\in \mathbbm{C}^{N \times N}$ with high probability given $MAM^*$ as input, in $\mathcal{O}((\max\{s,r\}^3 + \ell N) \cdot \text{polylog}(N,s))$-time and using $\mathcal{O}(N)$-memory. On the other hand, the sublinear-time variant studied in  Theorem~\ref{THM:MAINRESULT_SUBLINEAR} can achieve the same task in $\mathcal{O}(\max\{s, r\}^6 \cdot \text{polylog}(N))$-time (and memory). We also numerically evaluated the $MAM^*$ method by implementing and testing its, e.g., sublinear-time variant on $(2^{27}-1) \times (2^{27}-1) \approx 10^8 \times 10^8$ matrices for levels of sparsity and rank up to $s=100$ and $r=50$, respectively. 

We conclude by discussing some of the limitations of our study and pointing to potential avenues of future work. First, throughout this work we assumed the matrix $A \in \mathbbm{C}^{N \times N}$ to be  Hermitian and PSD. This assumption was made primarily to keep the technical level of the analysis and presentation relatively moderate. However, we don't consider this restriction to be necessary and we think that the non-Hermitian case could be addressed via asymmetric sketches of the form $MAN^*$. Second, other linear-time variants of $MAM^*$ could be easily constructed by considering other compressive sensing reconstruction methods beyond CoSaMP such as, e.g., Orthogonal Matching Pursuit (OMP) or Iterative Hard Thresholding (IHT). The corresponding version of Theorem~\ref{THM:MAINRESULT_LINEARcosamp} would be then obtained by combining the same proof strategy of Section~\ref{sec:PROOFOFMAINRESULT_LINEAR} with a sparse recovery guarantee for the reconstruction algorithm of choice. Third, some aspects of our theoretical analysis deserve further investigation. For example, while a numerical study of the constant $\beta_M^A$ appearing in Theorem~\ref{THM:MAINRESULT_SUBLINEAR} shows that it has moderate size (see Section~\ref{sec:BetaExperiments}), it would be useful to derive a theoretical upper bound for it under suitable assumptions (see also Remark~\ref{remark:sublinearBetaissmall}). Another theoretical issue worth investigating is the relaxation of the geometric decay assumption on the eigenvalues of $A$ in Theorems~\ref{THM:MAINRESULT_LINEARcosamp} and \ref{THM:MAINRESULT_SUBLINEAR}. Finally, to assess the real potential of $MAM^*$ further numerical experimentation is certainly required. Applications to, e.g., sparse PCA with stupendously large real-world datasets and to the numerical solution of high-dimensional PDE-based eigenvalue problems are two promising research directions currently under investigation.

\section*{Acknowledgements}

Mark Iwen and Edem Boahen were supported in part by NSF DMS 2106472. Simone Brugiapaglia was partially supported by the Natural Sciences and Engineering Research Council of
Canada through grant RGPIN-2020-06766 and the Fonds de Recherche du Qu\'ebec
Nature et Technologies through grants 313276 and 359708. Simone Brugiapaglia and Mark Iwen would like to thank Craig Gross for insightful discussions in the initial phase of this project.

\bibliography{biblioSOda1}{}
\bibliographystyle{abbrv}

\end{document}